\newcommand*\samethanks[1][\value{footnote}]{\footnotemark[#1]}
\newcommand{\email}[2]{\protect\href{mailto:#2}{#1}}
\newtheorem{thm}{Theorem}
\newtheorem*{thm*}{Theorem}
\newtheorem{lem}{Lemma}
\newtheorem{cor}{Corollary}
\newtheorem{prop}{Proposition}
\theoremstyle{definition}
\newtheorem{dfn}{Definition}
\newcommand{\reals}{{\mathbb{R}}}
\newcommand{\eps}{{\varepsilon}}
\newcommand{\E}{\mathbb{E}}
\newcommand{\gpi}{\Gamma_{\bm{\sigma}}}
\newcommand{\evirt}{E_{\text{virt}}}
\newcommand{\ephys}{E_{\text{phys}}}
\newcommand{\bigotilde}{\tilde{\mathcal{O}}}
\newcommand{\bigo}{\mathcal{O}}
\newcommand{\zmodn}{\mathbb{Z}/\left( N \right)}
  \newcommand{\pths}{\mathcal{P}}
  \newcommand{\pathcat}{\circ}
  \newcommand{\vlbleak}{{VLB}_{\text{leak}}}
  \newcommand{\unvlbleak}{{\overline{VLB}}_{\text{leak}}}
\newcommand{\Z}{\mathbb{Z}}
\newcommand{\C}{\mathbb{C}}
\newcommand{\ignore}[1]{}
\newcommand{\braces}[1]{\ensuremath{\left\{ #1 \right\}}}
\newcommand{\set}[1]{\braces{#1}}
\newcommand{\verts}[1]{\left\lvert #1 \right\rvert} 
\newcommand{\Verts}[1]{\left\lVert #1 \right\rVert} 
\newcommand{\abs}[1]{\verts{#1}}
\newcommand{\norm}[1]{\Verts{#1}}
\renewcommand{\Re}{\mathrm{Re}}
\renewcommand{\Im}{\mathrm{Im}}
\begingroup  \begin{array}{l@{\hspace{8mm}}l@{\hspace{8mm}}l}}%
\date{}
\begin{document}


\begingroup
\let\clearpage\relax
\title{Universal Connection Schedules for Reconfigurable Networking\thanks{This paper was accepted to the ACM-SIAM Symposium on Discrete Algorithms (SODA) 2026. This is a pre-print version.}}

\author{\email{Shaleen Baral}{sb969@cornell.edu}\thanks{Cornell University. $\{$sb969,rdk2,slm338,hwr34,rz234$\}$@cornell.edu} \and \email{Robert Kleinberg}{rdk@cs.cornell.edu}\samethanks \and \email{Sylvan Martin}{slm338@cornell.edu}\samethanks \and \email{Henry Rogers}{hwr34@cornell.edu}\samethanks \and \email{Tegan Wilson}{te.wilson@northeastern.edu}\thanks{Northeastern University. te.wilson@northeastern.edu} \and \email{Ruogu Zhang}{rz234@cornell.edu}\samethanks[1]
}

\begin{titlingpage}
\maketitle

\begin{abstract}

Reconfigurable networks are a novel communication paradigm in which the pattern of connectivity between hosts varies rapidly over time. 
Prior theoretical work explored the inherent tradeoffs between throughput (or, hop-count) and latency, and showed the existence of infinitely many Pareto-optimal designs as the network size tends to infinity.
Existing Pareto-optimal designs use a connection schedule which is fine-tuned to the desired hop-count $h$, permitting lower latency as $h$ increases.
However, in reality datacenter workloads contain a mix of low-latency and high-latency requests.
Using a connection schedule fine-tuned for one request type leads
to inefficiencies when serving other types.

A more flexible and efficient alternative is a {\em universal schedule}, a single connection schedule capable of attaining many Pareto-optimal tradeoff points simultaneously, merely by varying the choice of routing paths.
In this work we present the first universal schedules for oblivious routing. Our constructions yield universal schedules which are near-optimal for all possible hop-counts $h$.
The key technical idea is to specialize to a type of connection schedule based on cyclic permutations and to develop a novel Fourier-analytic method for analyzing randomized routing on these connection schedules.
We first show that a uniformly random connection schedule suffices with multiplicative error in throughput, and latency optimal up to a $\log N$ factor.
We then show that a more carefully designed random connection schedule suffices with additive error in throughput, but improved latency optimal up to only constant factors. 
This uses a tighter probabilistic bound pertaining to
martingales in Banach spaces.
Finally, we show that our first randomized construction can be made deterministic using a derandomized version of the Lovett-Meka discrepancy minimization algorithm to obtain the same result.

\end{abstract}

\end{titlingpage}

\section{Introduction} \label{sec:intro}

Reconfigurable networks are a novel communication paradigm in which the pattern of connectivity between hosts varies rapidly over time, creating a virtual topology that determines the paths through which data may be sent as links come in and out of existence. 
While practitioners are experimenting with these networks in an effort to unlock the energy savings and scalability benefits of replacing traditional packet switches with optical circuit switches~\cite{jupiter-rising-google-dc-network,sirius}, theoreticians have explored the design space of reconfigurable network architectures to understand the inherent tradeoffs between different design goals~\cite{optimal-orns,extending-optimal-orns,breaking-vlb}.
A key finding of prior theoretical work on reconfigurable networks is the existence of an unbounded number of distinct Pareto-optimal network designs, as the network size tends to infinity. 
These designs are parameterized by hop-count (i.e., average number of physical links per routing path), permitting progressively lower latency as the hop-count increases.
This is because unlike static networks, in reconfigurable networks each edge appears only at specific timesteps. 
Therefore, latency is defined as the number of timesteps one must wait for each intended physical hop to appear, rather than the number of hops taken. This is formalized in \Cref{sec:definitions}. 
As the allowable hop-count increases, this directly increases the number of unique destinations reachable within the same number of timesteps, allowing for lower-latency designs.
For workloads with completely homogeneous and predictable latency requirements, this allows network operators to choose the network design that meets their latency requirement using the lowest possible hop-count, which translates into the highest possible throughput.

In reality, datacenter workloads contain a mix of low-latency and high-latency connection requests, which creates additional challenges for the network designer. 
Existing Pareto-optimal reconfigurable network designs use connection schedules that are fine-tuned to optimize latency for a specific hop-count.
To serve heterogeneous workloads, the state of the art~\cite{shale} is to use time-sharing: connection requests are partitioned into subsets (often called \emph{traffic classes}) with homogeneous latency requirements and, correspondingly, the network's connection schedule is partitioned into sub-schedules each devoted to serving one of the traffic classes. 
Time-sharing uses the available bandwidth inefficiently, especially when the fraction of time slots devoted to each sub-schedule isn't perfectly matched to the fraction of connection requests served by that sub-schedule.

A more flexible and efficient way to do reconfigurable networking would be to use a \emph{universal connection schedule} capable of attaining many (or perhaps all) points on the Pareto frontier of throughput versus latency, merely by varying the choice of routing paths to minimize hop-count within the specified latency bound. 
The question of existence of universal connection schedules was raised as an open problem in prior theoretical work on reconfigurable networks~\cite{optimal-orns}. 
In this work, we present constructions of universal connection schedules which are near-optimal in the following respects.
\begin{description}
  \item[Multiplicative $\varepsilon$-approximation]
        First, we present a randomized construction that, for any network size $N$, achieves the following guarantee with high probability: for any feasible rate of guaranteed throughput, $r$, the connection schedule supports a routing scheme that guarantees throughput at least $(1-\varepsilon)r$, while the maximum latency of the routing paths exceeds the best possible latency by a factor of $O_{\varepsilon}(\log N)$.
  \item[Additive $\varepsilon$-approximation]
        Next, we present a different randomized construction that eliminates the $O(\log N)$ factor in the latency bound by relaxing the throughput guarantee from $(1-\varepsilon)r$ to $r-\varepsilon$. 
        This follows as a corollary of a more general result showing that for any finite set $R$ of throughput values, there is a universal connection schedule supporting routing schemes for each $r \in R$ that guarantee throughput $(1-\varepsilon)r$, while the maximum latency of the routing paths exceeds the best possible latency by a factor of $O_{\varepsilon,R}(1)$, i.e.~a constant factor that depends on both $\varepsilon$ and $R$ but not on the network size, $N$.
  \item[Deterministic construction]
        Finally, we show that the first aforementioned construction can be made deterministic and computationally efficient.
\end{description}

\subsection{Summary of results and techniques} \label{sec:techniques} 

A universal oblivious reconfigurable network (ORN) for a fixed network size $N$ consists of a single connection schedule and a family of oblivious routing protocols for each possible throughput value $r\in[0,\frac{1}{2}]$, each defined by routing paths on that connection schedule, and each of which is near-Pareto optimal.
Each of our universal ORN constructions is analyzed by making use of a novel connection between oblivious routing protocols, generating polynomials, and the norm of Fourier coefficient vectors. 

We standardize a single oblivious routing protocol that is well-defined on any connection schedule, only requiring two parameters: a hop count $h$ and a phase length $\Lambda$. 
This routing protocol uses exactly $2h$ hops per routing path, and attains maximum latency no more than $(2h+1)\Lambda$. 
However, it only guarantees throughput $\frac{1}{2h}(1-\eps)$ for some $\eps\geq 0$ which depends on properties of the connection schedule. 
To show that a single connection schedule is universal, we show that for each possible value of $h$, this oblivious routing protocol guarantees near-optimal throughput and latency for some carefully chosen phase length $\Lambda$ dependent on $h$.

Our oblivious routing protocol is motivated by the packet spraying implementation of Valiant Load Balancing (VLB), with a novel twist.
VLB is a technique which routes flow obliviously from source node $a$ to destination node $b$ by routing on shortest paths through a uniformly random intermediate node in the network.
VLB is provably optimal in some contexts, notably including reconfigurable networks~\cite{optimal-orns}.
In packet spraying, in order to reach a random intermediate node in $h$ hops, you send flow across $h$ {\em spraying hops} each chosen uniformly at random from a set of $\Lambda$ potential neighbors.
Then you route from that intermediate node to the destination node in $h$ hops, sampled using a time-reversal of the process used to select spraying hops.
In a graph with diameter $h$ and suitable expansion properties, this is approximately the same as routing on shortest paths through a uniformly random intermediate node.
In our oblivious routing protocol, we packet spray on $h$-hop paths, with exactly one hop chosen from consecutive non-overlapping phases each of length $\Lambda$.

One of our main innovations is departing from traditional VLB to allow the distribution on intermediate nodes to not be uniformly random.
We call this framework ``VLB with leakage.''
We show that if the distribution over intermediate nodes produced by packet spraying is not uniform, but instead only $\frac{\eps}{2}$-close to uniform, then packet spraying still guarantees throughput $\frac{1}{2h}(1-\eps)$.
However, the distribution over intermediate nodes depends on the underlying connection schedule that is used. 
Another main innovation in our work lies in designing universal connection schedules that enable $h$-hop packet spraying with an appropriate phase length to generate a nearly uniform distribution for many values of $h$ simultaneously.

As in prior work on reconfigurable network design~\cite{optimal-orns,breaking-vlb,shale}, we use a type of connection schedule in which the node set is identified with the elements of a finite abelian group, the time steps are identified with a sequence $\ldots,s_0, s_1, s_2, \ldots$ of group elements, and the node pairs that are connected at time $t$ are those whose difference equals $s_t$. 
We call such a structure a \emph{Cayley schedule}. 
It is analogous to a Cayley graph, except that each edge has an associated time step and paths can only be constructed from edges whose time steps are in increasing order. 
In prior work designing reconfigurable networks optimized for one specific throughput rate, the Cayley schedules were based on an $h$-dimensional vector space over a finite field, where the dimension $h$ was chosen depending on the desired rate of throughput. 
The generating sequence was then chosen by cycling through the elements of a vector space basis and their scalar multiples. 
Packet spraying using this design generates the uniform distribution simply because each element of the vector space is uniquely representable as a linear combination of basis elements. 
However, this design is clearly specialized to one value of $h$ and is not suitable as a universal connection schedule. 
Instead, in this work we focus on Cayley schedules defined over a \emph{cyclic} group. 
We refer to these as \emph{shift} connection schedules, since the connectivity pattern in any time step is represented by a cyclic shift permutation.

For shift connection schedules, we examine the generating polynomial of the packet-spraying distribution, a polynomial that is dependent on the shift connection schedule, $h$, and $\Lambda$.
  \begin{dfn} \label{def:fourier}
	Suppose $X$ is a $\zmodn$ valued random variable. Its \emph{generating polynomial} $p_X(z)$ and Fourier coefficient vector $\hat{p}_X[j]$ are given by
	\begin{equation} \label{eq:fourier} p_X(z) = \E[z^X] = \sum_{k = 0}^{N - 1} \Pr(X = k) \cdot z^k,
          \qquad \hat{p}_X[j] = p_X\left(e^{\frac{2\pi i j}{N}}\right) .
        \end{equation}
\end{dfn}

We show that if this Fourier coefficient vector of the packet-spraying distribution (modified by setting the $j=0$ coordinate to zero) has small 2-norm, this directly implies that packet spraying creates a distribution that is close to uniform.
We also show that this Fourier coefficient vector has some nice properties that make its 2-norm easier to bound.
It is equal to the element-wise product of averages of Fourier coefficients.
We explain this process in further detail in \Cref{sec:small-fourier-implies-orn}.
We exploit this connection to prove our three main results.

\begin{thm}\label{thm:main-thm}
	Let 
		\[ L^*(h,N) = hN^{1/h} . \]
	Given a constant $\eps>0$, 
	\begin{enumerate}
		\item \textbf{(Multiplicative error)} For sufficiently large $N$, there exists a randomized algorithm to compute a shift connection schedule $\bm{\sigma}$ on $N$ nodes which succeeds with probability at least $\frac{1}{2}$ and is efficiently certifiable.
		Additionally for each $h\in \{ 1,\ldots,\log_2N \}$, there also exists an oblivious routing protocol on $\bm{\sigma}$ which guarantees throughput $\frac{1}{2h}(1-\eps)$ and achieves maximum latency $\bigotilde(L^\star(h,N))$.
		\item \textbf{(Additive error)} For sufficiently large $N$, there exists a randomized algorithm to compute a shift connection schedule $\bm{\sigma}$ on $N$ nodes which succeeds with probability at least $\frac{1}{2}$ and is efficiently certifiable.
		Additionally for each $h\in \{ 1,\ldots,\log_2N \}$, there also exists an oblivious routing protocol on $\bm{\sigma}$ which guarantees throughput $\frac{1}{2h}-\eps$ and achieves maximum latency $\bigo(L^\star(h,N))$
		\item \textbf{(Deterministic, multiplicative error)} For infinitely many $N$, there exists a deterministic algorithm to compute a shift connection schedule $\bm{\sigma}$ on $N$ nodes which is efficiently computable.
			Additionally for each $h\in \{ 1,\ldots,\log_2N \}$, there also exists an oblivious routing protocol on $\bm{\sigma}$ which guarantees throughput $\frac{1}{2h}(1-\eps)$ and achieves maximum latency $\bigo(L^\star(h,N)\cdot \log(N))$.
	\end{enumerate}
\end{thm}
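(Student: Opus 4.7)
The plan is to apply the VLB-with-leakage framework from \Cref{sec:techniques} independently for each target hop-count $h$. For a shift connection schedule $\bm{\sigma}=(\sigma_0,\sigma_1,\ldots)$ on $\zmodn$, a packet-spraying path of $h$ hops with phase length $\Lambda$ picks one time step uniformly at random from each of $h$ consecutive length-$\Lambda$ windows and adds the corresponding shifts; the resulting intermediate destination $X$ has
\[
	p_X(z) \;=\; \prod_{i=1}^{h}\,\frac{1}{\Lambda}\sum_{t\in\text{phase }i} z^{\sigma_t},
	\qquad
	\hat{p}_X[j] \;=\; \prod_{i=1}^{h} A_{i,j}, \quad A_{i,j}\;:=\;\frac{1}{\Lambda}\sum_{t\in\text{phase }i}\omega^{j\sigma_t},
\]
where $\omega = e^{2\pi i/N}$. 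All three parts reduce, via this factorization and the $\ell_2$ Fourier-to-uniformity lemma promised in \Cref{sec:small-fourier-implies-orn}, to producing $\bm{\sigma}$ and phase lengths $\Lambda_h$ so that $(\hat{p}_X[j])_{j\ne 0}$ has small $\ell_2$ norm simultaneously for every $h\in\{1,\ldots,\log_2 N\}$.

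For Part 1, I would set $\Lambda_h = C_\eps N^{1/h}\log N$ and sample each $\sigma_t$ independently and uniformly from $\zmodn$. For any $j\ne 0$, $A_{i,j}$ is an average of $\Lambda_h$ independent unit-modulus complex numbers with mean zero (since $\E[\omega^{j\sigma_t}]=0$), so a Hoeffding-type bound for complex sums gives $|A_{i,j}|\le \bigo(\sqrt{\log N/\Lambda_h})$ with probability $1-1/\mathrm{poly}(N)$. Multiplying across the $h$ phases yields $|\hat{p}_X[j]|^2 \le (C\log N/\Lambda_h)^h$, so that $\sum_{j\ne 0}|\hat{p}_X[j]|^2 \le N\cdot (C\log N/\Lambda_h)^h \le (\eps/2)^2$ by the choice of $\Lambda_h$. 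A union bound over $j\in\{1,\ldots,N-1\}$ and $h\in\{1,\ldots,\log_2 N\}$ yields a good event of probability at least $\tfrac12$; certifiability is immediate since the Fourier bounds are explicitly checkable in polynomial time, and the latency $(2h+1)\Lambda_h = \bigotilde(L^\star(h,N))$ follows.

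For Part 2, the extra $\log N$ factor in Part 1 comes entirely from per-coordinate Hoeffding union-bounding over $j$. Note that since $\E[|A_{i,j}|^2]=1/\Lambda_h$ when $j\ne 0$ and the phases are independent, $\E\bigl[\sum_{j\ne 0}|\hat{p}_X[j]|^2\bigr] = (N-1)/\Lambda_h^h$, so on average $\Lambda_h = C_\eps N^{1/h}$ already suffices. The plan is to upgrade this first-moment computation to high probability by bounding the random vector $v := (\hat{p}_X[j])_{j\ne 0}\in\C^{N-1}$ in $\ell_2$ via a Banach-space-valued martingale inequality (as flagged in the abstract), applied to a partial-product process across phases with an appropriate filtration so that the increments have the required mean-zero and norm-control properties. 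Relaxing the throughput guarantee from $(1-\eps)/(2h)$ to $1/(2h)-\eps$ enlarges the admissible $\ell_2$ budget by a factor of $h$, which slackens the concentration requirement and absorbs the small loss from a finite outer union bound over $h$ into the constants; in the finite-set-$R$ formulation stated in \Cref{sec:intro} this outer union bound disappears entirely. The resulting latency is $\bigo(L^\star(h,N))$.

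For Part 3, the goal is to deterministically find a shift schedule satisfying the Part 1 Fourier conditions. I would phrase the constraints $|\hat{p}_X[j]|\le\eps_h$ as a system of polynomially many quadratic inequalities in the characters of $\bm{\sigma}$ and invoke the derandomized Lovett--Meka discrepancy-minimization algorithm to compute, in deterministic polynomial time, a vector of shifts satisfying all of them; the standard derandomization introduces a $\log N$ overhead in the achievable discrepancy that propagates into the extra $\log N$ factor on $L^\star(h,N)$ in the latency. The main technical obstacle, in my view, is Part 2: one must set up the filtration so that partial products across phases produce conditionally controlled, mean-zero vector-valued increments, and then extract a clean scalar $\ell_2$ tail bound from the Banach-space inequality with constants that do not depend on the ambient dimension $N-1$. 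Parts 1 and 3 then follow from standard Hoeffding-plus-union-bound and discrepancy-theoretic machinery, respectively, once the Fourier factorization from the first paragraph is in place.
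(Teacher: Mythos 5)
Your Part 1 sketch matches the paper's argument (uniform random shifts, per-coefficient Hoeffding, union bound); the only detail you skip is that the union bound must also range over the $O(N)$ possible starting timesteps $t$ and the $h$ phase indices, not just over the Fourier indices $j$ and the hop count $h$, but this is a routine addition and does not affect the method.

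Parts 2 and 3 as sketched have genuine gaps. For Part 2, you propose applying the Banach-space martingale inequality to a ``partial-product process across phases,'' but the $h$-fold coordinatewise product $\bigodot_i (A_{i,j})_j$ is not a martingale in any useful sense: since $\E[A_{i,j}]=0$ for $j\ne 0$, the partial products satisfy $\E[v_i\,|\,v_{i-1}]=0$, not $v_{i-1}$, so the increments are not conditionally mean-zero in the additive sense the inequality needs. More importantly, even a correct concentration bound for one random schedule would still require a union bound over starting timesteps $t$ and over $h$, which reintroduces a $\log N$. The paper avoids both problems by a structural move you omit: it builds the schedule as an $H$-fold \emph{convolution product} of a single short random \emph{base schedule}. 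The martingale inequality (Pisier, Prop.~4.35) is applied only once, to the single-phase sum of i.i.d.\ Fourier vectors, to bound the $2H$-norm of that base Fourier vector; then, because the convolution structure makes $\hat{p}_t^\star$ factor (for every admissible $t$ and every $h\mid H$) into a coordinatewise product of $gh=H$ copies of the base Fourier vector (times harmless unit-modulus monomial factors), H\"older's inequality with exponents $(2H,\ldots,2H)$ turns the single $2H$-norm bound into a $2$-norm bound simultaneously for all $t$ and all $h\mid H$, with no union bound at all. For Part 3, your plan to ``phrase the constraints $|\hat{p}_X[j]|\le\eps_h$ as a system of quadratic inequalities'' and feed that to Lovett--Meka does not work directly, because Lovett--Meka controls \emph{linear} discrepancies $\langle v_i,x\rangle$, whereas $|\hat{p}_X[j]| = \prod_i |A_{i,j}|$ is degree-$h$ in the phase assignment. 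The paper's actual reduction is linear: it chooses the shifts to be a permutation of $\zmodn$ built by a recursive balanced partition of $\{0,\ldots,N-1\}$, where each split applies the derandomized Lovett--Meka algorithm to the relevant submatrix of the Fourier matrix (so $A_{i,j}$ is $\frac1\Lambda\langle \mathcal{W}^\star_j, \mathbf{1}_A\rangle$, which \emph{is} linear in the $\{\pm 1\}$ split vector). Each individual phase's Fourier vector is then bounded in $\infty$-norm, hence $2h$-norm, and the product $\hat{p}_t^\star$ is again controlled via H\"older. Without either the recursive-partition linearization (Part 3) or the convolution product (Part 2), your sketch does not complete to a proof.
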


Before we summarize the main technical ideas we use to prove each part of \Cref{thm:main-thm}, two observations are in order.
  First, we restrict to hop-counts $h\in\{1,\ldots,\log_2 N\}$.
This is because for any fixed network size $N$, one can show that maximum latency cannot be improved beyond $\log_2 N$ using a simple connectivity argument; 
it is impossible to design a connection schedule that allows a node $a$ to reach every other node in less than $\log_2 N$ maximum latency.
There is an oblivious routing protocol that achieves this maximum latency by sending flow on routing paths that use up to $\log_2 N$ hops.
Therefore, increasing hop count beyond $\log_2 N$ can provide no further improvement in maximum latency.
Second, there is a lower bound~\cite{optimal-orns} showing $\Theta(L^*(h,N))$ is the optimal latency function for any guaranteed throughput value $\frac{1}{2h}(1-\eps)$ as long as $\eps\leq 1 - \frac{h}{h+1-(1/C)^h}$ for any constant, $C$.
$\Theta(L^*(h,N))$ is also the optimal latency function for any guaranteed throughput value $\frac{1}{2h}-\eps$ as long as $\eps \leq \frac{1}{2h}\left(1- \frac{h}{h+1-(1/C)^h}\right)$ for any constant, $C$.

To prove \Cref{thm:main-thm}.1, we consider a shift connection schedule where every shift is chosen independently and uniformly at random.
In this case, each individual Fourier coefficient of our generating polynomial can be decomposed into a sum of independently random Fourier coefficients.
We apply Hoeffding's inequality to bound each of these sumswith high probability, and then we apply a union bound over Fourier coefficients to derive a bound on the infinity-norm of the Fourier vector, from which a bound on its 2-norm follows. 
An additional union bound over all possible hop counts $h$ and starting timesteps $t$ in the connection schedule leads to the conclusion that the random shift construction produces a universal schedule with high probability.
We find that this analysis requires a phase length $\Lambda_h$ that is optimal up to $\bigo_\eps(\log N)$ factors, resulting in an overall latency bound of $\bigotilde_\eps(L^\star(h,N))$.
This analysis is further detailed in \Cref{sec:rand-shift-sched}.

For our second randomized result, \Cref{thm:main-thm}.2,
the main challenge is to remove the $\log(N)$ factor stemming from repeated application of the union bound. 
Two new ideas are required to accomplish this. 
First, rather than bounding the infinity-norm of the vector of Fourier coeffients, we bound its $q$-norm for a value of $q$ that is large but independent of $N$, using a dimension-independent martingale inequality for the Banach space $\ell^q$~\cite{pisier_2016}. 
This eliminates the union bound over Fourier coefficients. 
To eliminate the union bound over starting time steps and hop counts, we replace the purely random shift schedule used in \Cref{thm:main-thm}.1 with a more complicated construction. We introduce a \emph{convolution product} operation that combines two shift schedules into a longer schedule whose period length is the two product of their two periods. 
By starting with a purely random base schedule and forming higher and higher powers using the convolution product, we show using H\"{o}lder's Inequality that bounds on the $q$-norm of the base schedule's Fourier coefficient vector (for a suitable choice of $q$) imply bounds on the 2-norm of the Fourier coefficients for the resulting convolved schedule. 
Bounds on the Fourier 2-norm for every possible starting timestep $t$, and for any finite number of different hop-counts $h$, can thereby be derived from a single Fourier $q$-norm bound for the base schedule, avoiding the use of the union bound.
As a corollary, we show this implies additive $\eps$-approximation of throughput for all possible hop counts $h$, with the correct setting of the base schedule's period length. 
This analysis is further detailed in \Cref{sec:convolution-sched}.

Finally to prove \Cref{thm:main-thm}.3, we show a connection between designing a universal shift connection schedule and the high-dimensional vector discrepancy problem.
We define a deterministic shift connection schedule by recursively splitting the set of all Fourier coefficient vectors into equal-size sets, maintaining that after each split, the sum of vectors in a set has $\infty$-norm only slightly larger than the $\infty$-norm of the sum of vectors before the split.
This turns out to be possible by relying on a deterministic version of the Lovett-Meka discrepancy minimization algorithm \cite{deterministic-discrepancy}, and directly implies a bound on the $\infty$-norm of the relevant Fourier coefficient vector for $h$-hop packet spraying.
Then for each value of $h$, a careful setting of $\Lambda=\bigo_\eps(N^{1/h}\cdot\log N)$ ensures the resulting distribution is $\frac{\eps}{2}$-close to uniform.
This analysis is further detailed in \Cref{sec:derandomized}.

\subsection{Related work} \label{sec:related-works}

\textbf{Latency-Throughput Tradeoffs in Reconfigurable Networks.} 
The most important related works \cite{optimal-orns,extending-optimal-orns,breaking-vlb} are mentioned above in \Cref{sec:intro}.
Recent work formally defined oblivious reconfigurable networks (ORNs), and showed the existence of designs that achieve Pareto-optimal tradeoffs between throughput and latency.
That is, for any fixed throughput rate $r$, they constructed a family of infinitely many ORN designs which guaranteed throughput $r$ and achieved optimal maximum latency, up to a constant factor~\cite{optimal-orns}.
These results were later extended to any sufficiently large network size~\cite{extending-optimal-orns}.
Additionally the Pareto-frontier was significantly improved upon with the introduction of random ORN designs whose guarantees are relaxed to only hold with high probability~\cite{breaking-vlb}.
Each of these designs uses a connection schedule that is tailored to the tradeoff point they wish to guarantee.
Our work aims to construct schedules that capture not just one tradeoff point, but multiple, at the expense of an approximate guarantee.

\textbf{Valiant Load Balancing (VLB).} 
Valiant and Brebner first introduced oblivious routing in their seminal work~\cite{vlb-valiant81,valiant82}, demonstrating that 
their strategy, which became known as Valiant Load Balancing (VLB),
achieves optimal latency guarantees over the direct-connect hypercube topology. 
VLB is the optimal oblivious routing scheme
in at least two other contexts: when building a network of fixed-capacity links~\cite{keslassy2005optimal,zhang2005designing,babaioff2007optimality}, and in ORNs~\cite{optimal-orns}, in both cases when aiming to fulfill any demand pattern with bounded ingress and egress rates per node.
Our work builds upon these results, formalizing
a relaxation of VLB that enables guaranteeing approximately optimal  
throughput. 

\textbf{Practical ORN Proposals.} 
Many proposed data center architectures~\cite{rotornet,sirius,shoal,mars,shale} fit the ORN paradigm, some even before it was formally introduced~\cite{optimal-orns}. 
Rotornet~\cite{rotornet} and Sirius~\cite{sirius} propose using optical circuit switches to implement the reconfigurable network fabric, while Shoal~\cite{shoal} proposes using electrical circuit switches.
These three designs all attain the same single Pareto-optimal tradeoff point, prioritizing high throughput at the expense of poor latency.
Shale~\cite{shale} uses connection schedules from theoretical work~\cite{optimal-orns}, allowing for a tunable tradeoff between throughput and latency, that notably also improved upon resource scaling from previous designs.
Mars~\cite{mars} takes an entirely different approach, motivated by the setting of an ORN with limited buffer memory at each node.
Given an arbitrary amount of buffer space, they devise a formula for the ideal emulated degree (or, period) of the schedule, and emulate a de Bruijn graph with this ideal degree.
Additionally, more recent work~\cite{semi-obliv-hotnets} proposes a semi-adaptive reconfigurable connection schedule, relying on the predictability of large-scale structural patterns in datacenter traffic.

\textbf{Oblivious Routing in General Networks.} 
There is a line of work in oblivious routing primarily focused on minimizing congestion in general networks, compared to an optimal offline routing algorithm. 
The existence of an oblivious online routing algorithms that obtain a polylogarithmic competitive ratio was initially proven by R{\"a}cke \cite{raecke02}. 
Later work improved upon this competitive ratio~\cite{harrel03},
and described polynomial-time constructions that obtain these bounds\cite{azar03,bienk03,harrel03}.
Using multiplicative weights and tree metrics techniques introduced by \cite{frt04}, R\"acke further improved the competitive ratio to $O(\log n)$~\cite{raecke08}, which was closer to the previously known lower bound of $\Omega(\frac{\log n}{\log \log n})$~\cite{hajia06}. 
This oblivious routing algorithm was later combined with centralized traffic engineering solutions to produce a robust and near-optimal system~\cite{smore18}.
Although congestion, the focus of much of this line of work, and throughput can be direct inverses of each other, their definition of traffic demands (and therefore throughput) is different from ours, which we take from previous work in reconfigurable network theory~\cite{optimal-orns,extending-optimal-orns,breaking-vlb}.
Additionally, we aim to create an {\em optimized topology} to do oblivious routing on, rather than designing algorithms that perform well on an arbitrary topology.

\section{Definitions} \label{sec:definitions}

\begin{dfn} \label{def:connection-schedule}
	A {\em connection schedule} of $N$ nodes and period length $T$ is a sequence of permutations $\bm{\sigma}=\sigma_0,\sigma_1,\ldots,\sigma_{T-1}$, each mapping $[N]$ to $[N]$. 
	$\sigma_k(i) = j$ means that node $i$ is allowed to send one unit of flow to node $j$ during any timestep $t$ such that $t \equiv k \pmod{T}$.

	The {\em virtual topology} of the connection schedule $\bm{\sigma}$ is a directed graph $\gpi$ with vertex set $[N] \times \mathbb{Z}$.
	The edge set of $\gpi$ is the union of two sets of edges, $\evirt$ and $\ephys$.
	$\evirt$ is the set of {\em virtual edges}, which are of the form $(i,t)\to(i,t+1)$ and represent flow waiting at node $i$ during the timestep $t$.
	$\ephys$ is the set of {\em physical edges}, which are of the form $(i,t)\to(\sigma_t(i),t+1)$, and represent flow being transmitted from $i$ to $\sigma_t(i)$ during timestep $t$. 
	A useful automorphism of $\gpi$ is \emph{time-translation}, the function defined on vertices of $\gpi$ by $\tau(a,t) = \tau(a,t+T)$. 
	This function maps edges to edges because the connection schedule is $T$-periodic.
\end{dfn}

\begin{dfn} \label{def:shift-connection-schedule}
	Interpret the set of $N$ nodes to be the commutative group $G = \mathbb{Z} / (N)$ under addition. 
	A \emph{shift connection schedule} on $G$ is a connection schedule where the permutations $\bm{\sigma}=\sigma_0,\sigma_1,\ldots,\sigma_{T-1}$ are all defined by shifts $s_0, \ldots, s_{T-1} \in G$, where $\sigma_k(j) = j + s_k$ for all $k \in \{0, \ldots, T-1\}$.
\end{dfn}

We interpret a path in $\gpi$ from $(a,t)$ to $(b,t^\prime)$ as a potential way to transmit one unit of flow from node $a$ to node $b$, beginning at timestep $t$ and ending at some timestep $t^\prime>t$.
Let $\pths$ denote the set of all paths in $\gpi$. Some useful subsets of $\pths$ are the following.
	\begin{align*}
		\pths_{a,b,t,t'}                                                           & = \{\mbox{paths starting at $(a,t)$ and ending at $(b,t')$}\} \\
		\pths_{a,b,t}                                                              & = \bigcup_{t' > t} \pths_{a,b,t,t'}                           \\
		\pths_{(a,t) \rightarrow}  = \bigcup_{b \in [N], t > t'} \pths_{a,b,t,t'}, & \qquad
		\pths_{\rightarrow (b,t')} = \bigcup_{a \in [N], t < t'} \pths_{a,t,b,t'}                                                                  \\
		\pths^e                                                                    & = \{\mbox{paths containing edge } e\} .
	\end{align*}
	The time-translation automorphism $\tau : \gpi \to \gpi$ induces an
	automorphism of the path set, which we will also denote by $\tau$.

\begin{dfn} \label{def:flow}
	A {\em flow} is a measure on the measurable space $(\pths,2^\pths)$.
		A flow $f$ is called $T$-periodic if it is invariant under time-translation, i.e.~$f(\tau(P)) = f(P)$ for all $P$.
		For a flow $f$ and edge $e$, the amount of flow traversing $e$ is defined to be $f(\pths^e)$. We say that $f$ is {\em feasible} if the amount of flow tranversing every \emph{physical} edge is at most 1. Note that in our definition of feasible, we allow virtual edges to have unlimited capacity.
\end{dfn}

\begin{dfn}\label{def:routing-scheme}
	An {\em oblivious routing protocol} $R$ is a $T$-periodic flow that defines, for each triple $(a,b,t) \in [N] \times [N] \times \mathbb{Z}$, a unit flow from the starting node $(a,t)$ to the node set $\{(b,t') \mid t' > t \}$. 
	In other words, $R$ satisfies $R(\pths_{a,b,t})=1$ for all $(a,b,t) \in [N] \times [N] \times \mathbb{Z}$.
\end{dfn}

\begin{dfn}
		A {\em forward routing distribution} $\mathcal{D}$ is a $T$-periodic flow that defines, for every starting node and timestep $(a,t)\in[N]\times\Z$, a unit of flow supported on $\pths_{(a,t) \rightarrow}$.
		In other words, $\mathcal{D}(\pths_{(a,t) \rightarrow}) = 1$ for all $(a,t)\in[N]\times\Z$.
		Then $\mathcal{D}_{a,t}(b) = \sum_{P\in \pths_{a,b,t}} \mathcal{D}(P)$ denotes the amount of flow assigned to paths which start at $(a,t)$ and end at node $b$ during any ending timestep.
		Note that for any fixed $(a,t)\in[N]\times\Z$, $\mathcal{D}_{a,t}(b)$ forms a probability distribution over destination nodes $b$.

		A {\em backwards routing distribution} $\mathcal{E}$ is a $T$-periodic flow that defines, for every destination node and ending timestep $(b,t')\in[N]\times\Z$, a unit of flow supported on $\pths_{\rightarrow (b, t')}$.
		In other words, $\mathcal{E}(\pths_{\rightarrow (b, t')}) = 1$ for all $(b,t')\in[N]\times\Z$.
		Then $\mathcal{E}_{b,t'}(a) = \sum_{t}\sum_{P\in \pths_{a,b,t,t'}} \mathcal{E}(P)$ denotes the amount of flow assigned to paths which start at node $a$ during any timestep, and end at node $b$ and ending timestep $t'$.
		Note that for any fixed $(b,t')\in[n]\times\Z$, $\mathcal{E}_{b,t'}(a)$ forms a probability distribution over starting nodes $a$. 
\end{dfn}

\begin{dfn} \label{def:latencies}
	The {\em latency} $L(P)$ of a path $P$ in $\gpi$ is equal to the number of edges it contains (both virtual and physical).
	Traversing any edge in the virtual topology (either virtual or physical) is equivalent to advancing in time by one timestep, so the number of edges in a path equals the elapsed time.
	For a flow $f$ the {\em maximum latency} is the maximum over all paths $P$ which may route flow:
	\begin{align*}
		L_{max}(f) & = \max \{ L(P) \, \mid \, f(P) > 0 \} .
	\end{align*}
\end{dfn}

\begin{dfn} \label{def:demand-functions}
	A {\em demand function} $D$ is a function that associates to every source-destination pair $(a,b) \in [N] \times [N]$ and time step $t \in \mathbb{Z}$ a quantity $D(a,b,t) \geq 0$ representing the amount of flow that $a$ aims to send to $b$ starting at time $t$.

	If $f$ is a flow and $D$ is a demand function, the {\em induced flow} $f^D$ is defined by rescaling $f$ using $D(a,b,t)$ as the scale factor for paths in $\pths_{a,b,t}$:
		\[
			f^D(P) = D(a,b,t) \cdot f(P) \quad \forall P \in \pths_{a,b,t} .
		\]
		We will typically be applying this rescaling operation when $f$
		is a routing protocol or a (forward or backward) routing distribution.
\end{dfn}

\begin{dfn} \label{def:guaranteed-thr}
	A routing protocol $R$ {\em guarantees throughput $r$} if the induced flow $R^D$ is feasible whenever $D$ is a demand function satisfying
		\[ \forall a \in [N], \, t \in \mathbb{Z} \quad
			\sum_{b \in [N]} D(a,b,t) \leq r \quad \mbox{and} \quad
			\sum_{b \in [N]} D(b,a,t) \leq r .
		\] 
	An equivalent definition~\cite{optimal-orns} requires the induced flow $R^D$ to be feasible for any $D$ that is a set of permutation demands scaled by $r$. That is, any $D$ for which, for all $t\in\Z$, there exists a permutation $\tau_t:[N]\rightarrow[N]$ such that 
	\begin{align*}
		& \forall a \in [N], \, t \in \mathbb{Z}, D(a,\tau_t(a),t) = r \\
		& \forall a \in [N], \, t \in \mathbb{Z}, b\neq\tau_t(a), D(a,b,t) = 0 .
	\end{align*}	
\end{dfn}

\section{Reducing Connection Schedule Design to Fourier Analysis}\label{sec:small-fourier-implies-orn}

This section is devoted to proving the following theorem, which relates universal connection schedule design to Fourier analysis.

\begin{thm}\label{thm:small-fourier-implies-orn}
	Given a shift schedule $\bm{\sigma} = s_0,s_1,\ldots,s_{T-1}$ on $G = \zmodn$, and given parameters $\Lambda$ and $h$ for which $T\geq \Lambda h$.

	If $\Lambda h$ evenly divides $T$, then let $\mathcal{A} = \{\ell \Lambda h : \ell \in \mathbb{Z}_{\geq 0}\}$. Otherwise, let $\mathcal{A} = \{0,\ldots,T-1\}$.
	Then for each timestep $t \in \mathcal{A},$
	let $p_t$ and $q_t$ be the polynomials
	\[ p_t(z) = \prod_{j=1}^{h} p_{t,j}(z) , \hspace{5mm} q_t(z) = \prod_{j=h+1}^{2h} q_{t,j}(z) \]
	where
	\[ p_{t,j}(z) = \frac{1}{\Lambda} \left( \sum_{k= t + (j-1)\Lambda}^{t + j\Lambda-1} z^{s_{k}} \right) , \hspace{5mm} q_{t,j}(z) = \frac{1}{\Lambda} \left( \sum_{k=t+(j-1)\Lambda}^{t+j\Lambda-1} z^{N-s_{k}} \right) \]
	Define $\hat{p_t}^\star$ as the Fourier transform of $p$ with first coordinate set to 0, and $\hat{q_t}^\star$ similarly for $q$.

	Now let $\eps>0$ be a constant.
	If $\norm{\hat{p_t}^\star}_2 \leq \frac{\eps}{2}$ and $\norm{\hat{q_t}^\star}_2 \leq \frac{\eps}{2}$ for all $t\in\mathcal{A}$, then there exists an oblivious routing protocol over $\bm{\sigma}$ which guarantees throughput $\frac{1}{2h}(1 - \eps)$ and maximum latency no more than $2(h+1)\Lambda$.
\end{thm}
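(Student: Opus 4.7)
My plan is to instantiate the ``VLB with leakage'' framework outlined in \Cref{sec:techniques}. For each demand triple $(a, b, t)$, the routing protocol will wait at $a$ until the earliest starting time $t^\star \in \mathcal{A}$ satisfying $t^\star \geq t$ (which is simply $t$ when $\mathcal{A} = \{0,\ldots,T-1\}$), and then realize one unit of flow as the packet-spraying implementation of VLB over phases $1,\ldots,2h$: an $h$-hop \emph{forward spray} from $(a, t^\star)$ that picks one physical edge uniformly within each of the $h$ consecutive length-$\Lambda$ phases, followed by a time-reversed $h$-hop \emph{backward spray} into $(b, t^\star + 2h\Lambda)$ over phases $h+1,\ldots,2h$. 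By the VLB-with-leakage analysis, throughput $\frac{1}{2h}(1 - \eps)$ is guaranteed whenever both the forward intermediate distribution (of $v - a$) and the backward pre-image distribution (of $v - b$) are each $\eps/2$-close to uniform on $\zmodn$ in $\ell^1$-norm.

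\textbf{Step 1: identify the generating polynomials.} The spraying hop in the $j$-th forward phase selects a shift uniformly from $\{s_{t^\star + (j-1)\Lambda}, \ldots, s_{t^\star + j\Lambda - 1}\}$, whose generating polynomial on $\zmodn$ is precisely $p_{t^\star, j}(z)$. Independence across phases together with additivity in $\zmodn$ imply that $v - a$ has generating polynomial $\prod_{j=1}^{h} p_{t^\star,j}(z) = p_{t^\star}(z)$. The analogous argument for the backward half, in which the shift contributed in phase $j$ must be \emph{subtracted} from $b$ to land at $v$ (so each $z^{s_k}$ is replaced by $z^{-s_k} = z^{N-s_k}$), yields that $v - b$ has generating polynomial $q_{t^\star}(z)$.

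\textbf{Step 2: Fourier proximity implies $\ell^1$ proximity to uniform.} Writing $U$ for the uniform distribution on $\zmodn$, the identity $\hat{U}[j] = \mathbf{1}_{j=0}$ shows that $\hat{p}_{t^\star}^\star$ is exactly the Fourier transform of the signed measure $p_{t^\star} - U$. Plancherel's identity then gives
\[ \sum_{k=0}^{N-1} \bigl(\Pr(v - a = k) - \tfrac{1}{N}\bigr)^2 \;=\; \tfrac{1}{N}\,\|\hat{p}_{t^\star}^\star\|_2^2 \;\leq\; \tfrac{\eps^2}{4N}, \]
after which Cauchy--Schwarz yields $\|p_{t^\star} - U\|_1 \leq \sqrt{N}\cdot\sqrt{\eps^2/(4N)} = \eps/2$. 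The identical chain of inequalities applied to $\hat{q}_{t^\star}^\star$ controls the $\ell^1$-deviation of the backward distribution. Quantifying over $t^\star \in \mathcal{A}$ ensures both near-uniformity guarantees hold regardless of which starting time the protocol invokes for a given demand.

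\textbf{Step 3: latency bound and main obstacle.} The path produced by the protocol contains at most $2h\Lambda$ spraying edges plus a bounded number of virtual edges used to align $t$ with $t^\star$ and to buffer between the forward and backward halves; a careful tally in each case of the dichotomy on $\mathcal{A}$ delivers the claimed maximum latency of $2(h+1)\Lambda$. The step I expect to be the main obstacle is the VLB-with-leakage guarantee itself: translating the two per-endpoint $\ell^1$ bounds into a per-physical-edge load bound of $\frac{1}{2h}(1-\eps)$ that holds uniformly over \emph{every} permutation-scaled demand (via the equivalent characterization of guaranteed throughput in \Cref{def:guaranteed-thr}) requires a careful accounting of how the forward and backward deviations from uniform combine on any single edge, ensuring they do not compound destructively into a load exceeding $1$.
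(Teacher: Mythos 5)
Your proposal follows essentially the same route as the paper's proof: identify $p_t$ (resp.~$q_t$) as the generating polynomial of the forward (resp.~backward) $h$-hop spray distribution from a fixed endpoint, show the element-wise product structure of the Fourier transform, apply Parseval plus Cauchy--Schwarz to convert the Fourier $\ell^2$-bound into $\ell^1$-closeness to uniform, and invoke the VLB-with-Leakage guarantee. The paper does exactly this, citing Lemma~\ref{lem:vlb-w-leakage} (proved earlier in the section) for the final throughput claim.

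One correction to your anticipated ``main obstacle'': the difficulty in Lemma~\ref{lem:vlb-w-leakage} is not that the two per-endpoint deviations might ``compound destructively'' into a per-edge load exceeding~$1$. Edge feasibility at rate $\frac{1}{2h}$ is essentially automatic: since $\mathcal{D}$ and $\mathcal{E}$ each spread flow evenly over the physical edges of their $h$ spraying phases, the induced flows $\mathcal{D}^{\frac{1}{2}D}$ and $\mathcal{E}^{\frac{1}{2}E}$ are feasible (each using $\frac{1}{2}$ of each edge), and the unnormalized concatenated flow on any permutation demand at rate $\frac{1}{2h}$ is dominated by their sum. The genuine content is lower-bounding the normalization factor $\eta = \min_{a,b,t} \sum_c \min\{\mathcal{D}_{a,t}(c),\mathcal{E}_{b,t'}(c)\}$: by the identity $\sum_c \min = 1 - \frac{1}{2}\norm{\mathcal{D}_{a,t}-\mathcal{E}_{b,t'}}_1$ and the triangle inequality $\norm{\mathcal{D}_{a,t}-\mathcal{E}_{b,t'}}_1 \leq \norm{\mathcal{D}_{a,t}-\bm{u}}_1 + \norm{\bm{u}-\mathcal{E}_{b,t'}}_1 \leq \eps$, one gets $\eta \geq 1-\eps$, which is then what scales the guaranteed throughput from $\frac{1}{2h}$ down to $\frac{\eta}{2h}$. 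Your Step~2 supplies exactly the two $\ell^1$-bounds this argument consumes, so once this accounting is filled in your outline is complete.
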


This section describes our $h$-spraying-hop-inspired oblivious routing protocol, which we call {\em Valiant Load Balancing with Leakage}.
We first show that this oblivious routing protocol guarantees throughput $\frac{1}{2h}(1-\eps)$ when the forward and backwards spraying hop distributions used to define it are $\frac{\eps}{2}$-close to uniform.
We then devote some time to introducing the Fourier analysis tools used in this paper.
Finally, we connect showing that our spraying hop distributions are $\frac{\eps}{2}$-close to uniform to showing that $\norm{\hat{p_t}^\star}_2 \leq \frac{\eps}{2}$.

\subsection{Valiant Load Balancing with Leakage}\label{sec:vlb-w-leakage}

Let $\mathcal{D}$ be a forward routing distribution.
If we could show that $\mathcal{D}_{a,t}$ forms the uniform distribution over destination nodes for all $a,t$, then normalizing $\mathcal{D}(P)$ by a factor of $N$ would give an oblivious routing protocol on $\bm{\sigma}$, though not necessarily one with good throughput guarantees.
However, if $\mathcal{D}$ was good at routing uniform traffic demands, then we could use the following construction to define an oblivious routing protocol $R$ with certain throughput and latency guarantees.

\begin{dfn}[\textbf{Valiant Load Balancing \cite{vlb-valiant81}}]\label{def:vlb}
	Given an oblivious routing protocol $\mathcal{U}$ with maximum latency $L$ on a $N$-node connection schedule, we can define a second oblivious routing protocol, $VLB(\mathcal{U})$, motivated by the following informal specification. To route between source-destination pair $(a,b)$ starting at timestep $t$,
	\begin{enumerate}
		\item Traffic is routed using $\mathcal{U}$ from $(a,t)$ to a uniformly random intermediate node $c$ in the network.
		\item Traffic is routed from the intermediate node $(c, t + L)$ to its final destination $b$.
	\end{enumerate}
	We call these two pieces of the routing {\em semi-paths} of the oblivious routing protocol.

	While the informal specification of the routing protocol refers to choosing exactly one intermediate node at random, within the formalism of \Cref{def:flow} this would be encoded using a flow which averages over the random choice of intermediate node $c$. Thus, if $P_0 \in \pths_{a,c,t,t+L}$ and $P_1 \in \pths_{c,b,t+L,t+2L}$ are two paths and $P = P_0 \pathcat P_1$ denotes their concatenation, then $$VLB(\mathcal{U})(P) = \frac1N \mathcal{U}(P_0) \mathcal{U}(P_1).$$
\end{dfn}

\begin{thm}[\textbf{Valiant Load Balancing \cite{vlb-valiant81}}]\label{thm:vlb}
	Let $D_{unif}$ be the uniform traffic demand function with all demands equal to $\frac{1}{N}$.
	Let $\mathcal{U}$ be an oblivious routing protocol on an $N$-node connection schedule with maximum latency $L$ such that the induced flow $f^{\frac{1}{h}D_{unif}}$ is a feasible flow.
	Then $VLB(\mathcal{U})$ is an oblivious routing protocol that achieves maximum latency $2L$ and guarantees throughput $\frac{1}{2h}$.
\end{thm}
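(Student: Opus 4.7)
My plan is to verify separately the three properties promised by the theorem: that $VLB(\mathcal{U})$ is a valid oblivious routing protocol, that its maximum latency is at most $2L$, and that it guarantees throughput $\frac{1}{2h}$. The first two are essentially bookkeeping once we agree to pad every $\mathcal{U}$-path shorter than $L$ hops with trailing virtual edges so that all paths have length exactly $L$. With this convention $\mathcal{U}(\pths_{a,c,t,t+L}) = 1$ for every $a,c,t$, and summing $VLB(\mathcal{U})(P) = \frac{1}{N}\mathcal{U}(P_0)\mathcal{U}(P_1)$ over $P \in \pths_{a,b,t}$ factors into $\frac{1}{N}\sum_c 1\cdot 1 = 1$, confirming that $VLB(\mathcal{U})$ is an oblivious routing protocol. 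Each concatenation $P_0 \pathcat P_1$ used by $VLB(\mathcal{U})$ then has length exactly $2L$.

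The main content is the throughput bound. Using the equivalent formulation in \Cref{def:guaranteed-thr}, it suffices to show that for every permutation demand $D$ scaled by $r = \frac{1}{2h}$, the induced flow $VLB(\mathcal{U})^D$ places at most one unit on each physical edge $e$. I would split the flow through $e$ into a \emph{first-half contribution} (from paths in which $e$ appears in the source-to-intermediate segment $P_0$) and a \emph{second-half contribution} (where $e$ appears in $P_1$), and bound each by $\frac{1}{2}$.

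For the first half, after substituting the definition of $VLB(\mathcal{U})$ and using $\mathcal{U}(\pths_{c,b,t+L,t+2L}) = 1$, the contribution through $e$ reduces to $\frac{1}{N}\sum_{a,c,t}\bigl(\sum_b D(a,b,t)\bigr)\,\mathcal{U}(\pths^e \cap \pths_{a,c,t,t+L})$. Because $D$ arises from a permutation, $\sum_b D(a,b,t) = r$ for every $(a,t)$, so this equals $\frac{r}{N}\sum_{a,c,t}\mathcal{U}(\pths^e \cap \pths_{a,c,t,t+L})$. The latter triple sum is exactly $N$ times the flow through $e$ under $\mathcal{U}^{D_{unif}}$, which by hypothesis is at most $h$ (since $\mathcal{U}^{\frac{1}{h}D_{unif}}$ is feasible). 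The first-half contribution is thus at most $rh = \frac{1}{2}$. The second half is handled symmetrically, using $\sum_a D(a,b,t) = r$ together with the $T$-periodicity of $\mathcal{U}$ to re-index the inner sum so that the intermediate time $t+L$ plays the role of a generic starting timestep.

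I do not anticipate a serious obstacle: this is the classical VLB calculation, and the only genuine subtleties are the path-padding convention mentioned above and the careful use of $T$-periodicity when translating the second-half starting time $t+L$ back into the uniform-traffic sum through $e$. With both conventions in place the bound is a direct calculation.
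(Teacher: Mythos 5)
Your proof is correct, and it is the classical VLB congestion argument: after padding every $\mathcal{U}$-path to latency exactly $L$, split the flow induced on any physical edge $e$ into a first-half (source-to-intermediate) and second-half (intermediate-to-destination) contribution, bound each by $\frac{1}{2}$ using $\sum_b D(a,b,t) = r$ (resp.\ $\sum_a D(a,b,t) = r$) together with the hypothesis that $\mathcal{U}^{\frac{1}{h}D_{unif}}$ is feasible, and use $T$-periodicity to re-index the second half. The paper itself does not supply a proof of \Cref{thm:vlb} (it cites Valiant and Brebner), but your first-half/second-half decomposition is essentially the same structure the paper uses in its proof of the generalization \Cref{lem:vlb-w-leakage}, where the forward and backward distributions $\mathcal{D}$ and $\mathcal{E}$ are each halved and their induced flows summed to bound the total edge load.
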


Requiring $\mathcal{D}$ to form the uniform distribution on destination nodes is quite restrictive.
However, we will show that even a distribution that is close to uniform 
will be good enough to apply something akin to Valiant Load Balancing (VLB).

\begin{dfn}[\textbf{$h$-hop spraying forward and backward routing distributions}]\label{def:h-hop-spraying}
	~\\
	\indent Let $\bm{\sigma}$ be a shift connection schedule on $\zmodn$ defined by shifts $s_0,s_1,\ldots,s_{T-1}$.
	In addition, suppose $T$ is divisible by $h$ for hop parameter $h$, and let $\Lambda=T/h$.
	Partition $\bm{\sigma}$ into $h$ consecutive sub-sequences, each of length $\Lambda$.
	\begin{align*}
		\bm{\sigma}_1   & = s_0,\ldots,s_{\Lambda-1}          \\
		\bm{\sigma}_2   & = s_{\Lambda},\ldots,s_{2\Lambda-1} \\
		                & \vdots                              \\
		\bm{\sigma}_{h} & = s_{(h-1)\Lambda},\ldots,s_{T-1}
	\end{align*}
	We define the forward routing distribution $\mathcal{D}$ given by $h$-hop packet spraying on phase length $\Lambda$.
	For every path $P$ that first delays flow until the start of a period, then takes exactly one hop from each of the $h$ subsequences $\bm{\sigma}_1,\ldots,\bm{\sigma}_h$, set $\mathcal{D}(P) = \left(\frac{1}{\Lambda}\right)^{h}$.

	We also define the backwards routing distribution $\mathcal{E}$ formed by $h$-hop packet spraying in a time-reversed process.
	Ending at timestep $t'$, let $t^\star$ be the latest timestep that started a period\footnote{In a longer connection schedule, $\mathcal{E}$ paths may be defined over a different set of sub-sequences $\bm{\sigma}_{\ell},\ldots,\bm{\sigma}_{\ell+h}$. This is without loss of generality.} no later than $t'$.
	For every path $P$ which starts at timestep $t^\star-T$, takes exactly one hop from each of the $h$ subsequences $\bm{\sigma}_1,\ldots,\bm{\sigma}_h$, and then delays flow at its destination until timestep $t$, set $\mathcal{E}(P) = \left(\frac{1}{\Lambda}\right)^{h}$.
\end{dfn}

\begin{dfn} [\textbf{VLB with Leakage}] \label{def:vlb-w-leakage}
	Given a forward routing distribution $\mathcal{D}$ with maximum latency $L_\mathcal{D}$ and a backwards routing distribution $\mathcal{E}$ with maximum latency $L_\mathcal{E}$, we define the following strategy, which we call {\em Valiant Load Balancing with Leakage}, which concatenates them into a single oblivious routing protocol with maximum latency $L = L_\mathcal{D}+L_\mathcal{E}$.

	Consider a fixed starting node $a$, destination node $b$, and starting timestep $t$.
	To define an oblivious routing protocol $R$, we must define a unit of flow supported on $\pths_{a,b,t}$.
	We will further restrict ourselves to defining a unit of flow supported on $\pths_{a,b,t,t'}$ where $t'=t+L_\mathcal{D}+L_\mathcal{E}$.

	Now consider the probability distributions $\mathcal{D}_{a,t}(c)$ and $\mathcal{E}_{b,t'}(c)$.
	For each intermediate node $c$, we will first define how to send $\min\{\mathcal{D}_{a,t}(c), \mathcal{E}_{b,t'}(c)\}$ flow from $a$ to $b$ through intermediate node $c$, on paths concatenated from paths that support $\mathcal{D}$ and $\mathcal{E}$.

	To do so, let $t'' = t + L_{\mathcal{D}}$ and find any fractional maximum weighted matching $M$ between $\{P_1\in\pths_{a,c,t,t''} : \mathcal{D}(P_1) > 0 \}$ and $\{P_2\in\pths_{c,b,t'',t} : \mathcal{E}(P_2) > 0 \}$, where the edge weights are defined by $w(P_1,P_2) = \min\{ \mathcal{D}(P_1), \mathcal{E}(P_2) \}$.
	Then $M$ has total weight $\min\{\mathcal{D}_{a,t}(c), \mathcal{E}_{b,t'}(c)\}$.\footnote{One way to justify this claim is as follows. Subdivide the interval $[0,\mathcal{D}_{a,t}(c)]$ into subintervals $I(P)$ with disjoint interiors, indexed by paths $P \in \pths_{a,c,t,t''}$, with $I(P)$ having length $\mathcal{D}(P)$. Similarly, subdivide the interval $[0,\mathcal{E}_{b,t'}(c)]$ into subintervals $J(P)$ with disjoint interiors, indexed by paths $P \in \pths_{c,b,t'',t'}$, with $J(P)$ having length $\mathcal{E}(P)$. Then, define $M(P_1,P_2)$ to equal the length of $I(P_1) \cap J(P_2)$.}
	For each pair of paths $P_1$ in the first set and $P_2$ in the second set, let $R(P_1 \pathcat P_2) = M(P_1,P_2)$ be the amount of flow assigned to the concatenation\footnote{Because $P_1$ ends at $(c,t+L_\mathcal{D})$ and $P_2$ starts at the same place, this concatenated path is well-defined.} of $P_1$ and $P_2$.

	In total, we have defined a function which assigned $\sum_{c} \min\{\mathcal{D}_{a,t}(c), \mathcal{E}_{b,t'}(c)\}$ flow on routing paths from node $a$ to node $b$ starting at timestep $t$.
	Because $\mathcal{D}_{a,t}(c)$ and $\mathcal{E}_{b,t'}(c)$ are probability distributions, this sum is always no more than $1$.
	If it is not equal to 1, then normalize\footnote{The normalization operation is undefined when $\sum_{c} \min\{\mathcal{D}_{a,t}(c), \mathcal{E}_{b,t'}(c)\}=0$, i.e.~when the distributions $\mathcal{D}_{a,t}$ and $\mathcal{E}_{b,t'}$ are supported on disjoint sets of nodes. In all of the constructions we propose in this paper, the total variation distance between $\mathcal{D}_{a,t}$ and $\mathcal{E}_{b,t'}$ is strictly less than 1, so they cannot be disjointly supported.} $R(P)$ for $P\in\pths_{a,b,t}$ by this value.
\end{dfn}

\begin{lem}[\textbf{VLB with Leakage}]\label{lem:vlb-w-leakage}
	Let $\mathcal{D}$ and $\mathcal{E}$ be the $h$-hop spraying forward and backwards routing distributions for parameters $h$ and $\Lambda$, as defined in \Cref{def:h-hop-spraying}.
	Additionally, suppose for every pair of nodes $a,b$, and pairs of timesteps $t,t'$, $\mathcal{D}$ and $\mathcal{E}$ form probability distributions over source/destination nodes that are both $\frac{\eps}{2}$-close to the uniform distribution over $N$ nodes.
	That is,
	\begin{align*}
		\sum_c \abs{\mathcal{D}_{a,t}(c) - \frac{1}{N} }   & \leq \frac{\eps}{2}   \\
		\sum_c \abs{ \mathcal{E}_{b,t'}(c) - \frac{1}{N} } & \leq \frac{\eps}{2} .
	\end{align*}

	Then the routing protocol defined by applying the VLB with Leakage construction to $\mathcal{D}$ and $\mathcal{E}$ achieves maximum latency $2(h+1)\Lambda$ and guarantees throughput $\frac{1}{2h}(1-\eps)$.
\end{lem}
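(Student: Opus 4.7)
The plan is to verify the latency and throughput assertions separately, with the bulk of the work devoted to the throughput bound.

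For the latency, any $\mathcal{D}$-path from $(a,t)$ first delays to the next phase boundary and then takes $h$ hops, one per length-$\Lambda$ phase, giving $L_\mathcal{D} \leq (h+1)\Lambda$; the backward construction $\mathcal{E}$ is symmetric, so $L_\mathcal{E} \leq (h+1)\Lambda$, and summing the two concatenated halves yields the claimed $2(h+1)\Lambda$ bound.

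For the throughput claim, my strategy is a three-step reduction. First, combining the $\frac{\eps}{2}$-closeness hypotheses with the triangle inequality for total variation distance gives $\sum_c \min\{\mathcal{D}_{a,t}(c), \mathcal{E}_{b,t'}(c)\} \geq 1 - \eps/2$, so the normalization factor in the VLB-with-leakage construction satisfies $N(a,b,t) \leq 1/(1-\eps/2)$. Second, the matching inequality $\sum_{P_2} M(P_1, P_2) \leq \mathcal{D}(P_1)$ gives, for each physical edge $e$ in the first-half range of the paths from $(a,t)$,
\[ R(\pths^e \cap \pths_{a,b,t}) \leq N(a,b,t) \cdot \mathcal{D}(\pths^e \cap \pths_{(a,t) \to}). \]
Third, the shift structure of $\bm{\sigma}$ provides per-edge source-uniformity: $\sum_a \mathcal{D}(\pths^e \cap \pths_{(a,t) \to}) = 1/\Lambda$ for every first-half edge $e$. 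To see this, the number of $\mathcal{D}$-paths from $(a,t)$ through $e = (i,k) \to (i+s_k, k+1)$ lying in phase $j$ equals $|\{(k_1,\ldots,k_{j-1}) : s_{k_1}+\cdots+s_{k_{j-1}} = i - a\}| \cdot \Lambda^{h-j}$; summing over $a$ collapses this to $\Lambda^{j-1} \cdot \Lambda^{h-j} = \Lambda^{h-1}$, independent of $e$, yielding $\mathcal{D}$-flow $1/\Lambda$ per edge.

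Combining these, I would sum over the starting times $t$ in one period whose first-half paths can use a fixed edge $e$ (contributing $1/\Lambda$ each, for a total of $T/\Lambda = h$) to bound the first-half contribution to the flow on $e$ by $rh/(1-\eps/2)$, and the symmetric analysis of $\mathcal{E}$ adds another $rh/(1-\eps/2)$. The total flow is therefore at most $2rh/(1-\eps/2)$, which is strictly less than $1$ when $r = (1-\eps)/(2h)$, establishing feasibility. I expect the main obstacle to lie in step three: establishing source-uniformity via the shift structure requires careful bookkeeping of the phase index $j$ relative to the starting time $t$ and the fixed edge $e$, and verifying that the enumeration of middle-hop positions is robust as $(t, e)$ ranges over all source-time and edge combinations within a period. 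The total variation and matching inequalities themselves are routine manipulations.
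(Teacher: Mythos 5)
Your proof is correct and follows the same overall strategy as the paper's: bound the per-pair normalization factor via total variation distance, then show the concatenated protocol respects edge capacities by splitting each path into its forward ($\mathcal{D}$) and backward ($\mathcal{E}$) halves. The main difference is in how the edge-capacity accounting is carried out. The paper wraps this step in the language of induced flows, asserting without explicit proof that $\mathcal{D}$ rescaled by $1/h$ is feasible because it ``evenly load balances edges,'' and then deducing that $\unvlbleak(\mathcal{D},\mathcal{E})^{\frac{1}{2h}F}$ is dominated by a sum of two feasible flows. You unroll this abstraction into a direct per-edge computation: the fractional-matching inequality $\sum_{P_2} M(P_1,P_2) \leq \mathcal{D}(P_1)$, combined with the bijection argument (each tuple of shifts in phases $1,\dots,j-1$ determines a unique source $a$ reaching the tail of $e$) yields $\sum_a \mathcal{D}(\pths^e \cap \pths_{(a,t)\to}) = 1/\Lambda$ for each start time, and summing over the $T$ start times that delay into the relevant period gives the per-edge total $T/\Lambda = h$. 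This explicit verification of the load-balancing fact is a genuine strengthening of the presentation, and it is where the real content of the paper's feasibility assertion lives. One minor remark: your total-variation calculation correctly gives $\eta \geq 1 - \eps/2$, whereas the paper records the weaker (but implied) bound $\eta \geq 1 - \eps$; either suffices for the throughput guarantee $\tfrac{1}{2h}(1-\eps)$.
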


\begin{proof}
	Let $\vlbleak(\mathcal{D},\mathcal{E})$ be the oblivious routing protocol formed by applying VLB with leakage on $\mathcal{D}$ and $\mathcal{E}$.
	Since $\mathcal{D}$ and $\mathcal{E}$ both have maximum latency $(h+1)\Lambda$, then clearly $\vlbleak(\mathcal{D},\mathcal{E})$ will have maximum latency $2(h+1)\Lambda$.

	To compute its guaranteed throughput, consider that $\mathcal{D}$ and $\mathcal{E}$ evenly load balance edges in the network and use $h$ hops each per routing path.
	That is, define the demand function $D(a,c,t) = \frac{1}{h}\mathcal{D}_{a,t}(c)$.
	Then the induced flow $\mathcal{D}^D$ is a feasible flow.
	Similarly, define the demand function $E(c,b,t) = \frac{1}{h}\mathcal{E}_{b,t'-2(h+1)\Lambda}(c)$.
	Then the induced flow $\mathcal{E}^E$ is also a feasible flow.

	If we want to route $\mathcal{D}$ and $\mathcal{E}$ at the same time without overloading edge capacities, then we can do so by halving the amount of total flow that each routes.
	That is, the additive induced flow $\mathcal{D}^{\frac{1}{2}D} + \mathcal{E}^{\frac{1}{2}E}$, where the amount of flow crossing an edge is the sum of both flows crossing the edge, is also a feasible flow.

	When we apply the VLB with Leakage construction, we concatenate $\mathcal{D}$ and $\mathcal{E}$ path together, and before normalization, define a function $\unvlbleak(\mathcal{D},\mathcal{E})$ that assigns $\sum_{c} \min\{\mathcal{D}_{a,t}(c), \mathcal{E}_{b,t'}(c)\} \leq 1$ flow total on routing paths from node $a$ to node $b$ starting at timestep $t$.
	$\unvlbleak(\mathcal{D},\mathcal{E})$ is not an oblivious routing protocol, because it does not normalize the amount of flow that can be routed between any source-destination pair $a,b$ at timestep $t$ to be 1.
	However, for any permutation demand function $F$ defined by permutations $\tau_t:[N]\rightarrow[N]$,
	the induced flow $\unvlbleak(\mathcal{D},\mathcal{E})^{\frac{1}{2h}F} \leq \mathcal{D}^{\frac{1}{2}D} + \mathcal{E}^{\frac{1}{2}E}$, which is a feasible flow.

	So, what is the minimum amount of flow, $\eta$, that $\unvlbleak(\mathcal{D},\mathcal{E})$ tells us how to route?
	Once we lower bound this value, we can use it to derive the guaranteed throughput of the oblivious routing protocol, $\vlbleak(\mathcal{D},\mathcal{E})$.

	To define $\eta$ more precisely, let
	\begin{align*}
		\eta = \min_{a,b,t}\Big\{ \sum_c \min\{ \mathcal{D}_{a,t}(c), \mathcal{E}_{b,t+4T}(c)\} \Big\}.
	\end{align*}
	Since $\mathcal{D}$ and $\mathcal{E}$ are $T$-periodic for a finite $T$, then there exists an $(a,b,t)\in[N]^2\times\Z$ for which $\eta = \sum_c \min\{ \mathcal{D}_{a,t}(c), \mathcal{E}_{b,t+4T}(c)\}$.
	Fix such an $(a,b,t)$.
	Then the following holds.
	\begin{align*}
		\sum_c \mathcal{D}_{a,t}(c) & = 1 = \sum_c \mathcal{E}_{b,t+4T}(c)                                                                                                 \\
		\implies 2                  & = \sum_c \mathcal{D}_{a,t}(c) + \sum_c \mathcal{E}_{b,t+4T}(c)                                                                       \\
		                            & = \sum_c \min\{ \mathcal{D}_{a,t}(c), \mathcal{E}_{b,t+4T}(c) \} + \sum_c \max\{ \mathcal{D}_{a,t}(c), \mathcal{E}_{b,t+4T}(c) \}    \\
		                            & = \sum_c \big( 2\min\{ \mathcal{D}_{a,t}(c), \mathcal{E}_{b,t+4T}(c) \} + \abs{\mathcal{D}_{a,t}(c) - \mathcal{E}_{b,t+4T}(c)} \big) \\
		\implies 2                  & = 2\eta + \sum_c \abs{\mathcal{D}_{a,t}(c) - \mathcal{E}_{b,t+4T}(c)}
		= 2 \eta + \norm{\mathcal{D}_{a,t} - \mathcal{E}_{b,t+4T}}_{1} .
	\end{align*}
	Note that we can also bound 
	$\norm{\mathcal{D}_{a,t} - \mathcal{E}_{b,t+4T}}_{1}$
	by relying on the fact that $\mathcal{D}$ and $\mathcal{E}$ are $\frac{\eps}{2}$-close to uniform.
	\begin{align*}
		\norm{\mathcal{D}_{a,t} - \mathcal{E}_{b,t+4T}}_{1} & \leq \norm{\mathcal{D}_{a,t} - \bm{u}}_1 + \norm{\bm{u} - \mathcal{E}_{b,t+4T}}_1 \\
		 & = \sum_c \abs{\mathcal{D}_{a,t}(c) - \frac{1}{N} } +
		\sum_c \abs{ \mathcal{E}_{b,t'}(c) - \frac{1}{N} }
		\leq \eps ,
	\end{align*}
	where $\bm{u}$ is the uniform distribution over $N$ nodes.
	Plugging everything in, we see that
	\begin{align*}
		2 & \leq 2\eta + \eps
		\implies 	\eta \geq 1 - \eps .
	\end{align*}
	Recall that $\unvlbleak(\mathcal{D},\mathcal{E})^{\frac{1}{2h}F}$ is a feasible flow for any permutation demand function $F$.
	Then $\frac{1}{\eta}\unvlbleak(\mathcal{D},\mathcal{E})^{\frac{\eta}{2h}F}$ must also be a feasible flow.
	And finally, as a flow, $\frac{1}{\eta}\unvlbleak(\mathcal{D},\mathcal{E}) \geq  \vlbleak(\mathcal{D},\mathcal{E})$ since $\eta$ is a lower bound on the values that we normalize by to compute $\vlbleak$.
	Therefore, the routing protocol $\vlbleak(\mathcal{D},\mathcal{E})$ guarantees throughput $\frac{\eta}{2h} \geq \frac{1}{2h}(1-\eps)$.

\end{proof}

\subsection{Connections to Fourier Analysis}
\label{sec:fourier-connections}

In this subsection we review basic definitions and facts
about the discrete Fourier transform. The reader familiar with
this material may skip directly to the proof of
\Cref{thm:small-fourier-implies-orn} at the end of
the subsection.

In order to show that our forward and backwards routing distributions are close to uniform, we make use of Fourier analysis on the generating polynomials of our routing distributions. Recall the generating polynomial and Fourier coefficient vector of a distribution on $\zmodn$, defined in Equation~\eqref{eq:fourier}.
For independent, integer-valued random variables, the convolution of their generating functions completely characterizes the distribution of their sum.
A similar result holds for $\zmodn$ valued independent random variables.

\begin{thm} 
	Suppose $X, Y$ are independent, $\zmodn$ valued random variables. Then
	\begin{align*}
		p_{X + Y} = p_X p_Y \mod {z^N - 1} ,
		\hspace{3mm}\mbox{ {\normalfont and} }\hspace{3mm}
		\hat{p}_{X + Y} = \hat{p}_X \odot \hat{p}_Y ,
	\end{align*}
	where $\bm{v} \odot \bm{w}$ is the entry-wise product of the vectors $\bm{v}$ and $\bm{w}$.
\end{thm}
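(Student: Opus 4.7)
The plan is to prove the two statements in sequence: first derive the polynomial identity modulo $z^N - 1$ by direct computation using independence, then deduce the Fourier statement by evaluating at $N$-th roots of unity. The main (minor) obstacle is being careful about the reduction modulo $z^N - 1$, since addition in $\zmodn$ wraps around while multiplication of polynomials in $\mathbb{Z}[z]$ does not.

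\textbf{Step 1: Polynomial identity.} I would start from the definition and use independence:
\[
p_X(z) p_Y(z) = \Bigl(\sum_{k=0}^{N-1} \Pr(X=k)\, z^k\Bigr)\Bigl(\sum_{\ell=0}^{N-1} \Pr(Y=\ell)\, z^\ell\Bigr) = \sum_{k,\ell=0}^{N-1} \Pr(X=k)\Pr(Y=\ell)\, z^{k+\ell}.
\]
This is an identity in $\mathbb{Z}[z]$ of degree up to $2N-2$. Reducing modulo $z^N - 1$ replaces each $z^{k+\ell}$ by $z^{(k+\ell)\bmod N}$, so collecting terms by the value $m = (k+\ell)\bmod N$ gives
\[
p_X(z) p_Y(z) \bmod (z^N - 1) = \sum_{m=0}^{N-1} \Bigl(\sum_{\substack{k,\ell \\ k+\ell \equiv m \pmod N}} \Pr(X=k)\Pr(Y=\ell)\Bigr)\, z^m.
\]
On the other hand, by independence of $X$ and $Y$ and the fact that addition in $\zmodn$ is addition mod $N$, the coefficient of $z^m$ in $p_{X+Y}(z)$ is exactly $\Pr(X+Y = m) = \sum_{k+\ell \equiv m \pmod N}\Pr(X=k)\Pr(Y=\ell)$. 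Hence the two polynomials agree mod $z^N - 1$.

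\textbf{Step 2: Fourier identity.} For any $j \in \{0,1,\dots,N-1\}$, the point $z_j = e^{2\pi i j / N}$ is an $N$-th root of unity, so $z_j^N - 1 = 0$. Evaluating the identity from Step~1 at $z = z_j$ kills the modular correction and gives $p_{X+Y}(z_j) = p_X(z_j)\, p_Y(z_j)$, which by Definition~\ref{def:fourier} is precisely $\hat{p}_{X+Y}[j] = \hat{p}_X[j]\cdot \hat{p}_Y[j]$. Since this holds for every coordinate $j$, we conclude $\hat{p}_{X+Y} = \hat{p}_X \odot \hat{p}_Y$, completing the proof.

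\textbf{Anticipated obstacles.} There is essentially no hard step here; the only subtlety is clearly distinguishing ordinary polynomial multiplication in $\mathbb{Z}[z]$ from multiplication in the quotient ring $\mathbb{Z}[z]/(z^N-1)$, and pointing out that evaluation at $N$-th roots of unity is a ring homomorphism from that quotient to $\mathbb{C}$, which is exactly why the convolution in $\zmodn$ becomes pointwise multiplication on the Fourier side.
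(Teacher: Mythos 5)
Your proof is correct and takes essentially the same approach as the paper: both establish the polynomial congruence by matching coefficients via the convolution formula $\Pr(X+Y\equiv m) = \sum_{k+\ell\equiv m}\Pr(X=k)\Pr(Y=\ell)$ and the reduction $z^{k+N}\equiv z^k \pmod{z^N-1}$, then obtain the Fourier identity by evaluating at $N$-th roots of unity. The only cosmetic difference is direction (you expand $p_Xp_Y$ and reduce, the paper expands $p_{X+Y}$ and lifts), which is the same argument run in reverse.
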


\begin{proof}
	By definition, for $k \in \left[ 0, N - 1 \right]$,
	\begin{align*}
		\Pr(X + Y \equiv k \bmod{N}) = \sum_{\substack{i + j \equiv k \\ \bmod{N}}} \Pr(X = i) \Pr(Y = j)
	\end{align*}
	Note that as $i, j \in [0, N - 1]$, there are two possibilities for each element of the sum on the right: either $i + j = k$ or $i + j = k + N$. However, we have $z^k \equiv z^{k + N} \bmod{z^N - 1}$ which follows from noting that $z^{k + N} - z^k = z^k \left( z^N - 1 \right)$.
	Thus,
	\begin{align*}
		\Pr(X + Y \equiv k \bmod{N}) \cdot z^k \hspace{10pt} \equiv_{(z^N - 1)} \sum_{\substack{i + j \equiv k \\ \mod{N}}} \Pr(X = i) \Pr(Y = j) \cdot z^{i + j},
	\end{align*}
	from which the first desired result follows.

	The $j$th index of the Fourier coefficients of $p_{X + Y}$ is
	\[
		\hat{p}_{X + Y}[j] = p_{X + Y}(\xi^j) = p_X(\xi^j) p_Y(\xi^j) = \hat{p}_X[j] \hat{p}_Y[j]
	\]
	where $\xi = e^{2\pi i/N}$.
	Therefore, $\hat{p}_{X + Y} = \hat{p}_X \odot \hat{p}_Y$.
\end{proof}

\begin{dfn}
	Let $\mathcal{P}\left( d \right)$ denote the vector space of polynomials over $\C$ with degree \emph{strictly} less than $d$.
	\begin{align*}
		\mathcal{P}\left( d \right) = \left\{ \sum_{i = 0}^{d - 1} p_i z^i \, \mid \, p_i \in \C, \forall i \in \left[ d - 1 \right] \right\}
	\end{align*}
	Suppose $p \in \mathcal{P}\left( d \right)$. Then the \emph{Fourier transform} of $p$ is the vector
	\begin{align*}
		\widehat p = \begin{pmatrix}
			             p(1)             \\
			             p(e^{2 \pi i/d}) \\
			             p(e^{4 \pi i/d}) \\
			             \vdots           \\
			             p(e^{2 \pi i \left( d - 1\right)/d})
		             \end{pmatrix}
	\end{align*}
	obtained by evaluating $p$ at each of the $d$-th roots of unity.
\end{dfn}

\begin{thm}[\textbf{Parseval's Theorem}]
	Let $f$ be a function over $\zmodn$, let $\hat{f}$ is its Fourier transform, an $N$-dimensional complex vector with entries $\hat{f}_k$. Then
	\[ \sum_{k\in N} \abs{f(k)}^2 = \frac{1}{N} \sum_{k\in N} \abs{\hat{f}_k}^2 . \]
\end{thm}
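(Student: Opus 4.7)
The plan is to prove Parseval's identity by the standard character-orthogonality computation for the discrete Fourier transform on $\zmodn$. Write $\omega = e^{2\pi i/N}$, so the definition of $\hat{f}$ gives $\hat{f}_k = p_f(\omega^k) = \sum_{j=0}^{N-1} f(j)\,\omega^{jk}$. I would start from the right-hand side of the claim and expand the squared modulus using $|\hat{f}_k|^2 = \hat{f}_k\,\overline{\hat{f}_k}$, so that after using $\overline{\omega^{\ell k}} = \omega^{-\ell k}$ we get
\[ |\hat{f}_k|^2 \;=\; \sum_{j=0}^{N-1}\sum_{\ell=0}^{N-1} f(j)\,\overline{f(\ell)}\,\omega^{k(j-\ell)}. \]

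Next, I would sum over $k$ and swap the order of summation to pull the $k$-sum innermost:
\[ \sum_{k=0}^{N-1} |\hat{f}_k|^2 \;=\; \sum_{j=0}^{N-1}\sum_{\ell=0}^{N-1} f(j)\,\overline{f(\ell)}\,\sum_{k=0}^{N-1}\omega^{(j-\ell)k}. \]
The key step is evaluating the inner sum via the orthogonality of characters: if $j\equiv \ell\pmod N$ then $\omega^{j-\ell}=1$ and the sum equals $N$; otherwise $\omega^{j-\ell}$ is a nontrivial $N$th root of unity and the geometric series $\sum_{k=0}^{N-1}(\omega^{j-\ell})^k = \frac{\omega^{(j-\ell)N}-1}{\omega^{j-\ell}-1} = 0$ since $\omega^N = 1$. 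Because $j$ and $\ell$ run over residues mod $N$, only the diagonal $j=\ell$ contributes, giving
\[ \sum_{k=0}^{N-1} |\hat{f}_k|^2 \;=\; N\sum_{j=0}^{N-1} |f(j)|^2, \]
which is precisely the claim after dividing by $N$.

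There is no substantive obstacle here; the proof is a textbook orthogonality calculation. The only care required is the complex-conjugate bookkeeping so that the cross term collapses to $\omega^{k(j-\ell)}$, and checking that the geometric-series evaluation is legitimate exactly in the off-diagonal case $j\not\equiv\ell\pmod N$. If desired, the same argument can be phrased as saying that the $N\times N$ Fourier matrix with entries $\omega^{jk}$ satisfies $\frac{1}{\sqrt{N}}F$ is unitary, and Parseval is then the statement that unitary maps preserve the $\ell^2$ norm; but the direct computation above is self-contained and fits the tone of the surrounding exposition.
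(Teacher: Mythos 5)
Your proof is correct: the character-orthogonality computation is the standard argument, the conjugate bookkeeping and the geometric-series evaluation in the off-diagonal case are handled properly, and the final identity matches the paper's normalization (the factor $\frac{1}{N}$ sits on the Fourier side). The paper itself states Parseval's Theorem as a known classical fact and gives no proof, so there is nothing to compare against; your self-contained derivation (or equivalently the observation that $\frac{1}{\sqrt{N}}$ times the Fourier matrix is unitary) is exactly what a reader would supply.
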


We now re-state and prove the main theorem of this section.

\noindent\textbf{\Cref{thm:small-fourier-implies-orn}.}  {\em
Given a shift schedule $\bm{\sigma} = s_0,s_1,\ldots,s_{T-1}$ on $G = \zmodn$, and given parameters $\Lambda$ and $h$ for which $T\geq \Lambda h$.

If $\Lambda h$ evenly divides $T$, then let $\mathcal{A} = \{\ell \Lambda h : \ell \in \mathbb{Z}_{\geq 0}\}$. Otherwise, let $\mathcal{A} = \{0,\ldots,T-1\}$.
Then for each timestep $t \in \mathcal{A},$
let $p_t$ and $q_t$ be the polynomials\footnote{Note that $p_t$ and $q_t$ are generating polynomials of forward and backwards routing distributions which delay flow until timestep $t$, then use $h$ spraying hops chosen uniformly at random, each from consecutive blocks of $\Lambda$ timesteps.}
\[ p_t(z) = \prod_{j=1}^{h} p_{t,j}(z) , \hspace{5mm} q_t(z) = \prod_{j=h+1}^{2h} q_{t,j}(z) \]
where
\[ p_{t,j}(z) = \frac{1}{\Lambda} \left( \sum_{k= t + (j-1)\Lambda}^{t + j\Lambda-1} z^{s_{k}} \right) , \hspace{5mm} q_{t,j}(z) = \frac{1}{\Lambda} \left( \sum_{k=t+(j-1)\Lambda}^{t+j\Lambda-1} z^{N-s_{k}} \right) \]
Define $\hat{p_t}^\star$ as the Fourier transform of $p$ with first coordinate set to 0, and $\hat{q_t}^\star$ similarly for $q$.

Now let $\eps>0$ be a constant.
If $\norm{\hat{p_t}^\star}_2 \leq \frac{\eps}{2}$ and $\norm{\hat{q_t}^\star}_2 \leq \frac{\eps}{2}$ for all $t\in\mathcal{A}$, then there exists an oblivious routing protocol over $\bm{\sigma}$ which guarantees throughput $\frac{1}{2h}(1 - \eps)$ and maximum latency no more than $2(h+1)\Lambda$.
}

\begin{proof}
	Consider our routing distribution $\mathcal{D}$ for connection schedule $\bm{\sigma}$, defined in \Cref{def:h-hop-spraying},
	which first delays flow until a timestep $t\in\mathcal{A}$, then equally weights all paths which take exactly one hop over the next consecutive subsequences each of length $\Lambda$.
	Fixing source node 0, and timestep $t$, this forms a distribution on destination nodes $b$ with generating polynomial
	\[ p_t(z) = \prod_{j=1}^{h} p_{t,j}(z), \hspace{2mm}\mbox{ {\normalfont where} }\hspace{2mm} p_{t,j}(z) = \frac{1}{\Lambda} \left( \sum_{k=t+(j-1)\Lambda}^{t+j\Lambda-1} z^{s_{k}} \right) . \]

	This has a Fourier transform that can be decomposed into the element-wise product
	\[ \hat{p_t}(z) = \bigodot_{j=1}^{h} \hat{p_{t,j}}(z) , \]
	where
	\begin{align*}
		\hat{p_{t,j}} = \frac{1}{\Lambda}
		\begin{pmatrix}
			1                                                                          \\
			\vspace{2mm} 
			\sum_{k=t+(j-1)\Lambda}^{t+j\Lambda-1} \mbox{ } e^{\frac{2\pi i}{N} s_{k}} \\
			\sum_{k=t+(j-1)\Lambda}^{t+j\Lambda-1} \mbox{ } e^{\frac{4\pi i}{N} s_{k}} \\
			\vdots                                                                     \\
			\sum_{k=t+(j-1)\Lambda}^{t+j\Lambda-1} \mbox{ } e^{\frac{(N-1)\pi i}{N} s_{k}}
		\end{pmatrix}
	\end{align*}

	By assumption, we know that $\norm{\hat{p_t}^\star}_2 \leq \frac{\eps}{2}$.
	Since $\bm{u}$ is the uniform distribution, $\hat{u} = \bm{e}_1$, the first standard basis vector $(1,0,0,\ldots,0)^T$.
	Then $\hat{p_t}^\star = \hat{p_t}-\hat{u}$, because the first coordinate of $\hat{p_{t}}$ equals $1$.
	\[ \hat{p_{t}}[0] = \bigodot_{j=1}^{h} \frac{1}{\Lambda} \left( \sum_{k=t+(j-1)\Lambda}^{t+j\Lambda-1} e^{0} \right) = \bigodot_{j=1}^{h} \frac{1}{\Lambda}(\Lambda) = 1 \]
	Therefore,
	\[ \norm{\hat{p_t}^\star}_2 \leq \frac{\eps}{2} \implies \norm{\hat{p_t}-\hat{u}}_2 \leq \frac{\eps}{2} \]

	Use Parseval's Theorem to relate $\norm{\hat{p_t}-\hat{u}}_2$ to $\norm{p_t-u}_2$, which then bounds $\norm{p_t - u}_1$ as desired.
	For all $t\in\mathcal{A}$,
	\begin{align*}
		\norm{\hat{p_t}-\hat{u}}_2 \leq \frac{\eps}{2} & \implies \norm{p_t - u}_2 \leq \frac{\eps}{2} N^{-1/2} \\
		                                               & \implies \norm{p_t - u}_1 \leq \frac{\eps}{2} .
	\end{align*}

	Similarly, consider our backwards routing distribution $\mathcal{E}$ for connection schedule $\bm{\sigma}$, which delays flow backwards until some $t\in\mathcal{A}$ for some non-negative integer $\ell$, and then takes one uniformly random hop backwards from each of $h$ consecutive sub-sequences $\bm{\sigma}_i$ of length $\Lambda$.
	A similar calculation shows that $\norm{q_t - u}_1 \leq \frac{\eps}{2}$ for all $t\in\mathcal{A}$.
	Apply \Cref{lem:vlb-w-leakage} to obtain the final result.

\end{proof}

\subsection{A Test for Universality}

	The connections between universal shift connection schedules and Fourier analysis are not unique to the three constructions presented in the following sections of this paper.
	Given any deterministic shift connection schedule for a fixed network size $N$, one could numerically test whether such a connection schedule is universal by computing the Fourier coefficient vectors of the generating polynomials
	directly.\footnote{Of course, this test can only conclude whether the $h$-hop packet spraying-inspired VLB with Leakage routing protocols are optimal on this connection schedule for each value $h$ or not.
			If they are not optimal, that does not imply that no such optimal routing protocols exist for each $h$.
	This can be done in polynomial time in the size of the network, $N$.
	}
	Such a test could even be performed on non-shift connection schedules.
	However as we discuss in \Cref{sec:conclusion}.2, bounding the closeness to uniformity by computing the Fourier coefficients of a generating polynomial does not work for non-shift connection schedules, leading us to need a different approach.

%
%

\section{Random Shift Schedule}\label{sec:rand-shift-sched}

In this section, we prove that when using $h$-hop packet spraying and VLB with Leakage, then for a suitable choice of phase length $\Lambda_h$, a uniformly random shift connection schedule is a universal connection schedule with constant probability.
Applying \Cref{thm:small-fourier-implies-orn}, we must only prove that for each choice of $h\in\{1,\ldots,\log_2 N\}$ and $\Lambda_h$, the random Fourier coefficient vector associated with our random shift connection schedule, $h$, and $\Lambda_h$ has 2-norm that is bounded by $\frac{\eps}{2}$ with high probability.
We will prove this bound for each individual $h$, and then apply a union bound over all $h$.
That is exactly the focus of the theorem stated below.

\begin{thm}\label{thm:random-shift-works}
	Given a constant $\eps>0$, hop parameter $h$, and a network size $N$,
  let $\Lambda \geq 4 \left(\frac{\eps}{2}\right)^{-2/h} N^{1/h} \ln(hN^3)$. Let $\bm{\sigma} = s_0, \ldots, s_{T - 1}$ be a sequence of shifts each chosen uniformly at random from $\zmodn$, and assume that both $T\leq \frac{N^2}{8\log_2 N}$ and $T\geq h\Lambda$.

	Then for each timestep $t \in \{0,\ldots,T-1\}$, let $p_t$ be the polynomial
	\[ p_t(z) = \prod_{j=1}^{h} p_{t,j}(z), \quad \text{where } p_{t,j}(z) = \frac{1}{\Lambda} \left( \sum_{k=t+(j-1)\Lambda}^{t+j\Lambda-1} z^{s_{k}} \right). \]

	Then 
	\[ \Pr\Big[ \norm{\hat{p}_t^\star}_2 \leq \frac{\eps}{2} \mbox{ for every } t\in\{0,\ldots,T-1\} \Big] \geq 1 - \frac{1}{2\log_2 N} \]
	where $\hat{p}_t^\star$ is the Fourier coefficient vector of $p_t$ with first entry set to zero.

\end{thm}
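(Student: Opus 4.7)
The plan is to bound the $2$-norm of $\hat{p}_t^\star$ via a uniform bound on its individual coordinates. Since $\hat{p}_t^\star$ has at most $N-1$ nonzero entries, $\norm{\hat{p}_t^\star}_2 \le \sqrt{N-1}\,\norm{\hat{p}_t^\star}_\infty$, so it suffices to show that with high probability $|\hat{p}_t[j]| \le \eps/(2\sqrt{N})$ for every $t \in \{0,\ldots,T-1\}$ and every $j \in \{1,\ldots,N-1\}$. The key structural fact I will exploit is the Fourier factorization $\hat{p}_t[j] = \prod_{k=1}^h \hat{p}_{t,k}[j]$, together with the observation that the factors for different $k$ depend on \emph{disjoint blocks of $\Lambda$ shifts} and are therefore jointly independent.

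For fixed $t,j$ with $j \ne 0$, each factor
\[
	\hat{p}_{t,k}[j] = \frac{1}{\Lambda}\sum_{m=t+(k-1)\Lambda}^{t+k\Lambda-1} e^{2\pi i j s_m / N}
\]
is an average of $\Lambda$ i.i.d.\ complex unit vectors with mean zero, because $s_m$ is uniform on $\zmodn$ and $\sum_{s=0}^{N-1} e^{2\pi i j s/N} = 0$ for $j \not\equiv 0 \pmod N$. I plan to apply Hoeffding's inequality separately to the real and imaginary parts, each of which is an average of independent $[-1,1]$-valued random variables with mean zero, to obtain, for any threshold $\delta > 0$,
\[
	\Pr\bigl[\, |\hat{p}_{t,k}[j]| > \delta \,\bigr] \;\le\; 4\exp(-\Lambda\delta^2/4).
\]
Setting $\delta = (\eps/2)^{1/h} N^{-1/(2h)}$ and invoking the hypothesis $\Lambda \ge 4(\eps/2)^{-2/h} N^{1/h} \ln(hN^3)$ makes the exponent $\Lambda\delta^2/4 \ge \ln(hN^3)$, so this tail bound is at most $4/(hN^3)$.

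Next I union-bound over the $h$ independent factors for a fixed $(t,j)$: with probability at least $1 - 4/N^3$, every factor has magnitude at most $\delta$, and thus by the product rule $|\hat{p}_t[j]| \le \delta^h = \eps/(2\sqrt{N})$. A final union bound over all $T(N-1)$ pairs $(t,j)$ yields overall failure probability at most $T(N-1)\cdot 4/N^3 \le (N^2/(8\log_2 N))\cdot 4/N^2 = 1/(2\log_2 N)$, using the hypothesis $T \le N^2/(8\log_2 N)$. On the complementary event, $\norm{\hat{p}_t^\star}_\infty \le \eps/(2\sqrt{N})$ simultaneously for every $t$, hence $\norm{\hat{p}_t^\star}_2 \le \sqrt{N-1}\cdot\eps/(2\sqrt{N}) < \eps/2$, which is the desired conclusion.

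There is no single hard step in the argument; the main cost is the $\sqrt{N}$ slack between the $\infty$-norm and the $2$-norm, which forces each coordinate to be driven down to $\eps/(2\sqrt{N})$ and in turn inflates $\Lambda$ by the $\ln(hN^3)$ factor arising from the union bound over $(t,j)$. This is precisely what produces the extra $\log N$ factor in the latency bound of \Cref{thm:main-thm}.1 and motivates the refined $\ell^q$-norm martingale argument that eliminates the union bound in \Cref{sec:convolution-sched}.
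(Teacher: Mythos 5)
Your proof is correct and follows essentially the same approach as the paper: Hoeffding's inequality applied to the real and imaginary parts of each $\hat{p}_{t,j}[a]$, a threshold of $(\eps/2)^{1/h}N^{-1/(2h)}$ per factor so that the product is at most $\eps/(2\sqrt{N})$, and a union bound over $t\in[T]$, $j\in[h]$, and $a\in\{1,\ldots,N-1\}$ using the hypotheses on $\Lambda$ and $T$ to drive the total failure probability below $1/(2\log_2 N)$.
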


\begin{proof}
	For ease of notation, assume that $\xi = e^{2\pi i/N}$, and note that $p_{t,j}(\xi^a)$ is the $a$th coordinate of the Fourier coefficient vector of $p_{t,j}$.
	First suppose for all $j\in[h]$, $t\in[T]$, $a\in[N-1]$, that
	\[ \abs{\Re(p_{t,j}(\xi^a))} \leq \beta \mbox{ and } \abs{\Im(p_{t,j}(\xi^a))} \leq \beta  \]
	for some parameter $\beta$. Then it would be the case that
	\begin{align*}
		\norm{\hat{p_t}^\star}_2^2 & \le \sum_{a = 1}^{N - 1} \left(\prod_{j = 1}^h p_{t,j}(\xi^a) \right)^2 \\
		                           & \le N \cdot (\sqrt{2}\beta)^{2h} 	
	\end{align*}

	If we desire $\norm{\hat{p}_t^\star}_2 \leq \frac{\eps}{2}$, then setting $\beta = \frac{(\eps/2)^{1/h}}{\sqrt{2}N^{1/2h}}$ suffices.
	Therefore, in order to bound $\Pr\left[\norm{\hat{p}_t^\star}_2 > \frac{\eps}{2} \right]$, it suffices to bound
	\[ \Pr \left[ \abs{\Re(p_{t,j}(\xi^a))} > \frac{(\eps/2)^{1/h}}{\sqrt{2}N^{1/2h}} \right] \mbox{ and } \Pr \left[ \abs{\Im(p_{t,j}(\xi^a))} > \frac{(\eps/2)^{1/h}}{\sqrt{2}N^{1/2h}} \right] , \]
	and then union bound over all possible values of $t,j$, and $a$, and over the real and imaginary parts.

	In order to bound these probabilities, we examine their expected values.
	By linearity of expectation, 
		\begin{align*}
		\E\left[p_{t,j}(\xi^{a})\right] & = \E\left[ \frac{1}{\Lambda} \sum_{k} \exp\left( \frac{2\pi i}{N} a s_k \right) \right]                   \\
		                                & = \frac{1}{\Lambda} \sum_{k} \E_{s_k \in[N]}\left[ \exp\left( \frac{2\pi i}{N}a s_k \right) \right] = 0 .
	\end{align*}
	Therefore $\E[\Re(p_{t,j}(\xi^{a}))] = \E[\Im(p_{t,j}(\xi^{a}))] = 0$ as well.

	As each $s_k$ is independently sampled, we can view $\Re(p_{t,j}(\xi^{a})) = \sum_k \frac{1}{\Lambda} \Re\left(\exp\left( \frac{2\pi i}{N}a s_k \right)\right)$ and 
        $\Im(p_{t,j}(\xi^{a})) = \sum_k \frac{1}{\Lambda} \Im\left(\exp\left( \frac{2\pi i}{N}a s_k \right)\right)$ as sums of $\Lambda$ random variables, each supported on $[\frac{-1}{\Lambda},\frac{1}{\Lambda}]$.
	We then apply Hoeffding's inequality.

	\begin{thm}[\textbf{Hoeffding's Inequality}]\label{thm:hoeffding}
		Let $X_1,\ldots,X_n$ be independent random variables such that $a_j\leq X_j\leq b_j$ almost surely.
		Then
		\[ \Pr\left[\abs{\sum_{j=1}^{n} X_j - \E\Big[ \sum_{j=1}^{n} X_j \Big]} \geq t \right] \leq \exp\left( \frac{-2t^2}{\sum_{j=1}^{n} (b_j-a_j)^2} \right) . \]
	\end{thm}

	Using Hoeffding's inequality, we find that
	\begin{align*}
		\Pr \left[ \abs{\Re(p_{t,j}(\xi^a))} > \frac{(\eps/2)^{1/h}}{\sqrt{2}N^{1/2h}} \right] \leq 2\exp\left( \frac{-(\eps/2)^{2/h}\Lambda}{4N^{1/h}} \right) 
	\end{align*}
    and the same tail bound holds for imaginary parts.
	Union bounding over all possible values of $t,j$, and $a$, and over the real and imaginary parts, this shows that
	\begin{align*}
		\Pr\left[ \exists T \in [T], j \in [h] \mbox{ s.t. } \norm{\hat{p}_{t,j}^\star}_2 > \frac{\eps}{2} \right] \leq 2TNh\cdot 2\exp\left( \frac{-(\eps/2)^{2/h}\Lambda}{4N^{1/h}} \right)
	\end{align*}
  Assuming $T\leq \frac{N^2}{8\log_2 N}$ and $\Lambda$ is the smallest integer for which $\Lambda\geq 4\left(\frac{\eps}{2}\right)^{-2/h} N^{1/h} \ln\left(hN^3\right)$, this implies that
	\begin{align*}
		\Pr\left[ \exists T \in [T], j \in [h] \mbox{ s.t. } \norm{\hat{p}_{t,j}^\star}_2 > \frac{\eps}{2} \right] \leq \frac{1}{2\log_2 N} .
	\end{align*}

\end{proof}

\begin{cor}\label{cor:multiple-h-small-fourier}
	Given a constant $\eps>0$ and a network size $N$,
	let $T = 4\left(\frac{\eps}{2}\right)^{2} N \ln\left(N^3\right)$, and let $\bm{\sigma} = s_0, \ldots, s_{T - 1}$ be a sequence of shifts chosen uniformly at random from $\zmodn$.

  For each $t\in\{0,\ldots,T-1\}$ and $h\in[\log_2 N]$, fix $\Lambda_h = 4\left(\frac{\eps}{2}\right)^{-2/h} N^{1/h} \ln\left(hN^3\right)$ and let $p_{h,t}(z)$ be the polynomial
	\[ p_{h,t}(z) = \prod_{j=1}^{h} p_{h,t,j}(z), \quad \text{where } p_{h,t,j}(z) = \frac{1}{\Lambda_h} \left( \sum_{k=t+(j-1)\Lambda_h}^{t+j\Lambda_h-1} z^{s_{k}} \right). \]
	Then
	\[ \Pr\Big[ \forall h \in [\log{N}] \mbox{ and } t\in\{0,\ldots,T-1\}, \;\; \|\hat{p}_{h,t}^\star\|_2 \leq \frac{\eps}{2} \Big] > \frac{1}{2} . \]
\end{cor}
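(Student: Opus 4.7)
The plan is to reduce the corollary to a union bound over applications of Theorem~\ref{thm:random-shift-works} to each $h \in \{1, \ldots, \log_2 N\}$ individually, since the corollary's choices of $\Lambda_h$ match exactly the hypothesis $\Lambda \geq 4(\eps/2)^{-2/h} N^{1/h}\ln(hN^3)$ of the theorem.

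First, I would verify that the single random sequence $\bm{\sigma} = s_0, \ldots, s_{T-1}$ produced by the corollary satisfies the two size constraints $h\Lambda_h \leq T \leq N^2/(8\log_2 N)$ required by Theorem~\ref{thm:random-shift-works} simultaneously for every $h \in [\log_2 N]$. The quantity $h\Lambda_h = 4h(\eps/2)^{-2/h}N^{1/h}\ln(hN^3)$ is maximized at $h=1$ (since $h N^{1/h}$ is decreasing on $[1, \ln N]$), so checking $T \geq \Lambda_1$ handles all $h$ at once; the upper bound $T \leq N^2/(8\log_2 N)$ holds for sufficiently large $N$ because $T$ grows only as $\tilde{O}(N)$. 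Once this is confirmed, the same realization of $\bm{\sigma}$ can be fed into Theorem~\ref{thm:random-shift-works} for every $h$.

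Next, for each fixed $h \in [\log_2 N]$, I would invoke Theorem~\ref{thm:random-shift-works} with parameter $\Lambda = \Lambda_h$. The theorem conclusion, rewritten as a failure probability, yields
\[
\Pr\!\left[\exists\, t \in \{0,\ldots,T-1\} : \norm{\hat{p}_{h,t}^\star}_2 > \tfrac{\eps}{2}\right] \leq \frac{1}{2\log_2 N}.
\]
A union bound over the $\log_2 N$ possible hop-counts $h$ then gives
\[
\Pr\!\left[\exists\, h \in [\log_2 N],\, t \in \{0,\ldots,T-1\} : \norm{\hat{p}_{h,t}^\star}_2 > \tfrac{\eps}{2}\right] \leq \log_2 N \cdot \frac{1}{2\log_2 N} = \frac{1}{2},
\]
so the desired event holds with probability at least $\tfrac{1}{2}$, with the strict inequality following from the slack in Hoeffding's estimate (e.g., the ceiling in the definition of $\Lambda_h$ as the smallest integer satisfying the inequality, or from rounding in the various logarithmic factors).

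I expect no serious obstacle here: all of the real Fourier-analytic and probabilistic work has already been done inside Theorem~\ref{thm:random-shift-works}, where Hoeffding is applied to each coordinate and a union bound is taken over $(t,j,a)$. The only step requiring mild care is bookkeeping the constraints on $T$ so that the same random schedule works for every $h$, which reduces to the observation that $h\Lambda_h$ is monotonically controlled by its value at $h=1$.
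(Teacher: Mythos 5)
Your proof takes essentially the same route as the paper's: invoke \Cref{thm:random-shift-works} once for each $h \in [\log_2 N]$ with the corresponding $\Lambda_h$, then union-bound over the $\log_2 N$ hop counts. The paper's proof is even terser than yours --- it simply defines the bad events $E_h$, quotes the $1/(2\log_2 N)$ bound from the theorem, and union-bounds --- so the probabilistic content is identical.

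One thing worth flagging: the verification step you correctly insist on (that the single random $\bm{\sigma}$ satisfies $h\Lambda_h \leq T \leq N^2/(8\log_2 N)$ simultaneously for all $h$) actually exposes an apparent sign typo in the corollary statement. With $T = 4(\eps/2)^{2} N \ln(N^3)$ and $\Lambda_1 = 4(\eps/2)^{-2} N \ln(N^3)$, one has $T < \Lambda_1$ for any $\eps < 2$, so the $h=1$ constraint $T \geq \Lambda_1$ fails. The exponent on $(\eps/2)$ in $T$ should almost certainly be $-2$, matching $\Lambda_1$; with that fix your monotonicity observation (that $h\Lambda_h$ is dominated by its value at $h=1$ since $hN^{1/h}$ decreases on $[1,\ln N]$) goes through, and the upper bound $T \leq N^2/(8\log_2 N)$ holds for large $N$ as you say. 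Your instinct to check these constraints, which the paper's proof silently skips, was the right one --- you should have followed it through to notice that the check fails as literally stated.

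Also, your appeal to ``slack in Hoeffding'' to turn the $\geq \tfrac12$ from the union bound into the stated strict $>\tfrac12$ is hand-wavy, but the paper itself commits the same imprecision (its proof concludes $\Pr[\bigcup E_h] \leq \tfrac12$, which gives only $\geq$), so this is not a gap peculiar to your argument.
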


\begin{proof}
	For $h \in [\log N]$, define $E_h$ to be the event that $\norm{\hat{p}_h^\star}_2 > \frac{\eps}{2}$.

  By \Cref{thm:random-shift-works}, we know that for $\Lambda_h = 4\left(\frac{\eps}{2}\right)^{-2/h} N^{1/h} \ln\left(hN^3\right)$, we have
	\[ \Pr[E_h] \leq \frac{1}{2\log_2 N} \]

	Taking the union bound over all $h \in [\log_2 N]$, we get $\Pr[\bigcup_{h = 1}^{\log N} E_h] \leq \frac{1}{2}$.

\end{proof}

\begin{cor}[\textbf{\Cref{thm:main-thm}.1}]
	For sufficiently large $N$, there exists a randomized algorithm to compute a shift connection schedule $\bm{\sigma}$ on $N$ nodes which succeeds with probability at least $\frac{1}{2}$ and is efficiently certifiable.
	Additionally for each $h\in \{ 1,\ldots,\log_2N \}$, there also exists an oblivious routing protocol on $\bm{\sigma}$ which guarantees throughput $\frac{1}{2h}(1-\eps)$ and achieves maximum latency $\bigotilde_\eps\left(L^\star(h,N)\right)$.
\end{cor}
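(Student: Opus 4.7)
The plan is to chain together \Cref{cor:multiple-h-small-fourier} and \Cref{thm:small-fourier-implies-orn}, with a brief symmetry argument to handle the backwards distribution. The randomized algorithm is straightforward: sample $\bm{\sigma} = s_0, \ldots, s_{T-1}$ with each $s_k$ independently and uniformly chosen from $\zmodn$, using period length $T$ as in \Cref{cor:multiple-h-small-fourier}. First, I would verify that this choice of $T$ simultaneously satisfies $T \ge h\Lambda_h$ and $T \le N^2/(8\log_2 N)$ for every $h \in \{1,\ldots,\log_2 N\}$ once $N$ is large enough, which is the reason for the ``sufficiently large $N$'' hypothesis.

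Next, I would apply \Cref{cor:multiple-h-small-fourier} directly to conclude that with probability strictly greater than $1/2$, the forward-spraying Fourier bound $\|\hat{p}_{h,t}^\star\|_2 \le \eps/2$ holds simultaneously for every $h \in \{1,\ldots,\log_2 N\}$ and every $t \in \{0,\ldots,T-1\}$, with the prescribed phase length $\Lambda_h = 4(\eps/2)^{-2/h} N^{1/h} \ln(hN^3)$. To cover the backward-spraying polynomials $q_{h,t}$ needed by \Cref{thm:small-fourier-implies-orn}, I would observe that when $s_k$ is uniform on $\zmodn$, the random variable $N - s_k$ is also uniform, so the same Hoeffding and union-bound argument used to prove \Cref{thm:random-shift-works} applies verbatim to the $q_{h,t}$ polynomials. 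Slightly tightening the constant in $T$ (or, equivalently, sharpening the failure probability in \Cref{thm:random-shift-works} from $1/(2\log_2 N)$ to $1/(4\log_2 N)$) makes the union bound over both the $p$- and $q$-events yield joint failure probability at most $1/2$.

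Conditioned on the joint event, I would invoke \Cref{thm:small-fourier-implies-orn} once for each $h \in \{1,\ldots,\log_2 N\}$ with phase length $\Lambda_h$. In general $\Lambda_h h$ need not divide $T$, so we are in the case $\mathcal{A} = \{0,\ldots,T-1\}$, and the simultaneous Fourier bound on every $t$ is exactly the hypothesis required. The theorem's conclusion hands us an oblivious routing protocol on $\bm{\sigma}$ guaranteeing throughput $\frac{1}{2h}(1-\eps)$ and maximum latency at most $2(h+1)\Lambda_h$. A short calculation confirms
\[
  2(h+1)\Lambda_h \;=\; O_\eps\bigl((h+1)\, N^{1/h}\, \log N\bigr) \;=\; O_\eps\bigl(h N^{1/h}\,\log N\bigr) \;=\; \bigotilde_\eps\bigl(L^\star(h,N)\bigr),
\]
which is the latency bound claimed in \Cref{thm:main-thm}.1.

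Finally, for efficient certifiability, I would note that given $\bm{\sigma}$ one can evaluate each $p_{h,t,j}$ and $q_{h,t,j}$ at each $N$-th root of unity by direct summation (or FFT) in polynomial time, multiply these vectors coordinate-wise to obtain $\hat{p}_{h,t}$ and $\hat{q}_{h,t}$, zero out the first coordinate, and compute the $2$-norms. Since there are only polynomially many $(h,t)$ pairs, the full certificate check runs in polynomial time in $N$. The main obstacle is really bookkeeping: carefully arranging the constants so that (i) $T$ dominates every $h\Lambda_h$ while remaining $\le N^2/(8\log_2 N)$, and (ii) the union of the $p$- and $q$-failure events still has probability below $1/2$. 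No new technical ingredient beyond \Cref{thm:small-fourier-implies-orn} and \Cref{cor:multiple-h-small-fourier} should be required.
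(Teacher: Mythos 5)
Your proposal is correct and follows essentially the same route as the paper's: sample a uniformly random shift schedule, invoke \Cref{cor:multiple-h-small-fourier} to bound the Fourier vectors with probability above $\frac{1}{2}$, and then apply \Cref{thm:small-fourier-implies-orn}. You are in fact more careful than the paper's terse proof, which only mentions certifying the forward vectors $\hat{p}_t^\star$; your symmetry observation for the backward polynomials is correct, and is even tighter than you state---each $\hat{q}_{t,j}[a]$ is the complex conjugate of $\hat{p}_{t,j}[a]$, so $\norm{\hat{q}_{h,t}^\star}_2 = \norm{\hat{p}_{h,\,t+h\Lambda_h}^\star}_2$, and the universal quantifier over all starting timesteps $t$ in \Cref{cor:multiple-h-small-fourier} therefore already covers both families with no tightening of constants required.
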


\begin{proof}
  The randomized algorithm samples a uniformly random shift schedule
    $\bm{\sigma} = s_0, s_1, \ldots, s_{T-1}$. The certification procedure
    works as follows: for every $h \leq \log_2(N)$, compute the vectors
    $\hat{p}_t^{\star}$ defined in \Cref{thm:small-fourier-implies-orn}
    for every $t < T$, and verify that each of the vectors has 2-norm
    at most $\frac{\eps}{2}$. By \Cref{cor:multiple-h-small-fourier}
    the random shift schedule passes this test with probability greater
    than $\frac12$, and by \Cref{thm:small-fourier-implies-orn}, when
    the test passes the routing protocol defined by VLB with Leakage
    will satisfy the claimed throughput guarantee.
  
\end{proof}

\section{Convolution Schedule}\label{sec:convolution-sched}

We devote this section to proving the below theorem. From this, \Cref{thm:main-thm}.2 becomes a corollary.

\begin{thm} \label{thm:main-convolve-thm}
	Given any set of constants $\{h_1,h_2,\ldots,h_\ell\}$ and any constant $\eps>0$, then for sufficiently large network sizes $N$ there exists a randomized algorithm to compute a shift connection schedule $\bm{\sigma}$ on $N$ nodes which succeeds with probability at least $\frac{1}{2}$ and is efficiently certifiable.
	Additionally for each $j\in\{1,\ldots,\ell\}$, there also exists an oblivious routing protocol on $\bm{\sigma}$ which guarantees throughput $\frac{1}{2h_j}(1-\eps)$ and achieves maximum latency $\bigo(L^\star(h_j,N))$, where the constant in the $\bigo(\cdot)$ depends only on $\eps$ and on the least common multiple of $h_1,\ldots,h_{\ell}$.
\end{thm}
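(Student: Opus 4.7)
The plan is to take $\bm{\sigma}$ to be the $H$-fold convolution product of a short, purely random \emph{base schedule} $\bm{\sigma}_0$, where $H := \mathrm{lcm}(h_1,\ldots,h_\ell)$. The benefit of this construction is that universality across all of $\{h_1,\dots,h_\ell\}$ reduces to a single bound on a Fourier $2H$-norm of the base schedule, which can be obtained without any union bound and hence without a $\log N$ factor.

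First I formalise the \emph{convolution product}: given shift schedules $\bm{\alpha},\bm{\beta}$ of periods $T_\alpha,T_\beta$, their convolution $\bm{\alpha}*\bm{\beta}$ has period $T_\alpha T_\beta$ and shift $\alpha_i+\beta_j$ at position $iT_\beta+j$. Let $\bm{\sigma}:=\bm{\sigma}_0^{*H}$ of period $T=T_0^H$, with $\bm{\sigma}_0$ having shifts $s_0,\dots,s_{T_0-1}$. Indexing positions in mixed-radix base $T_0$, the shift at position $p=\sum_k d_k T_0^k$ is $\sum_k s_{d_k}$. The key observation is that a window of $\Lambda:=T_0^m$ consecutive positions starting at a multiple of $\Lambda$ has generating polynomial $z^c\cdot p_0(z)^m$, where $p_0(z) = \frac{1}{T_0}\sum_i z^{s_i}$ and $c$ is a constant depending on the higher digits; hence $|\hat{p}_{\mathrm{win}}[k]|=|\hat{p}_0[k]|^m$ for every $k$.

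For each $h_j$ set $\Lambda_{h_j}:=T_0^{H/h_j}$, and choose $T_0$ to be a multiple of $H$; this ensures $h_j\Lambda_{h_j}\mid T$, so in \Cref{thm:small-fourier-implies-orn} the set $\mathcal{A}$ consists of multiples of $h_j\Lambda_{h_j}$, each of which is also a multiple of $\Lambda_{h_j}$. Each of the $h_j$ phase polynomials is therefore an aligned window in the sense above with $m=H/h_j$, and element-wise multiplication of phase-level Fourier vectors gives
\[
|\hat{p}_{h_j,t}[k]|=\prod_{j'=1}^{h_j}|\hat{p}_0[k]|^{H/h_j}=|\hat{p}_0[k]|^{H}
\quad\text{uniformly in } k,t\in\mathcal{A},j,
\]
and consequently $\|\hat{p}_{h_j,t}^\star\|_2^2 = \sum_{k\neq 0}|\hat{p}_0[k]|^{2H} = \|\hat{p}_0^\star\|_{2H}^{2H}$. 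The backwards polynomial $q_{h_j,t}$ satisfies $|\hat{q}_{h_j,t}[k]|=|\hat{p}_{h_j,t}[k]|$ because the reversed base schedule $N-s_i$ is identically distributed to $\bm{\sigma}_0$. The entire theorem thus reduces to producing $\bm{\sigma}_0$ with $\|\hat{p}_0^\star\|_{2H} \leq (\eps/2)^{1/H}$.

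To build such a base schedule I would sample $T_0$ shifts i.i.d.\ uniformly from $\zmodn$, with $T_0$ the smallest multiple of $H$ exceeding $C_{\eps,H}\cdot N^{1/H}$. For $k\neq 0$, $\hat{p}_0[k]=T_0^{-1}\sum_i e^{2\pi i k s_i/N}$ is a normalised sum of $T_0$ i.i.d.\ mean-zero unit-modulus terms, so a direct moment computation yields $\E|\hat{p}_0[k]|^{2H}=\bigo_H(T_0^{-H})$ and hence $\E\|\hat{p}_0^\star\|_{2H}^{2H}=\bigo_H(NT_0^{-H})$, which is smaller than $\frac{1}{2}(\eps/2)^{2}$ by the choice of $T_0$. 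To upgrade this expectation bound into a constant-probability tail bound, I would invoke a martingale inequality in the Banach space $\ell^{2H}$ in the style of Pisier \cite{pisier_2016}: view $\hat{p}_0^\star$ as the terminal value of a martingale indexed by $s_0,\dots,s_{T_0-1}$, and exploit the fact that $\ell^{2H}$ has a type-$2$ constant depending only on the (fixed) exponent $2H$ and not on the dimension $N-1$. This dimension-freeness of the resulting tail bound is exactly what allows the construction to avoid the $\log N$ penalty of \Cref{thm:random-shift-works}.

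The main obstacle will be this martingale step: confirming that Pisier's Banach-space martingale inequality applies with an absolute constant depending only on $H$ and yields a $\frac{1}{2}$-probability bound at the threshold $(\eps/2)^{1/H}$. Once that is in hand, \Cref{thm:small-fourier-implies-orn} produces, for each $h_j$, an oblivious routing protocol of throughput $\frac{1}{2h_j}(1-\eps)$ and maximum latency at most $2(h_j+1)\Lambda_{h_j}=\bigo_{\eps,H}(h_jN^{1/h_j})=\bigo_{\eps,H}(L^\star(h_j,N))$. Efficient certifiability follows because $\|\hat{p}_0^\star\|_{2H}$ can be computed exactly in time polynomial in $T_0$ and $N$, and the construction succeeds iff this quantity does not exceed $(\eps/2)^{1/H}$.
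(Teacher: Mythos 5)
Your proposal matches the paper's proof in its essential architecture: take $H=\mathrm{lcm}(h_1,\ldots,h_\ell)$, form the $H$-fold convolution of a random base schedule of period $T_0=\Theta_{\eps,H}(N^{1/H})$, use aligned windows of length $T_0^{H/h_j}$ as the phases for hop-count $h_j$, and reduce everything to a single bound $\|\hat p_0^\star\|_{2H}\le(\eps/2)^{1/H}$ on the base schedule, which then feeds into \Cref{thm:small-fourier-implies-orn} via element-wise products of Fourier vectors. Two small points of departure are worth flagging. First, by explicitly repeating a single base schedule you obtain the identity $\|\hat p_{h_j,t}^\star\|_2=\|\hat p_0^\star\|_{2H}^H$ exactly, whereas the paper allows possibly distinct base schedules $S_1,\ldots,S_H$ and uses H\"older's inequality to get the same conclusion; your version is tighter but the paper's footnote already observes that $S_1=\cdots=S_H$ is permissible, so this is a presentational difference rather than a new idea. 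Second, your tail-bound step is muddled: once you have the moment bound $\E\|\hat p_0^\star\|_{2H}^{2H}=\bigo_H(NT_0^{-H})\le\frac12(\eps/2)^2$, Markov's inequality immediately gives $\Pr[\|\hat p_0^\star\|_{2H}>(\eps/2)^{1/H}]\le\frac12$, and there is nothing left to ``upgrade'' with Pisier. (The direct moment computation does go through: for $k\ne 0$ each $e^{2\pi i k s_r/N}$ is uniform on $M$-th roots of unity with $M\ge 2$, so in the expansion of $\E|\sum_r e^{2\pi i k s_r/N}|^{2H}$ any index appearing exactly once kills the term, leaving $\bigo_H(T_0^H)$ surviving configurations.) The paper instead uses \Cref{thm:pisier} to bound $\E\|\hat p_0^\star\|_{2H}^2$, which is a different but interchangeable route to the same Markov step. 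If you commit to the direct moment computation, delete the appeal to Pisier entirely; if you prefer Pisier, you should not also claim the raw moment bound. Either way the argument is sound.
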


In order to prove this theorem, we describe how to build a shift connection schedule with desirable Fourier properties by convolving smaller shift schedules together, each of which has desirable Fourier properties that will allow us to use \Cref{thm:small-fourier-implies-orn}.
Building up schedules in this way will only guarantee that a constant number of throughput-latency tradeoffs can achieved on the same connection schedule.
However, we will derive a latency bound that is provably optimal up to a constant factor \cite{optimal-orns}, unlike in \Cref{sec:rand-shift-sched}.
This is because instead of requiring a large union bound, we will be able to make use of the following inequality 
pertaining to martingales in Banach spaces.

\begin{prop}[\textbf{Prop 4.35 in \cite{pisier_2016}}]\label{thm:pisier}
	Let $\left( \Omega,A,P \right)$ be a probability space.
	Let $Y_n$ be a sequence of independent mean-zero random variables in Banach space $B$.
	Assume that $B$ is of type $q$, and that the series $\sum_n Y_n$ is a.s. convergent.
	Then for $0 < \alpha < \infty$, we have
	\[  \E\left[ \norm{\sum_n Y_n}^\alpha \right] \leq \beta\E\left[\left( \sum_n\norm{Y_n}^q \right)^{\alpha/q} \right] \]
	where $\beta$ is a constant that depends only on $\alpha$ and $B$.
\end{prop}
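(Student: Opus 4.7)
The plan is to construct the required universal schedule as an $H$-fold self-convolution of a small random base schedule, where $H = \mathrm{lcm}(h_1, \ldots, h_\ell)$. Define the convolution of two shift schedules of periods $T, T'$ by $(\bm{\sigma} * \bm{\tau})_k = s_{\lfloor k/T' \rfloor} + \tau_{k \bmod T'}$, producing a schedule of period $TT'$, and write $\bm{\sigma}^{*H}$ for the $H$-fold self-convolution. If the base $\bm{\sigma}$ has period $T_0$ with independent uniform shifts $s_0, \ldots, s_{T_0-1}$, the shift of $\bm{\sigma}^{*H}$ at time $k$ becomes $\sum_{a=0}^{H-1} s_{i_a}$, where $(i_{H-1}, \ldots, i_0)$ are the base-$T_0$ digits of $k$. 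This digit decomposition is the algebraic engine of the whole argument.

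Running $h_j$-hop packet spraying on $\bm{\sigma}^{*H}$ with phase length $\Lambda_j = T_0^{H/h_j}$ (valid because $h_j \mid H$) makes $\Lambda_j h_j = T_0^H$ equal the full period, so \Cref{thm:small-fourier-implies-orn} only requires the Fourier bound at the single timestep $t = 0$. For that $t$, in the $j'$-th phase the lowest $H/h_j$ base-$T_0$ digits of $k$ vary independently while the higher $H - H/h_j$ digits are fixed at the base-$T_0$ representation of $j'-1$. The $j'$-th phase polynomial therefore factors as
\[ p_{0,j'}(z) \;=\; z^{c_{j'}} \cdot p_\sigma(z)^{H/h_j}, \qquad p_\sigma(z) := \tfrac{1}{T_0}\textstyle\sum_{i} z^{s_i}, \]
for some integer $c_{j'}$ depending on the base shifts and $j'$. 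Multiplying over all phases yields $p_0(z) = z^{C} \cdot p_\sigma(z)^H$, whose values at $N$-th roots of unity differ from those of $p_\sigma^H$ only by unit-modulus factors, so
\[ \|\hat p_0^\star\|_2^2 \;=\; \sum_{a=1}^{N-1} |\hat p_\sigma[a]|^{2H} \;=\; \|\hat p_\sigma^\star\|_{2H}^{2H}. \]
The same identity holds for the backward polynomial $q_0$.

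This reduces the entire universality requirement to the single bound $\|\hat p_\sigma^\star\|_{2H} \leq (\eps/2)^{1/H}$ on the Fourier vector of the base schedule, which lives in a Banach space whose exponent $2H$ is independent of $N$. To establish it, I would write $\hat p_\sigma^\star = \frac{1}{T_0}\sum_{k=0}^{T_0-1} Y_k$ where $Y_k$ is the vector indexed by $a \in \{1, \ldots, N-1\}$ with entries $e^{2\pi i a s_k / N}$; the $Y_k$ are iid, mean-zero, and satisfy $\|Y_k\|_{2H} = (N-1)^{1/(2H)}$. Since $\ell^{2H}$ is of type $2$, \Cref{thm:pisier} with $\alpha = q = 2$ yields
\[ \E\bigl[\|\hat p_\sigma^\star\|_{2H}^2\bigr] \;\leq\; \beta_H \cdot \frac{(N-1)^{1/H}}{T_0}, \]
for a constant $\beta_H$ depending only on $H$. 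Taking $T_0$ to be a sufficiently large constant-times-$N^{1/H}$ multiple makes this at most $\tfrac{1}{2}(\eps/2)^{2/H}$, and Markov's inequality then delivers the required norm bound with probability $\geq 1/2$. \Cref{thm:small-fourier-implies-orn} finally yields for each $h_j$ a routing protocol with throughput $\frac{1}{2h_j}(1-\eps)$ and latency $O(h_j \Lambda_j) = O_{\eps,H}(h_j N^{1/h_j}) = O_{\eps,H}(L^\star(h_j, N))$.

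The main obstacle is the clean verification of the phase factorization for every index $j' \in [h_j]$ and simultaneously for every $h_j$; this hinges on the inequality $T_0^{H - H/h_j} \geq h_j$, so that the higher base-$T_0$ digits of every time in the $j'$-th phase are constant, which is comfortably satisfied since $T_0 = \Theta(N^{1/H})$ and each $h_j$ is constant. A secondary subtlety is that the type-$2$ constant of $\ell^{2H}$ grows with $H$ (roughly like $\sqrt{H}$), but since $H = \mathrm{lcm}(h_1, \ldots, h_\ell)$ is fixed the hidden constants in the latency bound depend only on $\eps$ and $H$, as the theorem claims. Efficient certification reduces to computing the $N$ Fourier coefficients of $p_\sigma$ from the $T_0$ random shifts and numerically verifying a single $\ell^{2H}$-norm bound.
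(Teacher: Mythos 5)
Your proposal does not prove the statement it was assigned. The statement is Pisier's Proposition 4.35: a general moment inequality for sums of independent mean-zero random variables taking values in a Banach space $B$ of type $q$, namely $\E\left[\norm{\sum_n Y_n}^\alpha\right] \leq \beta\,\E\left[\left(\sum_n \norm{Y_n}^q\right)^{\alpha/q}\right]$ with $\beta$ depending only on $\alpha$ and $B$. What you wrote is instead an argument for the paper's convolution-schedule result (\Cref{thm:main-convolve-thm}): you build the $H$-fold self-convolution of a random base schedule, factor the phase polynomials, reduce universality to an $\ell^{2H}$-norm bound on the base schedule's Fourier vector, and then---critically---you invoke \Cref{thm:pisier} itself (``Since $\ell^{2H}$ is of type 2, \Cref{thm:pisier} with $\alpha = q = 2$ yields\ldots'') to obtain the expected-norm bound. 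Using the assigned statement as a black box inside a proof of a different theorem leaves the statement itself entirely unestablished; relative to the assigned claim the argument is circular. (For reference, the material you wrote corresponds to the paper's \Cref{lem:applied-pisier-bound} and the proof of \Cref{thm:main-convolve-thm}; the proposition itself is not proved in the paper at all but quoted from \cite{pisier_2016}.)

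A proof of the actual statement would have to engage the definition of type. The standard route is: first symmetrize the independent mean-zero variables $Y_n$, comparing the moments of $\norm{\sum_n Y_n}$ with those of the randomized sum $\norm{\sum_n \eps_n Y_n}$ for independent Rademacher signs $\eps_n$ (a Fubini/Jensen argument using mean zero, costing only an absolute constant factor); then condition on the values $Y_n = y_n$ and apply the type-$q$ inequality $\E_{\eps}\norm{\sum_n \eps_n y_n}^q \leq C_q(B)^q \sum_n \norm{y_n}^q$ to the Rademacher average; finally pass from exponent $q$ to an arbitrary exponent $\alpha \in (0,\infty)$ via the Kahane--Khintchine equivalence of moments for Rademacher sums in Banach spaces, with the a.s.\ convergence of $\sum_n Y_n$ justifying the limiting step. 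None of these ingredients---symmetrization, the Rademacher type inequality, Kahane--Khintchine---appears in your write-up, so the assigned inequality is assumed rather than proved.
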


We will now state and prove the following Lemma, which shows the existence of Base Schedules with desirable Fourier properties.

\begin{lem} \label{lem:applied-pisier-bound}
	Given a constant $\eps>0$, a positive integer $H$, let $\beta$ be a constant that depends on $H$.
	Then given a network size $N$ and 	uniformly random shifts $s_1,\ldots,s_\Lambda$ drawn independently from $\zmodn$ with $\Lambda \geq \frac{2\beta (2N)^{1/H}}{\eps^{1/H}}$, let polynomial $p$ be
	\[ p(z) = \frac{1}{\Lambda} \left( \sum_{k=1}^{\Lambda} z^{s_{k}} \right)  \]
	and let $\hat{p}^\star$ be the Fourier transform vector of $p$ with first index set to $0$.
	Then
	\[ \E\left[ \norm{\hat{p}^\star}_{2H} \right] \leq \frac{1}{2}\left(\frac{\eps}{2}\right)^{1/H} . \]
\end{lem}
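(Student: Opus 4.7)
My plan is to represent $\hat{p}^{\star}$ as a sum of i.i.d.\ mean-zero random vectors in the Banach space $\ell^{2H}$ and then apply Pisier's Banach-space inequality (\Cref{thm:pisier}) to control its expected norm. The key structural fact used is that $\ell^{2H}$ has type $2$ for every integer $H \geq 1$, with a type-$2$ constant that depends only on $H$ and not on the ambient dimension.

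Concretely, let $\xi = e^{2\pi i/N}$. For each $k \in \{1,\ldots,\Lambda\}$, define the random vector $Y_k \in \mathbb{C}^{N-1}$ whose $a$-th coordinate (for $a \in \{1,\ldots,N-1\}$) is $(Y_k)_a = \tfrac{1}{\Lambda}\xi^{a s_k}$. Since $s_k$ is uniform on $\zmodn$ and $a \not\equiv 0 \pmod N$, each coordinate satisfies $\E[(Y_k)_a] = 0$, so the $Y_k$ are i.i.d.\ mean-zero random vectors and $\hat{p}^{\star} = \sum_{k=1}^{\Lambda} Y_k$ (zeroing out the $0$-th coordinate of $\hat{p}$ simply removes the deterministic contribution at $a = 0$). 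Deterministically, $|(Y_k)_a| = \tfrac{1}{\Lambda}$ for every $a$, so $\|Y_k\|_{2H} \leq \tfrac{N^{1/(2H)}}{\Lambda}$ almost surely.

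Applying \Cref{thm:pisier} with $\alpha = 1$ and $q = 2$ then gives
\[
\E\bigl[\|\hat{p}^{\star}\|_{2H}\bigr] \;\leq\; \beta \cdot \E\!\left[\Bigl(\sum_{k=1}^{\Lambda}\|Y_k\|_{2H}^{2}\Bigr)^{\!1/2}\right] \;\leq\; \beta \cdot \sqrt{\Lambda} \cdot \frac{N^{1/(2H)}}{\Lambda} \;=\; \frac{\beta\, N^{1/(2H)}}{\sqrt{\Lambda}},
\]
where $\beta$ depends only on $H$ through the Pisier constant. Substituting the hypothesized lower bound on $\Lambda$ and simplifying shows that this quantity is at most $\tfrac{1}{2}(\eps/2)^{1/H}$, after absorbing small numerical factors into $\beta$.

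The main obstacle I foresee is largely bookkeeping: one must verify that the type-$2$ constant of $\ell^{2H}$ is independent of the ambient dimension $N-1$, which is precisely the dimension-independence property that enables a stronger bound than the union-bound-plus-Hoeffding argument of \Cref{sec:rand-shift-sched}. An alternative route that sidesteps the black-box invocation of Pisier would be to expand $\E[\|\hat{p}^{\star}\|_{2H}^{2H}] = \sum_{a=1}^{N-1}\E[|\hat{p}[a]|^{2H}]$, bound each complex moment $\E[|\hat{p}[a]|^{2H}]$ by a direct combinatorial counting of matched index tuples $(k_1,\ldots,k_H,k_1',\ldots,k_H')$ with $\sum s_{k_i} \equiv \sum s_{k_i'} \pmod N$, and then conclude via Jensen's inequality; this yields the same asymptotic scaling with an explicit constant.
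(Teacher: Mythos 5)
Your approach is essentially the same as the paper's: decompose $\hat{p}^{\star}$ as a sum of i.i.d.\ mean-zero Fourier-coefficient vectors in $\ell^{2H}$, invoke Pisier's inequality (\Cref{thm:pisier}) using that $\ell^{2H}$ has type $2$ with a dimension-free constant, and then compute/bound the norms of the summands. The only cosmetic difference is that you apply Pisier with $\alpha=1$ directly, whereas the paper takes $\alpha=2$ to bound $\E\bigl[\norm{\hat{p}^\star}_{2H}^2\bigr]$ and then appeals to $\E[X]\le\sqrt{\E[X^2]}$; both routes give a bound of the form $\bigo\bigl(\sqrt{\beta}\,N^{1/(2H)}/\sqrt{\Lambda}\bigr)$ for the expected norm, so the difference buys nothing substantive.

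One caveat you should not wave away: the sentence claiming that substituting $\Lambda \geq 2\beta(2N)^{1/H}/\eps^{1/H}$ yields the target $\tfrac12(\eps/2)^{1/H}$ ``after absorbing small numerical factors into $\beta$'' is not correct as stated. Plugging the stated $\Lambda$ into $\beta N^{1/(2H)}/\sqrt{\Lambda}$ gives a quantity of order $\sqrt{\beta}\,(\eps/2)^{1/(2H)}$, whose $\eps$-exponent is $1/(2H)$ rather than $1/H$; this is not a constant-factor discrepancy and cannot be absorbed into $\beta$. (For the stated conclusion one wants $\Lambda \gtrsim \beta' N^{1/H}/\eps^{2/H}$, which changes only the $\eps$-dependence and leaves the downstream asymptotics in $N$ and $H$ intact.) The paper's own proof shares this issue---it derives $\E[\norm{\hat{p}^\star}_{2H}^2] \leq \tfrac12(\eps/2)^{1/H}$ and then needs a square root to get back to $\E[\norm{\hat{p}^\star}_{2H}]$---so you are not missing an idea the paper has, but a careful write-up should either take $\Lambda = \Theta(\beta N^{1/H}/\eps^{2/H})$ or state the lemma's conclusion with the exponent $1/(2H)$ and adjust the Markov step that follows.
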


\begin{proof}
	Because the shifts $s_k$ are independent and uniformly distributed, $\hat{p}^\star$ is the average of i.i.d. random vectors, each of which is the Fourier transform of a single monomial, $z^{s_k}$.
	By padding these $N$-dimensional vectors with an infinite sequence of 0's, we will consider them as elements $\hat{p}_k^{\star}$ of the infinite-dimensional complex Banach space $\ell^{2H}$.
	The random vectors $\hat{p}_k^\star$ have mean $\bm{0}$ and are uniformly distributed over the set of $N$ possible Fourier coefficient vectors.

	Since $\E\left[ X \right] \leq \sqrt{\E[X^2]}$ for any random variable $X$, let us now try to bound $\E\left[ \norm{\hat{p}^\star}_{2H}^2 \right] = \E\left[ \norm{\sum_{k=1}^{\Lambda} \frac{1}{\Lambda} \hat{p}_k^\star}_{2H}^2\right]$.
	Apply \Cref{thm:pisier} by interpreting $B$ as $\ell^{2H}$, the random variables $Y_k$ as our random vectors $\frac{1}{\Lambda} \hat{p}_k^\star$, and $\alpha=2$.
	Note that our Banach space has type $2$ (and co-type $2H$).
	We can then derive
	\begin{align*}
		\E\left[ \norm{\sum_{k=1}^{\Lambda} \frac{1}{\Lambda} \hat{p}_k^\star}_{2H}^2\right] \leq \beta \E\left[ \sum_{k=1}^{\Lambda} \norm{\frac{1}{\Lambda} \hat{p}_k^\star}_{2H}^{2} \right]
	\end{align*}
	for a constant $\beta$ that depends only the Banach space $\ell^{2H}$ and the constant $\alpha=2$.
	That is, $\beta$ is only a function of $H$.

	We may decompose this sum by linearity of expectation.
	\begin{align*}
		\beta \E \left[ \sum_{k=1}^{\Lambda} \norm{\frac{1}{\Lambda} \hat{p}_k^\star}_{2H}^{2} \right]
		 & = \beta \sum_{k=1}^{\Lambda} \E\left[ \norm{\frac{1}{\Lambda} \hat{p}_k^\star}_{2H}^{2} \right]                                                                            \\
		 & = \beta \sum_{k=1}^{\Lambda} \sum_{j=1}^{N} \frac{1}{N} \left( \sum_{\gamma=1}^{N-1} \abs{\frac{1}{\Lambda} \exp\left(\frac{2\pi i}{N}\gamma j\right)}^{2H} \right)^{2/2H} \\
		 & = \beta \sum_{k=1}^{\Lambda} \frac{1}{N} \sum_{j=1}^{N} \left( (N-1) \left(\frac{1}{\Lambda}\right)^{2H} \right)^{1/H}                                                    \\
		& = \beta \frac{(N-1)^{1/H}}{\Lambda} <
                   \frac{1}{2}\left(\frac{\eps}{2}\right)^{1/H}
	\end{align*}
        by our assumption that $\Lambda \geq \frac{2\beta (2N)^{1/H}}{\eps^{1/H}}$.
\end{proof}

\noindent There is one more theorem to state before we can prove \Cref{thm:main-convolve-thm}. 

\begin{thm}[\textbf{H\"{o}lder's Inequality}]\label{thm:holders-ineq}
	Let $\bm{v_1},\ldots,\bm{v_k}$ be real or complex-valued vectors, and suppose $r$ and $q_1,\ldots,q_k$ are positive numbers such that $\frac{1}{r} = \sum_{j=1}^{k} \frac{1}{q_j}$.
	Then
	\[ \norm{\bigodot_{j=1}^{k} \bm{v_j} }_r \leq \prod_{j=1}^{k} \norm{\bm{v_j}}_{q_j} . \]
\end{thm}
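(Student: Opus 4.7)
The plan is to derive this generalized Hölder inequality from the classical two-factor Hölder inequality by an induction on the number $k$ of vectors. The case $k=1$ is trivial (both sides equal $\norm{v_1}_r$), and the case $k=2$ is the crucial computation to which the entire induction reduces.

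For $k=2$, suppose $\tfrac{1}{r} = \tfrac{1}{q_1} + \tfrac{1}{q_2}$. Define $p_j = q_j/r$ for $j=1,2$; then $\tfrac{1}{p_1} + \tfrac{1}{p_2} = 1$, so the classical two-factor Hölder inequality applied to the nonnegative sequences $a_i = \abs{v_{1,i}}^r$ and $b_i = \abs{v_{2,i}}^r$ with conjugate exponents $p_1, p_2 \ge 1$ gives
\[
\sum_i \abs{v_{1,i}}^r \abs{v_{2,i}}^r \;\le\; \Bigl(\sum_i \abs{v_{1,i}}^{q_1}\Bigr)^{r/q_1} \Bigl(\sum_i \abs{v_{2,i}}^{q_2}\Bigr)^{r/q_2}.
\]
Taking $r$-th roots yields $\norm{v_1 \odot v_2}_r \le \norm{v_1}_{q_1}\,\norm{v_2}_{q_2}$.

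For the inductive step, assume the theorem for collections of fewer than $k$ vectors. Given $v_1,\ldots,v_k$ with $\tfrac{1}{r} = \sum_{j=1}^{k} \tfrac{1}{q_j}$, choose $s>0$ so that $\tfrac{1}{s} = \sum_{j=2}^{k} \tfrac{1}{q_j}$; then $\tfrac{1}{r} = \tfrac{1}{q_1} + \tfrac{1}{s}$. Applying the $k=2$ case to the pair $\bigl(v_1,\,\bigodot_{j=2}^{k} v_j\bigr)$ with exponents $(q_1, s)$ gives
\[
\norm{\bigodot_{j=1}^{k} v_j}_r \;\le\; \norm{v_1}_{q_1}\,\norm{\bigodot_{j=2}^{k} v_j}_s,
\]
and the induction hypothesis bounds the last factor by $\prod_{j=2}^{k}\norm{v_j}_{q_j}$, completing the induction.

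The main subtlety to watch is that the theorem statement allows $r < 1$, which is precisely the regime needed in the paper's downstream application (where copies of an $\ell^{2H}$ vector are multiplied entrywise to land in a smaller-index $\ell^r$). Classical Hölder is most often stated only for conjugate exponents $\ge 1$, but this causes no difficulty: in the $k=2$ base case the exponents $p_j = q_j/r$ always satisfy $p_j \ge 1$, because $1/p_j = r/q_j$ and $r \le q_j$ follow from $\tfrac{1}{r} = \sum_\ell \tfrac{1}{q_\ell} \ge \tfrac{1}{q_j}$. Degenerate cases such as some $q_j = \infty$ are handled either by pulling the corresponding essential-supremum factor out of the sum directly, or by taking a limit in the finite-$q_j$ inequality.
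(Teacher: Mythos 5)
Your proof is correct. The paper states this generalized H\"{o}lder inequality as a known fact and gives no proof of its own, so there is nothing to compare against; your argument is the standard one, and the details check out: with $p_j = q_j/r$ the exponents are conjugate ($\sum_j r/q_j = 1$) and satisfy $p_j \ge 1$ because $\tfrac{1}{r} \ge \tfrac{1}{q_j}$ forces $r \le q_j$, so the two-factor case applied to $\abs{v_{1,i}}^r$ and $\abs{v_{2,i}}^r$ followed by taking $r$-th roots is valid, and the induction via $\tfrac{1}{s} = \sum_{j\ge 2} \tfrac{1}{q_j}$ goes through. Your remark about $r<1$ is apt in general, though in the paper's applications the outer exponent is $r=2$ with $q_j = 2H$ (or $2gh$), so only the fact that the product of many large-index norms controls the $2$-norm is actually used.
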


We now turn to proving \Cref{thm:main-convolve-thm}.

\begin{proof}
	Recall that $\eps>0$, a network size $N$, and a set of hop-counts $\{h_1,\ldots,h_\ell\}$ are given.
	Set $H$ equal to the least common multiple of $\{h_1,\ldots,h_\ell\}$. 
	Let $\Lambda = \left\lceil \frac{2\beta(2N)^{1/H}}{\eps^{1/H}} \right\rceil.$
	Then using \Cref{lem:applied-pisier-bound}, we can sample a shift connection schedule with the property
	\[ \E\left[ \norm{\hat{p}^\star}_{2H} \right] \leq \frac{1}{2}\left(\frac{\eps}{2}\right)^{1/H} . \]
	Then by Markov's Inequality, with probability at least $\frac{1}{2}$, we sample a schedule with 
	\[ \norm{\hat{p}^\star}_{2H} \leq 2\cdot \E\left[ \norm{\hat{p}^\star}_{2H} \right] = \left(\frac{\eps}{2}\right)^{1/H} . \]
	Call this a {\em Base Schedule}, and find $H$ Base Schedules\footnote{We may use $S_1=\hdots=S_H$, which only requires sampling a single Base Schedule with the desired properties.} $S_1,\hdots,S_H$, each of length $\Lambda$, and each with $\norm{\hat{p_j}^\star}_{2H} \leq \left(\frac{\eps}{2}\right)^{1/H}$.
	Note that certifying that any given Base Schedule
          obeys the inequality
          $\norm{\hat{p}^\star}_{2H} \leq \left(\eps/2\right)^{1/H}$
          requires only
        polynomial time in the number of nodes, $N$.

	Denote Base Schedule $S_j = \{ S_j(0),S_j(1),\ldots,S_j(\Lambda-1) \}$.
	Convolve these Base Schedules together to construct a single shift connection schedule with period $T = \Lambda^H$.
	To do so, represent timesteps as $H$-tuples, $t=(t_1,\ldots,t_H)$, where $t = t_1 + \Lambda t_2 + \ldots + \Lambda^{H-1} t_H$.
	Our shift connection schedule will be defined by the following function:
	at timestep $t=(t_1,\ldots,t_H)$, take shift $s_t=\sum_{j=1}^{H} S_j(t_j)$.

	In addition, for every divisor $h$ of $H$, we will define an oblivious routing protocol on our shift schedule which achieves maximum latency $\bigo\left( h N^{1/h} \right)$ and guarantees throughput $\frac{1}{2h}(1-\eps)$.
	We begin with defining the oblivious routing protocol for some special cases of $h$, and then show the general case.

	$\bm{(h=1)}$.
	In this routing protocol, delay flow until the start of a period.
	Our forward routing distribution is defined by taking exactly one uniformly random hop during a full period of the schedule.
	Our backwards routing distribution is defined by taking exactly one uniformly random hop backwards during a full period of the schedule.
	Use VLB with Leakage to create an oblivious routing protocol with maximum latency no more than $3N = \bigo(N)$.

	To show this routing protocol guarantees throughput $\frac{1}{2}(1-\eps)$, we will apply \Cref{thm:small-fourier-implies-orn}.
	Thus, we need to examine the generating polynomial of the routing distribution, which we denote by $\mathcal{D}$.
          The generating polynomial $p$ of $\mathcal{D}$ satisfies
          \begin{align*}
            p(z) & = \frac{1}{T} \sum_{t = 0}^{T} z^{s_t} =
            \frac{1}{\Lambda^H} \sum_{(t_1,\ldots,t_H) \in [\Lambda]^H}
            z^{\sum_{j=1}^H S_j(t_j)} 
            = \frac{1}{\Lambda^H} \sum_{(t_1,\ldots,t_H) \in [\Lambda]^H}
            \prod_{j=1}^H z^{S_j(t_j)} \\
            & = \prod_{j=1}^H \left( \frac1\Lambda \sum_{\gamma=0}^{\Lambda-1}
            z^{S_j(\gamma)} \right) 
            = \prod_{j=1}^H p_j(z) \qquad \mbox{{\normalfont where}}\hspace{2mm} p_j(z) = \frac{1}{\Lambda}\left( \sum_{\gamma=0}^{\Lambda-1} z^{S_j(\gamma)} \right) .
          \end{align*}

	Therefore,
	\[ \hat{p}^\star = \bigodot_{j=1}^{H} \hat{p}_j^\star . \]

	In order to apply \Cref{thm:small-fourier-implies-orn}, we must bound the $2$-norm of $\hat{p}^\star$.
	We apply H\"{o}lder's Inequality.
	\begin{align*}
		\norm{\hat{p}^\star}_2 & \leq \prod_{j=1}^{H} \norm{\hat{p}_j^\star}_{2H} \\
		                       & \leq \left( \left(\frac{\eps}{2}\right)^{1/H} \right)^H = \frac{\eps}{2}
	\end{align*}
	A similar derivation holds for the backwards routing distribution, and thus, applying \Cref{thm:small-fourier-implies-orn} gives our desired guaranteed throughput bound.

	$\bm{(h>1)}$.
    To define the routing protocol we require that $H$ is divisible by $h$.
        Let $g = H/h$.
	In this routing protocol, delay flow until a timestep $t$ with $t_1,\ldots,t_{g} = 0$, and some arbitrary $t_{g + 1},\ldots t_{H}$.
	Our forward routing distribution is defined by taking $h$ total hops, exactly one during each of the next $h$ blocks of $\Lambda^{g}$ timesteps,
        \[
          [t,t+\Lambda^{g}-1], \; \;
          [t+\Lambda^g, t+2\Lambda^g - 1], \; \;
          \ldots, \; [t+(h-1)\Lambda^{g},t+h\Lambda^{g}-1] .
          \]
	Our backwards routing distribution is defined similarly, by taking exactly one backwards hop during each of $[t+h\Lambda^{g},t+(h+1)\Lambda^{g}-1]$ through $[t+(2h-1)\Lambda^{g},t+2h\Lambda^{g}-1]$.
	Use VLB with Leakage to create an oblivious routing protocol with maximum latency no more than
     $\bigo(h \Lambda^g)$.
      By our choice of $\Lambda =
      \left\lceil \frac{2\beta(2N)^{1/H}}{\eps^{1/H}} \right\rceil,$
      we have
      $\Lambda^H = \bigo(N)$ where the (large) constant inside the $\bigo$
      depends on $\eps$ and $H$ but not on $N$. Hence, 
      \[ \bigo(h \Lambda^g) = \bigo(h \Lambda^{H/h}) = \bigo(h N^{1/h})
      = \bigo(L^*(h,N)).\]
        
	To show that this routing protocol guarantees throughput $\frac{1}{2h}(1-\eps)$, we will apply \Cref{thm:small-fourier-implies-orn}.
     According to the proof of that theorem, the generating
          polynomial $p_t(z)$ of the forward routing distribution factorizes
      as $p_t(z) = \prod_{j=1}^h p_{t,j}(z)$ where:
      \begin{align*}
        p_{t,j}(z) & = \frac{1}{\Lambda^{g}} \sum_{t' = t + (j-1) \Lambda^{g}}^{t + j \Lambda^{g} - 1} z^{s_t} .
      \end{align*}
      To simplify $p_{t,j}(z)$, recall $t$ is divisible
      by $\Lambda^g$. Write $t + (j-1) \Lambda^{g}$
      in base $\Lambda$ as
      \begin{equation} \label{eq:base-lambda}
      t + (j-1) \Lambda^g = \Lambda^g t_{g+1} + \Lambda^{g+1} t_{g+2} + \cdots + \Lambda^{H-1} t_{H-1}.
      \end{equation}
      Then
      \begin{align}
      p_{t,j}(z) &= \frac{1}{\Lambda^{g}} \sum_{(t_1,\ldots,t_g) \in [\Lambda]^g}
        z^{\sum_{k=1}^H S_k(t_k)} 
        = \frac{1}{\Lambda^g} \sum_{(t_1,\ldots,t_g) \in [\Lambda]^g}
        \prod_{k=1}^H z^{S_k(t_k)} \nonumber \\
        & = \prod_{k=1}^g \left( \frac1\Lambda \sum_{\gamma=0}^{\Lambda-1}
        z^{S_k(\gamma)} \right) \cdot \prod_{k=g+1}^{H} z^{S_k(t_k)} \nonumber \\
        & = \left( \prod_{k=1}^g p_{[k]}(z) \right) \prod_{k=g+1}^{H} z^{S_k(t_k)} \qquad \mbox{{\normalfont where}}\hspace{2mm} p_{[k]}(z) = \frac{1}{\Lambda}\left( \sum_{\gamma=1}^{\Lambda} z^{S_k(\gamma)} \right) .
        \label{eq:factorization-ptj}
      \end{align}
        
	In order to apply \Cref{thm:small-fourier-implies-orn}, we must bound the 2-norm of $\hat{p}_t^\star$. For $j \in \{1,2,\ldots,h\}$ let
        $f_{t,j}(z) = \prod_{k=g+1}^{H} z^{S_k(t_k)}$ denote
        the monomial appearing in the factorization~\eqref{eq:factorization-ptj}, where the digits $t_{g+1},\ldots,t_H$ appearing in the expression for $f_{t,j}(z)$ implicitly depend on $t$ and $j$ via Equation~\eqref{eq:base-lambda}.
        Note that, since $f_{t,j}(z)$ is a monomial in $z$,
        every Fourier coefficient $\hat{f}_{t,j}$ is a complex number
        of norm 1. Hence, for any other vector $v$,
        \[
        \norm{v \odot \hat{f}_{t,j}}_{2} = \norm{v}_{2} .
        \]
        
    Now, using H\"{o}lder's Inequality,
    \begin{align*}
      \norm{\hat{p}_t^\star}_2 & =
      \norm{\bigodot_{j=1}^{h} \hat{p}_{t,j}^\star}
      = \norm{\bigodot_{j=1}^h \left( \hat{f}_{t,j}^\star \odot
        \bigodot_{k=1}^{g} \hat{p}_{[k]}^\star \right)}_{2}
      = \norm{\bigodot_{j=1}^h 
        \bigodot_{k=1}^{g} \hat{p}_{[k]}^\star}_{2} \\
      & \leq \prod_{j=1}^{h} \prod_{k=1}^{g} \norm{\hat{p}_{[k]}^\star}_{2gh}
      = \prod_{j=1}^h \prod_{k=1}^{g} \norm{\hat{p}_{[k]}^\star}_{2H}
      \leq \left( \left( \frac{\eps}{2} \right)^{1/H} \right)^{gh} = \frac{\eps}{2}.
    \end{align*}
        
	A similar derivation holds for the backwards routing distribution, and thus, applying \Cref{thm:small-fourier-implies-orn} gives our desired guaranteed throughput bound

	\textbf{In conclusion.} 
	Sampling a Base Schedule with the desired Fourier properties occurs with probability at least $\frac{1}{2}$.
	We convolve our Base Schedule with itself to build a larger shift connection schedule.
	On this shift connection schedule, for any integer $h$ which divides $H$, there is an oblivious routing protocol which achieves maximum latency no more than $\bigo(hN^{1/h}) = \bigo(L^\star(h,N))$ and guarantees throughput $\frac{1}{2h}(1-\eps)$.

\end{proof}

\begin{cor}[\textbf{\Cref{thm:main-thm}.2}]
	Given a constant $\eps>0$, for sufficiently large network sizes $N$ there exists a randomized algorithm to compute a shift connection schedule $\bm{\sigma}$ on $N$ nodes which succeeds with probability at least $\frac{1}{2}$ and is efficiently certifiable.
	Additionally for each $h\in\{ 1,2,\ldots,\log_2 N \}$ there exists an oblivious routing protocol on that shift connection schedule which guarantees throughput $\frac{1}{2h}-\eps$.
	and achieves maximum latency $\bigo\left(L^\star(h,N)\right)$.
\end{cor}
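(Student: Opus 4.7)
The plan is a reduction to \Cref{thm:main-convolve-thm} based on the observation that the additive throughput bound $\frac{1}{2h}-\eps$ becomes non-positive once $h$ exceeds $H^\star := \lceil 1/(2\eps) \rceil$, a constant depending only on $\eps$.

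First I would invoke \Cref{thm:main-convolve-thm} on the hop-count set $\{1, 2, \ldots, H^\star\}$ with error $\eps$. Because this set depends only on $\eps$, its least common multiple is also a constant depending only on $\eps$, and \Cref{thm:main-convolve-thm} therefore produces a shift connection schedule $\bm{\sigma}$, with success probability at least $\tfrac{1}{2}$ and efficient certifiability, on which for each $h \leq H^\star$ there is an oblivious routing protocol guaranteeing throughput $\frac{1}{2h}(1-\eps)$ and latency $\bigo_\eps(L^\star(h, N))$. Converting the multiplicative guarantee to the required additive one is immediate: $\frac{1}{2h}(1-\eps) = \frac{1}{2h} - \frac{\eps}{2h} \geq \frac{1}{2h} - \eps$ since $h \geq 1$.

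For $h \in \{H^\star+1, \ldots, \log_2 N\}$, the target throughput $\frac{1}{2h} - \eps$ is non-positive, so any valid oblivious routing protocol on $\bm{\sigma}$ vacuously satisfies the throughput requirement. It then suffices to exhibit some protocol with latency $\bigo(L^\star(h, N))$ on the same $\bm{\sigma}$. I would obtain one by applying the $h$-hop packet-spraying construction from \Cref{def:h-hop-spraying} with phase length $\Lambda_h = \lceil N^{1/h} \rceil$ and then composing via VLB with Leakage (\Cref{def:vlb-w-leakage}); the resulting latency is $\bigo(h\Lambda_h) = \bigo(hN^{1/h}) = \bigo(L^\star(h, N))$, matching the target.

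The main obstacle is verifying that VLB with Leakage is well-defined in this large-$h$ regime, i.e.\ that the $h$-hop forward and backward spraying distributions have non-disjoint supports (see the footnote in \Cref{def:vlb-w-leakage}). I expect this to follow from $\Lambda_h^h \geq N$ together with the randomness of the base schedule underlying the convolution construction of $\bm{\sigma}$: with high probability the Minkowski sum of $h$ blocks of $\Lambda_h$ shifts each already covers all of $\zmodn$, so both spraying distributions have full support. A union bound over $h \in (H^\star, \log_2 N]$ absorbs the small additional failure probability into the $\tfrac{1}{2}$-probability success event inherited from \Cref{thm:main-convolve-thm}.
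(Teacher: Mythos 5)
Your proof matches the paper's in its essential structure: invoke \Cref{thm:main-convolve-thm} on a constant-size hop set and observe that $\frac{1}{2h}(1-\eps) = \frac{1}{2h}-\frac{\eps}{2h} \ge \frac{1}{2h}-\eps$ once $h\ge 1$ (the paper uses the set $\{h : h < 1/(2\eps)\}$ and you use $\{1,\ldots,\lceil 1/(2\eps)\rceil\}$; this is immaterial). Where you go beyond the paper is in taking the latency requirement for $h > H^\star$ seriously. The paper's own proof dismisses this range by noting that the throughput requirement is vacuous when $\frac{1}{2h}-\eps\le 0$, but it does not explicitly exhibit a routing protocol of latency $\bigo(L^\star(h,N))$ for those $h$: \Cref{thm:main-convolve-thm} only supplies protocols for divisors of $H$, and re-using a small-$h_0$ protocol gives latency $\bigo(L^\star(h_0,N))$ which is \emph{larger} than $L^\star(h,N)$ for $h > h_0$ since $L^\star$ is decreasing on $[1,\log_2 N]$. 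So you are right that something further is needed to meet the literal statement. However, your proposed fix --- $h$-hop packet spraying with $\Lambda_h \approx N^{1/h}$ followed by VLB with Leakage --- is offered only as a sketch, and the well-definedness condition you flag (non-disjoint supports of the forward and backward spraying distributions) is not automatic from $\Lambda_h^h \ge N$. On the convolution schedule the shifts in a block of $\Lambda_h < \Lambda$ consecutive timesteps are not $\Lambda_h$ fresh uniform draws; they are a contiguous sub-window of the single random base schedule $S_1$ plus a constant offset, so the ``Minkowski sum covers $\zmodn$ w.h.p.'' claim must be argued against that correlation structure (it is plausible because the $h$ sub-windows are disjoint and hence independent, but you have not proved it). You would also need to account for the extra failure probability, since \Cref{thm:main-convolve-thm} only promises success probability at least $\tfrac12$ with no slack to absorb a union bound. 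In short: same main reduction as the paper, a genuine observation about a step the paper elides, but the supplemental argument remains unproved.
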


\begin{proof}
	Note that when $\frac{1}{2h}\leq \eps$, then $\frac{1}{2h}-\eps\leq 0$.
	Any oblivious routing protocol, even one which does not route any flow, guarantees throughput $0$.
	We therefore only need to handle the case when $\frac{1}{2h}>\eps$.

	Let $\{h_1,\ldots,h_l\}$ be the set of positive integers with $h<\frac{1}{2\eps}$.
	As $\eps$ is a constant, this set also has constant size.
	Apply \Cref{thm:main-convolve-thm}.
	The routing protocols returned by \Cref{thm:main-convolve-thm} guarantee throughput $\frac{1}{2h}(1-\eps)$. Since $\frac{1}{2h}\leq \frac{1}{2}$, we have $\frac{1}{2h}(1-\eps)\geq \frac{1}{2h}-\eps$.

\end{proof}

\newcommand{\calW}{\mathcal{W}}

\section{Derandomized Construction}\label{sec:derandomized}

In this section, we aim to provide a deterministic, computationally efficient construction of a universal connection schedule.
Previous work~\cite{optimal-orns} conjectured that the following deterministic connection schedule inspired by \cite{tremel} would be universal:
  Let $\mathbb{F}$ denote the finite field with $N$ elements, and let $x$ denote a primitive root in $\mathbb{F}$.
  Define the sequence of permutations $\pi_0, \pi_1, \ldots$ by specifying that $\pi_k(i) = i + x^k$ for all $i \in \mathbb{F}, \, k \in \mathbb{N}$.
  Notably, the conjectured design is a shift connection schedule, and thus can be tested for individual values of $N$ using \Cref{thm:small-fourier-implies-orn} by computing the Fourier coefficient vectors of the generating polynomials directly.
  However, we make no further progress toward proving this design is universal for infinitely many network sizes $N$.

Instead, we describe a deterministic construction of a universal shift connection schedule that is motivated directly by derandomizing the techniques we used in \Cref{sec:small-fourier-implies-orn,sec:rand-shift-sched}.
Like in all previous sections, we will build a connection schedule, show that this connection schedule has desirable Fourier properties, and then use \Cref{thm:small-fourier-implies-orn} to show the existence of oblivious routing protocols that guarantee any throughput value up to a multiplicative error of $(1-\eps)$.

At a high level, our strategy is the following recursive approach.
We assume $N$ is a power of 2 and construct a universal shift connection schedule with period length $T = N$ in which the shifts $s_0, s_1, \ldots, s_{N-1}$ constitute a permutation of the elements of $\zmodn$. 
The permutation is constructed by using a recursive partitioning process to define a total ordering of the elements of $\zmodn$.
Take the set of all possible shifts to start.
Divide this set into 2 equal-size pieces $S_{\text{left}}$ and $S_{\text{right}}$, in such a way that the ``Fourier 2-norm'' does not increase by very much.
Our total ordering will place all of $S_{\text{left}}$ before all of $S_{\text{right}}$.
To order the shifts within each of $S_{\text{left}}$ and $S_{\text{right}}$, repeat this process recursively.
If we can do the partitioning at each level of the recursion in a deterministic way, then we end up with a deterministic ordering of all possible shifts such that: for any $2^\ell$, if we break $\{0,\ldots,N-1\}$ into $2^{n-\ell}$ consecutive blocks of $2^\ell$ shifts, then the Fourier coefficients of each of those blocks of shifts have bounded 2-norm, and thus we can rely on \Cref{thm:small-fourier-implies-orn}.

In order to do the partitioning deterministically, we rely on discrepancy minimization theory.
In particular, a derandomized algorithm that achieves the Spencer's theorem bound for complex vector discrepancy minimization.
This result bounds the $\infty$-norm instead of the 2-norm, however we may use that to bound the 2-norm.

\subsection{Discrepancy Minimization for Complex Matrices} \label{sec:disc-minimization}

Levy, Ramdas, and Rothvoss prove the following:
\begin{thm}[Theorem $1$ of \cite{deterministic-discrepancy}] \label{thm:derand-disc}
  Let $v_1, \hdots, v_m \in \mathbb{R}^n$ unit vectors, let $x^{(0)} \in [-1, 1]^n$ be a starting point, and let $\lambda_1 \ge \hdots \ge \lambda_m \ge 0$ be parameters so that $\sum_{i = 1}^m \exp(-\lambda_i^2/16) \le n/32$.
  Then there is a deterministic algorithm that computes a vector $x \in [-1, 1]^n$ with $\langle v_i, x - x^{(0)}\rangle \le 8 \lambda_i$ for all $i \in [m]$ and $|\{i \, \mid \, x_i = \pm 1\}| \ge n/2$, in time $\mathcal{O}(\min\{n^4m, n^3 m\lambda_1^2\})$.
\end{thm}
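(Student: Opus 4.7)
The plan is to derandomize the Lovett--Meka random-walk approach to constructive discrepancy. The randomized algorithm begins at $x^{(0)}$ and performs a (discretized) Brownian motion inside $[-1,1]^n$, at each step restricted to the linear subspace orthogonal to the currently active constraints---coordinates that have hit $\pm 1$ and test constraints $\langle v_i, x_t - x^{(0)}\rangle$ that have reached magnitude $8\lambda_i$. Two ingredients drive the analysis: for any fixed unit vector $v_i$, the projected one-dimensional process is a time-changed Brownian motion, so a reflection-principle tail bound yields $\Pr[\sup_t |\langle v_i, x_t - x^{(0)}\rangle| > 8\lambda_i] \leq \exp(-\lambda_i^2/16)$; and a separate second-moment argument shows that after $\mathrm{poly}(n)$ steps at least $n/2$ coordinates are pinned at $\pm 1$. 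Summing the first bound over $i$ and using the hypothesis bounds the expected number of violated test constraints by $n/32$, so with constant probability all three conclusions hold simultaneously.

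To remove the randomness, I would use the method of pessimistic estimators. Define a potential function of the form
\[ \Phi_t \;=\; \sum_{i=1}^{m} \cosh\!\Bigl( \tfrac{\alpha}{\lambda_i}\,\langle v_i, x_t - x^{(0)} \rangle \Bigr) \;+\; \sum_{j=1}^n \phi(x_{t,j}), \]
where $\alpha$ is an absolute constant and $\phi$ is a coordinate-pinning penalty, chosen so that $\Phi_t$ upper bounds the failure probability under the randomized dynamics. A direct Hessian calculation gives $\mathbb{E}[\Phi_{t+1} \mid x_t] \leq \Phi_t$ when the step is an isotropic Gaussian in the current active subspace. At each step of the deterministic algorithm, pick the direction $d$ minimizing $\mathbb{E}_{g}[\Phi(x_t + \eta g_d)]$ over unit vectors $d$ in the active subspace; the supermartingale inequality guarantees existence of a direction that does not increase $\Phi_t$. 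This minimizer reduces to an eigenvalue computation on the projected Hessian, an $n\times n$ matrix formed from the active constraint gradients; iterating for $\tilde{O}(n)$ or $\tilde{O}(\lambda_1^2)$ steps yields the stated runtime bounds.

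The hard part will be the careful tuning of constants inside $\Phi_t$. The original Lovett--Meka argument balances a single trade-off between discrepancy and coordinate-saturation, whereas here the conclusions are threefold: (a) a per-constraint one-sided discrepancy bound $8\lambda_i$ parameterized by $\lambda_i$, (b) the saturation count $|\{i : x_i = \pm 1\}| \geq n/2$, and (c) feasibility in the hypercube. Each conclusion must be encoded into a separate term of $\Phi_t$ with compatible temperature parameters, and the supermartingale inequality must survive their interaction through the Hessian. Finally, the complex-vector extension used in the surrounding section follows from the standard embedding $\mathbb{C}^n \hookrightarrow \mathbb{R}^{2n}$ applied coordinatewise, which preserves real inner products up to a factor of $2$ and thus only changes the universal constants in the guarantee.
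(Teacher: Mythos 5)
This statement is imported from external work: the paper cites it verbatim as Theorem~1 of Levy, Ramdas, and Rothvoss \cite{deterministic-discrepancy} and does not supply a proof. There is therefore no ``paper's own proof'' to compare against; the paper treats the theorem as a black box and only uses it (in \Cref{thm:derand-matrix-disc} and \Cref{cor:derand-complex-matrix-disc}) to build the derandomized schedule.

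Your sketch correctly identifies the object being derandomized (the Lovett--Meka constrained Brownian walk) and the two halves of its analysis (reflection-principle tail bound per constraint; second-moment saturation argument). However, the mechanism you propose --- a single scalar $\cosh$-potential driven by pessimistic estimators, minimized over one-dimensional directions in the active subspace --- does not match the actual argument of \cite{deterministic-discrepancy}. Their derandomization is built on the \emph{matrix} multiplicative weight update method: rather than a scalar potential $\Phi_t$, they maintain a density matrix that aggregates both the discrepancy constraints and the variance/saturation constraints, and each step solves a small SDP (in practice, a top-eigenvector computation of the MWU-weighted matrix) to choose the update direction. This is what yields the stated running time $\mathcal{O}(\min\{n^4m, n^3m\lambda_1^2\})$; a per-step search over the continuum of directions minimizing a $\cosh$-potential does not obviously discretize to the same complexity, and a scalar potential would have to simultaneously certify the three conclusions you list (per-constraint discrepancy, $\geq n/2$ pinned coordinates, feasibility), which is exactly the coupling the matrix MWU framework is designed to handle cleanly. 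If you want to pursue the potential-function route, you would need to show that your chosen $\Phi_t$ is in fact a supermartingale under the constrained Gaussian step \emph{and} that some feasible direction weakly decreases it deterministically, and then account for why the pinning term can be made to force $n/2$ coordinates to $\pm 1$ within the claimed step budget --- none of which is automatic from the Hessian computation you gesture at. Finally, your aside on the complex embedding is correct in spirit but is the content of \Cref{cor:derand-complex-matrix-disc}, not part of the theorem you are asked to prove.
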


From this, we get the following derandomization for vector discrepancy too.

\begin{thm}\label{thm:derand-matrix-disc}
  Let $A \in [-1, 1]^{m \times n}$. Then there exists a universal constant $K$ and a deterministic algorithm that finds an $x \in \{-1, +1\}^n$ such that
  \begin{align*}
    \norm{Ax}_\infty \le K \sqrt{n \log \frac{2m}{n}} .
  \end{align*}
\end{thm}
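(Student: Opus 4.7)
The plan is to iteratively invoke \Cref{thm:derand-disc} to round coordinates of $x$ from $[-1,1]$ to $\{-1,+1\}$ over $O(\log n)$ rounds, where each round halves the number of ``alive'' (i.e., not yet rounded) coordinates and contributes geometrically decreasing discrepancy to the final bound. This is the standard partial-coloring-to-full-coloring reduction behind Spencer's theorem, carried out using the deterministic partial coloring oracle of Levy--Ramdas--Rothvoss.

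Formally, initialize $x^{(0)} = 0 \in [-1,1]^n$ and $A_0 = [n]$, $n_0 = n$. At round $k$, let $A_k \subseteq [n]$ be the alive coordinates with $n_k = |A_k|$ and let $x^{(k)}$ be the current iterate, whose coordinates outside $A_k$ already lie in $\{-1,+1\}$. Restrict the rows of $A$ to $A_k$ to obtain vectors $w_i^{(k)} \in [-1,1]^{A_k}$, each of Euclidean norm at most $\sqrt{n_k}$, and set $v_i^{(k)} = w_i^{(k)}/\|w_i^{(k)}\|_2$ (discard rows that vanish on $A_k$). Apply \Cref{thm:derand-disc} with these unit vectors $v_i^{(k)}$, with starting point $x^{(k)}|_{A_k}$, and with all parameters equal to
\[
 \lambda_k \;=\; 4\sqrt{\log\bigl(32m/n_k\bigr)},
\]
so the hypothesis $\sum_{i=1}^m e^{-\lambda_k^2/16} \le m e^{-\lambda_k^2/16} \le n_k/32$ holds. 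The output, extended by leaving coordinates outside $A_k$ unchanged, is $x^{(k+1)} \in [-1,1]^n$ satisfying
\[
 \bigl|\langle a_i,\, x^{(k+1)} - x^{(k)}\rangle\bigr| \;=\; \|w_i^{(k)}\|_2\cdot\bigl|\langle v_i^{(k)},\, x^{(k+1)}|_{A_k} - x^{(k)}|_{A_k}\rangle\bigr| \;\le\; 8\lambda_k\sqrt{n_k},
\]
with at least $n_k/2$ coordinates in $A_k$ newly rounded to $\pm 1$. Remove these from $A_k$ to get $A_{k+1}$ of size $n_{k+1} \le n_k/2$.

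After at most $\lceil \log_2 n\rceil$ rounds, all coordinates are rounded (any residual constant number can be rounded by hand, contributing $O(1)$ discrepancy per row). Summing the telescoping discrepancy bound along the iterates,
\[
 \|Ax\|_\infty \;\le\; \sum_{k\ge 0} 8\lambda_k\sqrt{n_k} \;=\; O\!\left(\sum_{k \ge 0} \sqrt{\tfrac{n}{2^k}\bigl(\log(2m/n)+k\bigr)}\right) \;=\; O\!\left(\sqrt{n\log(2m/n)}\right),
\]
where the last equality follows because $\sqrt{n}\sum_{k\ge 0} 2^{-k/2}\sqrt{\log(2m/n)+k}$ is a convergent series dominated by its $k=0$ term (up to a constant) once $\log(2m/n)\ge 1$, and by $O(\sqrt{n})$ otherwise. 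Taking $K$ to absorb all constants gives the required bound.

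The main technical care lies in setting up the round-$k$ invocation cleanly: choosing $\lambda_k$ exactly so that the geometric shrinkage of $n_k$ translates into a geometrically convergent sum of discrepancies rather than accumulating a $\log n$ factor, and handling the normalization of rows (since \Cref{thm:derand-disc} requires unit vectors) so that the factor $\sqrt{n_k}$ enters the discrepancy bound at every step rather than $\sqrt{n}$. Everything else—carrying the already-rounded coordinates along as a fixed contribution to $Ax$ and folding the current $x^{(k)}|_{A_k}$ into the starting point $x^{(0)}$ of the oracle—is bookkeeping that follows from the structure of \Cref{thm:derand-disc}.
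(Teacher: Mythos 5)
Your proof is correct and follows essentially the same route as the paper: iteratively invoke the Levy--Ramdas--Rothvoss deterministic partial-coloring oracle on the alive coordinates, normalizing the restricted rows to unit vectors, choosing $\lambda_k$ on the order of $\sqrt{\log(m/n_k)}$ so the hypothesis holds, and observing that the per-round discrepancy contribution $\lambda_k\sqrt{n_k}$ forms a geometrically dominated sum. The only cosmetic differences are the exact constant inside $\lambda_k$ (you use $4\sqrt{\log(32m/n_k)}$, the paper uses $\sqrt{80\ln(2m/n_k)}$; both satisfy the same inequality) and your explicit remark about hand-rounding a residual constant number of coordinates, which the paper leaves implicit.
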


\begin{proof}
  Much of this follows from the proof of Theorem 2 from Lovett-Meka \cite{lovett-meka}.

  Let $A_k \in \mathbb{R}^n$ denote the $k$-th row of $A$. Let $v^k = A_k/\norm{A_k}_2$ and define $\alpha(x, y) = \sqrt{80\ln{2x/y}}$. Note that $m \cdot \exp \left(-\alpha(m, n)^2/16\right) < n/32$.

  The core idea is to run the algorithm from \Cref{thm:derand-disc} $\log_2 n$ many times, at each step fixing at least half the coordinates.

  Initialize the algorithm from \Cref{thm:derand-disc} with $x^{0} = \textbf{0}$. We obtain a vector $x^{1} \in [-1, 1]^n$ with $\langle v^k, x^1 \rangle \le 8 \alpha(m, n)$ for all $k \in [m]$ and $|\{j \, \mid \, x^1_j = \pm 1\}| \ge n/2$. Let $S = \{j \, \mid \, x_j^1 \ne \pm 1\}$. Then, we run the algorithm again starting at $x^{1}_S$, with the family of unit vectors $\{v^k/\norm{v^k}_2 \, \mid \, k \in S\}$ and $\lambda_i = \alpha(m, |S|)$ to find a vector $x_S^{2}$ with $\langle v^k_S, x_S^{2} - x_S^{1}  \rangle \le 8 \norm{v^k_S}_2 \alpha (m, |S|)$ for all $k \in S$. Repeat this iterative procedure at most $\log_2 n$ times to obtain a vector $x$ so that $|\{j \, \mid \, x_j = \pm 1\}| = n$. The discrepancy at the end then is
  \begin{align*}
    \norm{Ax}_\infty \le \sum_{i = 0}^{\log_2 n} \sqrt{n/2^i}\alpha(m, n/2^{i}) & = 8 \sqrt{80 n} \sum_{i = 0}^{\log_2 n} \sqrt{\frac{\ln 2^{i + 1} m/n}{2^i}} \le K\sqrt{n \ln 2m/n}
  \end{align*}
  for some universal constant, $K$, with the last inequality following by noting that the sum can be dominated by a convergent geometric series.

\end{proof}

\begin{cor}\label{cor:derand-complex-matrix-disc}
  Let $A \in \mathbb{C}^{m \times n}$ with $\norm{A_{j, k}}_1 \le 1$ for all $j, k$. Then there exists a universal constant $K$ and a deterministic algorithm that finds an $x \in \{-1, +1\}^n$ such that
  \begin{align} \label{eq:cor-dcmd}
    \norm{A x}_\infty \le K \sqrt{n \log \frac{4m}{n}} .
  \end{align}
  Furthermore, if $n$ is even, $x$ may be chosen to satisfy
  $x_1 + \cdots + x_n = 0$.
\end{cor}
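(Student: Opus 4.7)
The plan is to reduce to the real case Theorem~\ref{thm:derand-matrix-disc} and then enforce the balance constraint via a pairing trick.

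\textbf{Reducing to the real case.} Given $A \in \mathbb{C}^{m \times n}$ with every entry satisfying $\abs{\Re(A_{j,k})} + \abs{\Im(A_{j,k})} \le 1$, I would build a real matrix $B \in [-1,1]^{2m \times n}$ by stacking the real and imaginary parts: set $B_{j,k} = \Re(A_{j,k})$ for $1 \le j \le m$, and $B_{j,k} = \Im(A_{j-m,k})$ for $m < j \le 2m$. The hypothesis immediately gives $\abs{B_{j,k}} \le 1$. Applying Theorem~\ref{thm:derand-matrix-disc} to $B$ produces, in deterministic polynomial time, an $x \in \{-1,+1\}^n$ with $\norm{Bx}_\infty \le K_0 \sqrt{n \log(2 \cdot 2m/n)} = K_0 \sqrt{n \log(4m/n)}$. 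Since $\abs{(Ax)_k}^2 = (Bx)_k^2 + (Bx)_{m+k}^2 \le 2 \norm{Bx}_\infty^2$ for each $k$, we conclude $\norm{Ax}_\infty \le \sqrt{2}\, K_0 \sqrt{n \log(4m/n)}$, establishing inequality~\eqref{eq:cor-dcmd} with $K = \sqrt{2} K_0$.

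\textbf{Enforcing the balance constraint.} Assume $n$ is even. I would fix an arbitrary perfect pairing of the columns, say $(1,2),(3,4),\dots,(n-1,n)$, and build a reduced matrix $A' \in \mathbb{C}^{m \times (n/2)}$ by $A'_{j,i} = \tfrac{1}{2}(A_{j,\,2i-1} - A_{j,\,2i})$. The triangle inequality for $\norm{\cdot}_1$ on $\mathbb{C} \cong \mathbb{R}^2$ gives $\norm{A'_{j,i}}_1 \le 1$, so the first part of the corollary applies to $A'$ and yields $y \in \{-1,+1\}^{n/2}$ with $\norm{A' y}_\infty \le K \sqrt{(n/2) \log(8m/n)}$. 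Define $x \in \{-1,+1\}^n$ by $x_{2i-1} = y_i$ and $x_{2i} = -y_i$; then $\sum_k x_k = 0$ automatically, and
\[
    (Ax)_j = \sum_{i=1}^{n/2} \bigl(A_{j,2i-1} - A_{j,2i}\bigr) y_i = 2(A'y)_j
\]
for every row $j$. Hence $\norm{Ax}_\infty = 2 \norm{A' y}_\infty \le 2K \sqrt{(n/2)\log(8m/n)}$. Since $\log(8m/n) = \log 2 + \log(4m/n)$, this is at most a constant multiple of $\sqrt{n \log(4m/n)}$ in the nontrivial regime $4m \ge n$, absorbing the extra factor into a slightly enlarged universal constant $K'$.

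\textbf{Main obstacle.} Conceptually the reduction is routine once Theorem~\ref{thm:derand-matrix-disc} is available; the only subtlety is bookkeeping with the logarithmic factor. The real/imaginary stacking changes the number of rows from $m$ to $2m$, turning $\log(2m/n)$ into $\log(4m/n)$, which is exactly the bound claimed by the corollary. The pairing step then shrinks the number of columns from $n$ to $n/2$, which would nominally give $\log(8m/n)$ inside the square root, and the step that requires care is verifying that this is comparable to $\log(4m/n)$ up to a constant (which it is, because the additional $\log 2$ is absorbed into the universal constant). No new randomization or discrepancy argument is needed beyond what is already encapsulated in Theorem~\ref{thm:derand-matrix-disc}, and the whole procedure remains deterministic and runs in time polynomial in $m$ and $n$.
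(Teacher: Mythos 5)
Your argument is correct, and for the balance constraint it takes a genuinely different route than the paper. Both proofs reduce the complex problem to a real one by stacking real and imaginary parts and invoking \Cref{thm:derand-matrix-disc}, but to enforce $\sum_i x_i = 0$ the paper augments the stacked matrix with an extra all-ones row, applies the real discrepancy theorem once, reads off the imbalance $\alpha$ from the extra coordinate, and then repairs it by flipping $\alpha/2$ signs and bounding the additional discrepancy via $\norm{Dy}_\infty \leq \norm{y}_1 = \alpha$. You instead pair up columns, replace each pair by half its difference, apply the unconstrained complex bound to the resulting $m \times (n/2)$ matrix, and extend the sign vector antisymmetrically across pairs, which is automatically balanced. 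Your version is arguably cleaner to verify (no sign-flipping bookkeeping), while the paper's trick is more general (it can approximately enforce any single linear constraint $\iprod{c}{x} \approx 0$ with $c \in [-1,1]^n$, not just balance) and keeps the full column count, avoiding the $\log(8m/n)$-versus-$\log(4m/n)$ mismatch your version introduces. On that mismatch: your claim that the extra $\log 2$ is absorbed into the constant is not literally true over the full regime $n \leq 4m$, since $\log(4m/n) \to 0$ as $n \to 4m$ while $\log(8m/n) \to \log 2 > 0$; the absorption is only valid when $4m/n$ is bounded away from $1$, e.g.\ when $n \leq 2m$. However the paper's own proof has exactly the same slack (replacing $\log\frac{4m+2}{n}$ by $\log\frac{4m}{n}$ ``up to a constant'' suffers the identical degeneracy as $n \to 4m$), and in the actual application in \Cref{lem:det-sched-fourier-bound} the number of columns never exceeds the number of rows, so $4m/n \geq 4$ and both constant-absorptions are harmless. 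Overall this is a valid alternative proof with the same deterministic, polynomial-time character.
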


\begin{proof}
	To create a matrix $D \in \mathbb{R}^{(2m+1) \times n}$, take each entry $a_{j,k}$ of $A$, split it into its real and imaginary components, and (effectively) replace the entry of $A$ with the 2-dimensional vector ${\Re(a_{j,k}) \choose \Im(a_{j,k})}$, doubling the height of each column of $A$. 
	Then, augment the matrix with an initial row of 1's, resulting in a matrix $D$ having $2m+1$ rows and $n$ columns. 
	Applying \Cref{thm:derand-matrix-disc} to $D$ we deduce that there is a vector $\tilde{x} \in \{ \pm 1 \}^n$ such that 
	\begin{equation}\label{eq:tildex}
		\norm{D \tilde{x}}_{\infty} \leq K \sqrt{n \log \frac{4m+2}{n}}  
	\end{equation} 
	If $n$ is odd we set $x = \tilde{x}$. If $n$ is even, the vector  $\tilde{x}$ may fail to satisfy the equation  $\tilde{x}_1 + \cdots + \tilde{x}_n = 0$. 
	Let $\alpha = |\tilde{x}_1 + \cdots + \tilde{x}_n|$.
	Since the first row of $D$ is made up of 1's, the first coordinate of the vector $D \tilde{x}$ is  equal to $\pm \alpha$. 
	By Inequality~\eqref{eq:tildex}  this implies $\alpha \leq K \sqrt{n \log \frac{4m+2}{n}}.$
	Modify $\tilde{x}$ into a vector $x$ that satisfies  $x_1 + \cdots + x_n = 0$ by reversing the signs of  $\alpha/2$ coordinates. 
	In other words, find a vector $y \in \{ -2, 0, 2 \}^n$ with exactly $\alpha/2$ non-zero  coordinates, such that $x = \tilde{x} + y$ belongs to $\{ \pm 1 \}^n$ and satisfies $x_1 + \cdots + x_n = 0$. We have $\norm{y}_1 = 2 \cdot (\alpha/2) = \alpha$, and
	\begin{equation}\label{eq:tildex.2}
		\norm{D x}_{\infty} \leq \norm{D \tilde{x}}_{\infty} + \norm{D y}_{\infty}
		\leq \norm{D \tilde{x}}_{\infty} + \norm{y}_1
		\leq 2 K \sqrt{n \log \frac{4m+2}{n}}.
	\end{equation}
	Inequality~\eqref{eq:cor-dcmd} above can be justified by redefining the universal constant $K$ in the corollary to be three times the constant $K$ that appears in this proof.
\end{proof}

\subsection{Subdividing Columns of the Fourier Matrix}

Assume that $N$ is a power of 2, $N=2^n$.
Let $\mathcal{W} \in \mathbb{C}^{N \times N}$ denote the Fourier transform matrix, defined by
$\mathcal{W}_{j,k} = e^{\frac{2\pi i}{N}jk}$ for all $j,k \in \set{0, \ldots, N - 1}$.
Let $\mathbf{1} \in \mathbb{C}^N$ denote the vector with all entries equal to one.
Given a set $S \subseteq \set{0, \ldots, N - 1}$, let $\mathcal{W}_S$ denote the submatrix obtained by selecting just the columns in $S$ from $\mathcal{W}$.
For ease of notation, when $S = \{j\}$ is a singleton, we just write $\mathcal{W}_j$.
We also let $W^\star \in \mathbb{C}^{N \times N}$ denote the Fourier transform matrix with first row entries set to 0.

We will repeatedly apply \Cref{cor:derand-complex-matrix-disc} to subdivide sets of even cardinality into two subsets of equal size, as depicted in \Cref{fig:binary-tree}. To subdivide a given set $S$ in the diagram, denote the cardinality of $S$ by $2j$ and apply
\Cref{cor:derand-complex-matrix-disc} using the matrix $A = \mathcal{W}_S$,
to obtain a vector $x \in \{ \pm 1 \}^{2j}$ satisfying $x_1 + \cdots + x_{2j} = 0$. Then define the two ``child sets'' of $S$ to be $S^- = \{ s \mid x_s = -1 \}$ and $S^+ = \{s \mid x_s = +1 \}$. The sets produced by this construction obey the following lemma.

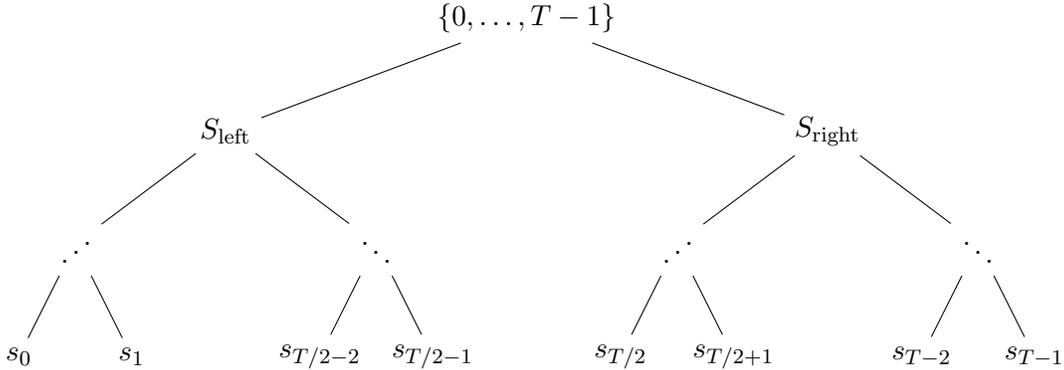
\begin{figure}[h]
  \begin{center}
    \begin{tikzpicture}[level distance=1.5cm,
        level 1/.style={sibling distance=8cm},
        level 2/.style={sibling distance=4cm},
        level 3/.style={sibling distance=1.5cm}]
      \node {$\set{0, \ldots, T - 1}$}
      child {node {$S_{\text{left}}$}
          child {node {$\iddots$}
              child{node {$s_0$}}
              child{node {$s_1$}}}
          child {node {$\ddots$}
              child {node {$s_{T/2 - 2}$}}
              child {node {$s_{T/2 - 1}$}}}
        }
      child {node {$S_{\text{right}}$}
          child {node {$\iddots$}
              child{node {$s_{T/2}$}}
              child{node {$s_{T/2 + 1}$}}}
          child {node {$\ddots$}
              child{node {$s_{T - 2}$}}
              child{node {$s_{T - 1}$}}}
        };
    \end{tikzpicture}
  \end{center}
  \caption{Recursive partitioning scheme for obtaining a
    total ordering of $S$}
  \label{fig:binary-tree}
\end{figure}
\begin{lem} \label{lem:det-sched-fourier-bound}
  Start with the total set $\{0,\ldots,N-1\}$, and assume that $N=2^{n}$ for some integer $n$.
  After using \Cref{cor:derand-complex-matrix-disc} recursively $\ell$ times, $\{0,\ldots,N-1\}$ is partitioned into $2^{\ell}$ blocks, each containing $2^{n-\ell}$ shifts.
  Call the total set of these blocks $\mathcal{A}_\ell$.
  Then for every $A\in \mathcal{A}_\ell$ for $\ell>0$,
  \begin{align*}
    \norm{\calW^\star_A \mathbf{1}}_{\infty} \leq C \cdot 2^{-\ell/2} \sqrt{N \ell}
  \end{align*}
  for a constant $C$.
\end{lem}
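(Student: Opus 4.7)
The plan is to prove the bound by induction on $\ell$, using the key observation that subdivision via Corollary~\ref{cor:derand-complex-matrix-disc} produces a linear decomposition of the column-sum vector. Specifically, if $A \in \mathcal{A}_{\ell-1}$ has size $|A| = N/2^{\ell-1}$ and we apply the corollary to $\mathcal{W}_A$ (whose entries lie on the unit circle) to obtain $x \in \{\pm 1\}^{|A|}$ with $\sum_s x_s = 0$, then defining $A^\pm = \{s \in A : x_s = \pm 1\}$ gives the identity
\[
\mathcal{W}_{A^\pm}\mathbf{1} \;=\; \tfrac{1}{2}\!\left(\mathcal{W}_A\mathbf{1} \pm \mathcal{W}_A x\right).
\]
Since the zero-sum condition on $x$ forces $(\mathcal{W}_A x)_0 = 0$, zeroing out the top row on both sides preserves the identity, so by the triangle inequality
\[
\|\mathcal{W}^\star_{A^\pm}\mathbf{1}\|_\infty \;\leq\; \tfrac{1}{2}\!\left(\|\mathcal{W}^\star_A\mathbf{1}\|_\infty + \|\mathcal{W}_A x\|_\infty\right).
\]
Applying Corollary~\ref{cor:derand-complex-matrix-disc} with the $N \times |A|$ matrix $\mathcal{W}_A$ bounds the second term:
\[
\|\mathcal{W}_A x\|_\infty \;\leq\; K\sqrt{|A|\,\log(4N/|A|)} \;=\; K\sqrt{\log 2}\,\sqrt{(\ell+1)N/2^{\ell-1}}.
\]

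For the base case $\ell=1$, I would start from the root $\{0,\ldots,N-1\}$ and note that $\mathcal{W}^\star\mathbf{1} = \mathbf{0}$ (since $\sum_{k=0}^{N-1} e^{2\pi i jk/N} = 0$ for $j\neq 0$). The inequality above then gives $\|\mathcal{W}^\star_A\mathbf{1}\|_\infty \leq \tfrac{1}{2}K\sqrt{N\log 4}$, which fits the claimed form $C\cdot 2^{-1/2}\sqrt{N}$ provided $C \geq K\sqrt{\log 2}$.

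For the inductive step at level $\ell \geq 2$, plugging the inductive hypothesis $\|\mathcal{W}^\star_A\mathbf{1}\|_\infty \leq C\cdot 2^{-(\ell-1)/2}\sqrt{N(\ell-1)}$ into the triangle-inequality bound yields
\[
\|\mathcal{W}^\star_{A^\pm}\mathbf{1}\|_\infty \;\leq\; 2^{-(\ell+1)/2}\sqrt{N}\left(C\sqrt{\ell-1} + K\sqrt{\log 2}\,\sqrt{\ell+1}\right).
\]
Reducing the desired conclusion $C\cdot 2^{-\ell/2}\sqrt{N\ell} = 2^{-(\ell+1)/2}\cdot\sqrt{2}\,C\sqrt{N\ell}$ turns the induction into the purely arithmetic requirement
\[
K\sqrt{\log 2}\,\sqrt{\ell+1} \;\leq\; C\left(\sqrt{2\ell} - \sqrt{\ell-1}\right).
\]
The ratio $\sqrt{\ell+1}/(\sqrt{2\ell}-\sqrt{\ell-1})$ is bounded uniformly in $\ell \geq 1$, with limit $\sqrt{2}+1$, so a single choice of $C$ depending only on $K$ (e.g., $C = 3K\sqrt{\log 2}$, after checking the worst-case $\ell$) closes both the base and inductive steps simultaneously.

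The main obstacle is the bookkeeping around the constant: one must be careful that the corollary's logarithmic factor $\sqrt{\log(4N/|A|)}$ contributes exactly a $\sqrt{\ell+1}$ (as opposed to $\sqrt{\ell}$) scaling, which is what forces the ``$\sqrt{2}\sqrt{\ell} - \sqrt{\ell-1}$'' gap on the right to actually close the induction. Verifying that this gap stays uniformly bounded below by a constant multiple of $1$ for all $\ell \geq 1$ is the one genuinely non-routine piece of the computation; everything else is triangle-inequality and clean algebra.
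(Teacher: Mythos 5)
Your proof is correct and follows essentially the same approach as the paper: the same decomposition $\mathcal{W}_{A^\pm}\mathbf{1} = \tfrac12(\mathcal{W}_A\mathbf{1} \pm \mathcal{W}_A x)$, the same triangle-inequality recursion, and the same application of Corollary~\ref{cor:derand-complex-matrix-disc}. The only difference is bookkeeping: the paper defines the recursive majorant $M_\ell = \tfrac12 M_{\ell-1} + \tfrac12 K\sqrt{N\cdot 2^{-\ell+1}(\ell+1)}$ and then unrolls it into a closed-form geometric sum, whereas you substitute the claimed closed form directly into the induction and close it with the algebraic bound $\sqrt{\ell+1}/(\sqrt{2\ell}-\sqrt{\ell-1}) = (\sqrt{2\ell}+\sqrt{\ell-1})/\sqrt{\ell+1} \le \sqrt{2}+1$, which you verify correctly.
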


\begin{proof}
  We will show a bound of the form $\norm{\calW^\star_A \mathbf{1}}_{\infty} \leq M_\ell$ whenever $\ell>0$ and $A \in \mathcal{A}_{\ell}$, where the value of $M_{\ell}$ is given by a recursive formula. Then we will use the recurrrence to show the desired closed formula for $M_{\ell}$.

  \textbf{Claim:} When $0 \leq \ell \leq n$, then for every $A\in \mathcal{A}_\ell$, $\norm{\calW^\star_A \mathbf{1}} \leq M_{\ell}$ for $M_\ell$ with the recursive formula
  \begin{align*}
    M_\ell & = \begin{cases}
                 0                                                                                     & \mbox{if } \ell=0     \\
                 \frac{1}{2} M_{\ell - 1} + \frac{1}{2}K \cdot \sqrt{N \cdot 2^{-\ell + 1} (\ell + 1)} & \mbox{if } \ell > 0 .
               \end{cases}
  \end{align*}
  The claim is proven by induction on $\ell$. In the base case, $\mathcal{A}_0$ only contains the set $S=\{0,\ldots,N-1\}$, and therefore
  \[ \norm{\calW^\star_S \mathbf{1}}_\infty = \norm{\calW^\star \mathbf{1}}_\infty = 0 . \]
  
  Now consider the recursive case $\ell > 0$.
  By inductive hypothesis, assume for every $A\in \mathcal{A}_{\ell-1}$, $\norm{\calW^\star_A \mathbf{1}} \leq M_{\ell-1}$.
  Consider that in our next level, any set $B\in \mathcal{A}_\ell$ is created by first taking some set $A\in \mathcal{A}_{\ell-1}$ and partitioning it using a vector $x \in \{ \pm 1 \}^{|A|}$ into equal-sized pieces, $A_{\text{left}}$ and $A_{\text{right}}$, one of which equals $B$. The vector in
  $\{0,1\}^{|A|}$ identifying the elements of $B$ is equal to one of the
  vectors $(\bm{1} + x)/2$ or $(\bm{1}-x)/2$.
  By \Cref{cor:derand-complex-matrix-disc},
  \begin{align*}
    \norm{\calW^\star_{B} \mathbf{1}}_\infty 
     & \leq \frac{1}{2} (\norm{\calW_A^\star \mathbf{1}}_\infty + \norm{\calW_A^\star x}_\infty)         \\
     & \leq \frac{1}{2}\left( M_{\ell-1} + K\sqrt{|A| \log_2\left(4N/|A| \right)} \right)                \\
     & = \frac{1}{2}\left( M_{\ell-1} + K\sqrt{2^{n-\ell+1} \log_2\left(4N/2^{n-\ell+1} \right)} \right) \\
     & = \frac{1}{2}\left( M_{\ell-1} + K\sqrt{N\cdot 2^{-\ell+1} (\ell+1) } \right) .
  \end{align*}

  Now that we have proved the recursive formula of $M_\ell$, we use it to find a closed formula.
  \begin{align*}
    M_\ell & = \frac{1}{2} K \sum_{j=1}^{\ell-1} \left(\frac{1}{2}\right)^{\ell-1-j} \sqrt{2^{n-j}(j+1)} \\
           & = \frac{1}{2} K \sum_{j=1}^{\ell-1} \sqrt{ \frac{2^{n-j}(j+1)}{2^{2(\ell-1-j)}} }           \\
           & = \frac{1}{2^\ell}K \sum_{j=1}^{\ell-1} \sqrt{ 2^{n+j}(j+1) }                               \\
           & = \frac{1}{2^\ell}K \sqrt{N} \sum_{j=1}^{\ell-1} \sqrt{ 2^{j}(j+1) } .
  \end{align*}
  Finally, we simplify this closed formula to obtain the desired bound.
  \begin{align*}
    M_\ell & \leq K \frac{1}{2^{\ell}} \sqrt{N(\ell + 1)} \sum_{j = 1}^{\ell-1} 2^{j/2} \\
           & \le K 2^{-\ell} \sqrt{2N\ell} \cdot \frac{{2}^{\ell/2} - 1}{\sqrt{2} - 1}  \\
           & \le \frac{4 K}{\sqrt{2} - 1} \cdot 2^{-\ell/2} \sqrt{N \ell}.
  \end{align*}

\end{proof}

\subsection{Applying \Cref{thm:small-fourier-implies-orn}}

\begin{thm} [\textbf{\Cref{thm:main-thm}.3}] \label{thm:det-sched-works}
  Let $\eps>0$ be a constant, and assume $N = 2^n$ is a power of 2.
  Let $\bm{\sigma} = s_0,\ldots,s_{N-1}$ be the deterministic shift connection schedule created by repeatedly applying \Cref{cor:derand-complex-matrix-disc} $n$ times on $\{0,\ldots,N-1\}$ and the Fourier transform matrix $\calW$.
  Then $\bm{\sigma}$ can be efficiently computed.
  Additionally for any hop count $h$, there exists an oblivious routing protocol on $\bm{\sigma}$ that guarantees throughput $\frac{1}{2h}(1-\eps)$ and attains maximum latency no more than $\bigo_\eps\left(L^\star(h,N) \cdot\log N\right)$.
\end{thm}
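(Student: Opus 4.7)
The plan is to verify the hypotheses of \Cref{thm:small-fourier-implies-orn} for the deterministic schedule $\bm{\sigma}$ produced by the recursive partitioning process, with an appropriately chosen phase length $\Lambda$ depending on $h$ and $\eps$. First I would set $\Lambda = 2^{n-\ell}$ where $\ell$ is the recursion depth tuned below. By construction, the shifts $s_0,\ldots,s_{N-1}$ are laid out so that any window of $\Lambda = 2^{n-\ell}$ consecutive shifts starting at a multiple of $\Lambda$ coincides exactly with one of the partition blocks $A \in \mathcal{A}_\ell$ built in the recursive subdivision. Hence for any such aligned starting time, the Fourier coefficient vector of the spraying polynomial satisfies $\hat{p}_{t,j}[m] = \tfrac{1}{\Lambda}(\mathcal{W}_A \mathbf{1})_m$, and Lemma \ref{lem:det-sched-fourier-bound} gives
\[
\|\hat{p}_{t,j}^\star\|_\infty \;\leq\; \frac{C}{\Lambda}\cdot 2^{-\ell/2}\sqrt{N\ell} \;=\; C\sqrt{\ell/\Lambda}.
\]

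Next I would combine the $h$ factors. Because $\hat{p}_t^\star[m] = \prod_{j=1}^h \hat{p}_{t,j}[m]$ at every $m \neq 0$, we have $|\hat{p}_t^\star[m]| \leq (C\sqrt{\ell/\Lambda})^h$ coordinate-wise, and hence $\|\hat{p}_t^\star\|_2 \leq \sqrt{N}(C\sqrt{\ell/\Lambda})^h$. Solving $\sqrt{N}(C\sqrt{\ell/\Lambda})^h \leq \eps/2$ requires $\Lambda \geq c_\eps\, \ell\, N^{1/h}$ for a constant $c_\eps$ that depends only on $\eps$. Since $\ell = \log_2(N/\Lambda) = O(\log N)$, choosing $\Lambda$ to be the smallest power of two that is $\Theta_\eps(N^{1/h}\log N)$ fulfills the inequality. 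The identical derivation applies to the backward spraying polynomial $q_t$, whose Fourier coefficients satisfy the same $\ell^\infty$ bound via Lemma \ref{lem:det-sched-fourier-bound}.

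With both Fourier 2-norm bounds in hand, \Cref{thm:small-fourier-implies-orn} produces an oblivious routing protocol on $\bm{\sigma}$ with guaranteed throughput $\tfrac{1}{2h}(1-\eps)$ and maximum latency $2(h+1)\Lambda = O_\eps\bigl(h \cdot N^{1/h}\log N\bigr) = O_\eps(L^\star(h,N)\log N)$. Efficiency of the construction is immediate: the recursive partitioning invokes \Cref{cor:derand-complex-matrix-disc} exactly $n = \log_2 N$ times, and each invocation runs in time polynomial in $N$ via \Cref{thm:derand-disc}.

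The main obstacle is handling arbitrary hop counts $h$ that are not powers of two, since then $\Lambda h$ does not divide $N$ and the starting times $\mathcal{A} = \{0,\ldots,T-1\}$ in \Cref{thm:small-fourier-implies-orn} include windows that are not dyadically aligned with the partition tree. I would address this by one of two equivalent means: either round $h$ up to the nearest power of two (at most doubling the hop count, absorbed in the constant inside $O_\eps(\cdot)$ for latency), or decompose any unaligned window of $\Lambda$ consecutive shifts into at most $O(\log N)$ dyadic partition blocks and apply the triangle inequality to $\mathcal{W}^\star_A \mathbf{1}$, which worsens the bound on $\|\hat{p}_{t,j}^\star\|_\infty$ by only a logarithmic factor that can be absorbed into the choice of $c_\eps$. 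Either resolution preserves the stated asymptotic latency bound.
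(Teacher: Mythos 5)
Your core argument reproduces the paper's: partition blocks $A \in \mathcal{A}_\ell$ align with windows of length $\Lambda = 2^{n-\ell}$ starting at multiples of $\Lambda$; Lemma \ref{lem:det-sched-fourier-bound} gives $\norm{\hat{p}_{t,j}^\star}_\infty \le C\sqrt{\ell/\Lambda}$; the coordinatewise product yields $\norm{\hat{p}_t^\star}_2 \le \sqrt{N}(C\sqrt{\ell/\Lambda})^h$; and solving forces $\Lambda = \Theta_\eps(N^{1/h}\log N)$, giving latency $O_\eps(L^\star(h,N)\log N)$. The paper routes the same $\ell^\infty$ bound through H\"older's inequality on the $2h$-norm, but this is arithmetically identical to your direct per-coordinate estimate, so there is no substantive difference there.

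The flaw is in your handling of hop counts $h$ that are not powers of two. Fix~1 (rounding $h$ up to $h' \le 2h$) changes the throughput guarantee, not just the latency: the resulting protocol guarantees throughput $\tfrac{1}{2h'}(1-\eps)$, which can be as small as $\tfrac{1}{4h}(1-\eps)$. The theorem claims throughput exactly $\tfrac{1}{2h}(1-\eps)$, with no $O(\cdot)$ slack, so this constant factor cannot be absorbed and Fix~1 does not establish the stated bound. Fix~2 is unnecessary: the $h$-hop spraying protocol delays each packet until the next multiple of $\Lambda_h = 2^g$ and then hops once per consecutive length-$\Lambda_h$ phase, so every spraying phase is one of the aligned dyadic blocks of $\mathcal{A}_{n-g}$ exactly, regardless of whether $h$ is a power of two. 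The only divisibility that matters is $\Lambda_h \mid N$, which holds by construction of $\Lambda_h$ as a power of two. The friction you detected is a mismatch between the literal wording of $\mathcal{A}$ in Theorem \ref{thm:small-fourier-implies-orn} (phrased in terms of $\Lambda h \mid T$) and the waiting/latency behavior its proof actually uses (waiting at most $\Lambda$), and the right response is to observe that only $\Lambda_h$-aligned starting times ever arise, as the paper's proof does, not to modify $h$.
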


\begin{proof}
  Note that for hop count $h=1$, we may use phase length $\Lambda_1=N$.
  In this case, $\hat{p}_{1,t}^\star = \calW^\star \mathbf{1}$, and thus has 2-norm equal to 0.

  Let $h>1$ be some hop count,
  and let $\Lambda_h = 2^g$ be the smallest power of 2 for which
  \[ \Lambda_h = 2^g \geq \frac{C^2 (4N)^{1/h}}{\eps^{2/h}}\cdot\log_2 N , \]
  where $C$ is the universal constant from \Cref{cor:derand-complex-matrix-disc}. 
  Note that $\Lambda_h$ divides $N=2^n$.
  Then applying \Cref{thm:small-fourier-implies-orn}, we only need to check that $\norm{\hat{p}_{h,t}^\star}_2\leq \frac{\eps}{2}$ and $\norm{\hat{q}_{h,t}^\star}_2 \leq \frac{\eps}{2}$ for all starting timesteps $t$ that are multiples of $\Lambda_h$.

  Recall that
  \[ \hat{p}_{h,t}^\star = \bigodot_{j=1}^{h} \hat{p}_{h,t,j}^\star \]
  where
  \begin{align*}
    \hat{p}_{h,t,j}^\star = \frac{1}{\Lambda_h}
    \begin{pmatrix}
      0                                                                              \\
      \vspace{2mm} 
      \sum_{k=t+(j-1)\Lambda_h}^{t+j\Lambda_h-1} \mbox{ } e^{\frac{2\pi i}{N} s_{k}} \\
      \sum_{k=t+(j-1)\Lambda_h}^{t+j\Lambda_h-1} \mbox{ } e^{\frac{4\pi i}{N} s_{k}} \\
      \vdots                                                                         \\
      \sum_{k=t+(j-1)\Lambda_h}^{t+j\Lambda_h-1} \mbox{ } e^{\frac{(N-1)\pi i}{N} s_{k}}
    \end{pmatrix} ,
  \end{align*}
  and that $\hat{q}_{h,t}^\star$ is similarly defined.
  We will bound their 2-norms by first bounding $\norm{\hat{p}_{h,t,j}^\star}_\infty$.

  Notice that because $t$ is a multiple of $\Lambda_h = 2^g$, then $\hat{p}_{h,t,j}^\star = \frac{1}{\Lambda_h} \calW_A^\star$ for some set $A \in \mathcal{A}_{n-g}$.
  Then we may use \Cref{lem:det-sched-fourier-bound} to bound the $\infty$-norm.
  \begin{align*}
    \norm{\hat{p}_{h,t,j}^\star}_\infty & \leq \frac{1}{\Lambda_h} \cdot C \cdot 2^{-(n-g)/2} \sqrt{N (n-g)} \\
                                        & =  C \cdot
    \left( \frac{1}{\Lambda_h} \cdot 2^{g/2} \right) \cdot
    \left( 2^{-n/2} \cdot \sqrt{N} \right) \cdot \sqrt{n-g}
    \\
                                        & = C \cdot \frac{1}{\sqrt{\Lambda_h}} \cdot \sqrt{n-g}              \\
                                        & \leq \sqrt{\frac{C^2 \log_2 N}{\Lambda_h}}                         \\
                                        & \leq \frac{\left(\frac{\eps}{2}\right)^{1/h}}{N^{1/2h}}
  \end{align*}
  where the last line follows from our assumption that
  $\Lambda_h \geq  \frac{C^2 (4N)^{1/h}}{\eps^{2/h}}\cdot\log_2 N $.
  This implies a bound on the $2h$-norm.
  \begin{align*}
    \norm{\hat{p}_{h,t,j}^\star}_{2h} & \leq \left( \left( \frac{\left(\frac{\eps}{2}\right)^{1/h}}{N^{1/2h}} \right)^{2h} \cdot N \right)^{1/2h} \\
                                      & = \left(\frac{\eps}{2}\right)^{1/h} .
  \end{align*}
  We can then apply H\"{o}lder's Inequality.
  \[
    \norm{\hat{p}_{h,t}^\star}_{2} = \norm{\bigodot_{j=1}^{h} \hat{p}_{h,t,j}^\star}_{2} \leq \prod_{j=1}^{h} \norm{\hat{p}_{h,t,j}^\star}_{2h} \leq \left(\left(\frac{\eps}{2}\right)^{1/h}\right)^h = \frac{\eps}{2} .
  \]
  A similar bound can be derived for $\norm{\hat{q}_{h,t}^\star}_2$.
  Therefore, applying \Cref{thm:small-fourier-implies-orn} gives our desired guaranteed throughput bound $\frac{1}{2h}(1-\eps)$.

  Note that the oblivious routing protocol VLB with Leakage attains a maximum latency of $(2h+1)\Lambda_h$.
  Because $\Lambda_h \leq \frac{2 C^2 (4N)^{1/h}}{\eps^{2/h}}\cdot\log_2 N$, then $(2h+1)\Lambda_h \leq \bigo(hN^{1/h}\cdot\log N) = \bigo\left(L^\star(h,N) \cdot \log N\right)$.

\end{proof}

\section{Conclusion}\label{sec:conclusion}

In this paper, we proved the existence of universal connection schedules for oblivious reconfigurable networks (ORNs).
We described three different constructions, two random and one deterministic which each achieved slightly different near-optimal tradeoffs between guaranteed throughput and maximum latency.
Below, we discuss some extended results and possible future directions.

\subsection{Multiple Traffic Classes, Simultaneously}\label{sec:multi-traffic}

One motivation of a universal connection schedule is to satisfy heterogeneous workloads containing a mix of low- and high-latency traffic requests.
A natural question, then, is how much throughput for each traffic class can be fulfilled simultaneously on the ORN designs we construct in this paper?
We limit discussion in this section to the uniformly randomized design of \Cref{sec:rand-shift-sched}, though the same ideas apply to our other designs with similar results.

Consider a workload made up of connection requests each with a specified source, destination, starting timestep, and maximum latency. 
The workload can be partitioned into \emph{traffic classes} indexed by positive integers, with traffic class $h$ representing the subset of connection requests whose maximum latency is between $2 (h+1) \Lambda_{h+1}$ and $2 h \Lambda_h$, where $\Lambda_h = \bigotilde_\eps (N^{1/h})$ is as defined in \Cref{sec:rand-shift-sched}. 
For workloads whose connection requests all belong to a single traffic class, $h$, the randomized ORN design of \Cref{sec:rand-shift-sched} guarantees throughput $\frac{1}{2h}(1-\eps)$.
When routing multiple traffic classes simultaneously, let us start by assuming that each traffic class $h$ has its own desired throughput rate $r_h$, fixed across all time.
\begin{prop} \label{prop:mult-traffic-fixed-time}
	Consider the randomized ORN design of \Cref{sec:rand-shift-sched}, which includes a single connection schedule and oblivious routing protocols $R_h$ for each traffic class $h$.
	Let $r_h \in [0,\frac{1}{2h}(1-\eps)]$ be the desired throughput rate of traffic class $h$, and suppose that
	\[ \sum_h \frac{2h}{1-\eps}r_h \leq 1 . \]
	Then each traffic class $h$ may continuously send at rate $r_h$ simultaneously without overloading the network.
	That is, for every set of demand functions $D_h$ satisfying
	\[ \forall a \in [N], \, t \in \mathbb{Z} \quad
		\sum_{b \in [N]} D(a,b,t) \leq r_h \quad \mbox{and} \quad
		\sum_{b \in [N]} D(b,a,t) \leq r_h ,
	\]
	the sum of induced flows $\sum_h R_h^{D_h}$ is feasible.
\end{prop}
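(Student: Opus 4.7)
The plan is to exploit the linearity of the induced-flow construction together with the throughput guarantee of each individual protocol $R_h$, reducing the proof to a single scaling inequality.

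First, I would unpack the throughput guarantee. By \Cref{def:guaranteed-thr}, since $R_h$ guarantees throughput $\tfrac{1}{2h}(1-\eps)$, the induced flow $R_h^{D'}$ is feasible whenever $D'$ has row and column sums bounded by $\tfrac{1}{2h}(1-\eps)$. Also, directly from \Cref{def:demand-functions}, the map $D \mapsto R_h^D$ is linear in the demand, i.e.\ $R_h^{\lambda D} = \lambda \cdot R_h^D$ on every path $P$. These two facts together let me convert the throughput guarantee into a per-edge capacity usage bound.

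Next, given $D_h$ with row and column sums bounded by $r_h$, I would set $\lambda_h = \tfrac{1-\eps}{2h r_h}$ (handling $r_h = 0$ trivially by contributing nothing). Then $\lambda_h D_h$ has row and column sums at most $\tfrac{1}{2h}(1-\eps)$, so $R_h^{\lambda_h D_h}$ is feasible, meaning every physical edge $e$ satisfies $R_h^{\lambda_h D_h}(\pths^e) \leq 1$. By linearity, $R_h^{D_h}(\pths^e) \leq \lambda_h^{-1} = \tfrac{2h r_h}{1-\eps}$.

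Finally, I would sum over traffic classes: for every physical edge $e$,
\[
\Big(\sum_h R_h^{D_h}\Big)(\pths^e) \;=\; \sum_h R_h^{D_h}(\pths^e) \;\leq\; \sum_h \frac{2h\, r_h}{1-\eps} \;\leq\; 1,
\]
where the last inequality is the hypothesis of the proposition. This is exactly feasibility of $\sum_h R_h^{D_h}$ per \Cref{def:flow}, concluding the proof.

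There is no real obstacle here; the argument is just a scaling/superposition argument, and the only thing to verify carefully is that the throughput guarantee from \Cref{def:guaranteed-thr} really does translate into a per-edge capacity bound that scales linearly in $r_h$. Consequently the proof should be quite short, with the bulk of the work being to spell out the linearity of $D \mapsto R_h^D$ and the choice of scaling factor $\lambda_h$.
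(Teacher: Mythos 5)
Your proof is correct. The paper omits the proof of this proposition entirely, remarking only that ``the proofs of Propositions 2 and 3 are simple counting arguments''; your linearity-of-$D \mapsto R_h^D$ plus per-edge scaling argument is precisely the counting argument the authors have in mind, and it correctly handles the edge cases ($r_h = 0$, and restricting feasibility to physical edges only).
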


Requiring that each traffic class $h$ maintain the same desired throughput rate across all time is quite restrictive.
We may relax that assumption by introducing a desired throughput rate $r_{h,t}$ for traffic class $h$ at time $t$.

\begin{prop} \label{prop:mult-traffic-varying-time}
	Let $r_{h,t}$ be the desired throughput rate of traffic class $h$ at timestep $t$, and suppose that
	\[ \forall t^* \quad \sum_h \sum_{t\in [t^*-2h\Lambda_h+1,t^*]} \frac{r_{h,t}}{(1-\eps)\Lambda_h} \leq 1 . \]
	Then each traffic class may send at rate $r_{h,t}$ simultaneously without overloading the network.
	That is, for every set of demand functions $D_h$ satisfying
	\[ \forall a \in [N], \, t \in \mathbb{Z} \quad
		\sum_{b \in [N]} D(a,b,t) \leq r_{h,t} \quad \mbox{and} \quad
		\sum_{b \in [N]} D(b,a,t) \leq r_{h,t} ,
	\]
	the sum of induced flows $\sum_h R_h^{D_h}$ is feasible.
\end{prop}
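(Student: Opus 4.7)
The plan is to show that for every physical edge $(e, t^*)$ in the virtual topology, the combined flow $\sum_h R_h^{D_h}$ routes at most one unit through $(e, t^*)$, which by \Cref{def:flow} establishes feasibility. The argument hinges on two structural properties of the VLB with Leakage protocol $R_h$: a time-locality property that restricts which demand slices can contribute to a fixed edge, and a uniform-spreading property that caps each slice's contribution by $\frac{r_{h,t}}{(1-\eps)\Lambda_h}$.

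First I would localize in time. By \Cref{def:h-hop-spraying} and \Cref{def:vlb-w-leakage}, every routing path of $R_h$ originating at source time $t$ uses exactly $2h$ physical hops, one per phase of length $\Lambda_h$, and thus occupies physical edges in a time interval of length $2h\Lambda_h$ starting no earlier than $t$. Hence the flow contribution of the slice $D_h(\cdot, \cdot, t)$ to edge $(e, t^*)$ is zero unless $t \in [t^* - 2h\Lambda_h + 1, t^*]$, which is exactly the window appearing in the hypothesis.

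Second and most substantively, I would prove the per-slice bound. Fix $t$ in the window and let $j$ denote the phase of the VLB path that contains timestep $t^*$. Assume $(e,t^*)$ lies in the forward semi-path (phases $1,\ldots,h$); the backward case is analogous using $\mathcal{E}$. For a permutation demand $D_h(a, \tau_t(a), t) = r_{h,t}$, the matching constraint in \Cref{def:vlb-w-leakage} ensures that the forward flow assigned to any $\mathcal{D}$-path $P_1$ starting at $(a,t)$ is at most $r_{h,t} \cdot \mathcal{D}(P_1)$ before normalization, hence at most $\frac{r_{h,t}}{1-\eps} \mathcal{D}(P_1)$ after dividing by the VLB with Leakage factor $\eta \geq 1-\eps$ established in the proof of \Cref{lem:vlb-w-leakage}. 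Aggregating over $\mathcal{D}$-paths through $(e,t^*)$ starting at $a$ gives a per-source contribution of $\frac{r_{h,t}}{(1-\eps)\Lambda_h} P_j(a, v)$, where $v$ is the tail node of $e$ and $P_j(a,v)$ denotes the probability that the first $j-1$ random spraying hops from $a$ arrive at $v$. The shift-invariance of the Cayley schedule gives $\sum_a P_j(a, v) = 1$, so summing over the permutation demand bounds the total slice contribution by $\frac{r_{h,t}}{(1-\eps)\Lambda_h}$.

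Summing the per-slice bound over $h$ and over the corresponding active windows yields a total flow on $(e,t^*)$ of at most $\sum_h \sum_{t \in [t^*-2h\Lambda_h+1, t^*]} \frac{r_{h,t}}{(1-\eps)\Lambda_h}$, which is at most $1$ by hypothesis; since $(e, t^*)$ was arbitrary, feasibility of $\sum_h R_h^{D_h}$ follows. The hard part will be the per-slice bound itself: one must justify that the matching $M$ in \Cref{def:vlb-w-leakage} produces forward flow on each $P_1$ that is bounded by $\mathcal{D}(P_1)$ (which follows from the fractional-matching constraint $M(P_1,P_2) \leq \min\{\mathcal{D}(P_1), \mathcal{E}(P_2)\}$), track the normalization $\eta \geq 1-\eps$ uniformly across all source--destination pairs rather than only in aggregate, and verify that the $(j-1)$-fold composition of the single-phase spraying hops is shift-equivariant, so that $P_j(a,v)$ is a function of $v-a$ alone and $\sum_a P_j(a,v) = 1$.
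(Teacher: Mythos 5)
The paper does not actually prove \Cref{prop:mult-traffic-varying-time}; it dismisses both Propositions in \Cref{sec:multi-traffic} with the one-line remark that ``the proofs \ldots are simple counting arguments and are thus omitted.'' So there is no canonical proof to compare yours against. Judged on its own, your proof supplies exactly the counting argument the authors had in mind: you bound the flow crossing each physical edge $(e,t^*)$ by localizing the contributing demand slices in time, bounding each slice's per-edge contribution by $\frac{r_{h,t}}{(1-\eps)\Lambda_h}$, and summing. The two load-bearing structural facts you invoke --- that a slice $D_h(\cdot,\cdot,t)$ touches edge $(e,t^*)$ only if $t^*$ falls within the $2h$ consecutive spraying phases following $t$, and that within a phase the spraying distribution puts exactly $\frac{1}{\Lambda_h}$ mass on each timestep, combined with the shift-invariance identity $\sum_a P_j(a,v)=1$ --- are both correct, and they match the constants in the Proposition's hypothesis.

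Two small points worth tightening. First, your justification that the forward flow assigned to paths extending $P_1$ is at most $\mathcal{D}(P_1)$ cites the per-entry bound $M(P_1,P_2)\le\min\{\mathcal{D}(P_1),\mathcal{E}(P_2)\}$, but that inequality alone does not imply the needed marginal bound $\sum_{P_2}M(P_1,P_2)\le\mathcal{D}(P_1)$; the marginal bound actually comes from the transportation-plan structure spelled out in the footnote to \Cref{def:vlb-w-leakage} (the $I(P_1)\cap J(P_2)$ construction), and you should cite that rather than the entrywise inequality. Second, be aware that \Cref{lem:vlb-w-leakage} and \Cref{thm:small-fourier-implies-orn} state a maximum latency of $2(h+1)\Lambda$, not $2h\Lambda$; the discrepancy is an initial (and terminal) virtual-edge delay in \Cref{def:h-hop-spraying}. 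Your window of $2h\Lambda_h$ is what the Proposition states and is correct for the \Cref{sec:rand-shift-sched} construction, where $\mathcal{A}=\{0,\ldots,T-1\}$ and paths incur no initial delay, so all physical hops occur within $2h\Lambda_h$ consecutive timesteps --- but a careful writeup should say so explicitly, since the latency bounds quoted elsewhere in the paper would naively suggest a slightly wider window.
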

The proofs of Propositions 2 and 3 are simple counting arguments and are thus omitted.

To understand why this more complex relationship exists when traffic classes request varying throughput rates over time, consider the simple case when there are only two traffic classes, class 1 and class 2.
Up until timestep $t^*$, class 1 requests the largest through it can, $\frac{1}{2}(1-\eps)$ while class 2 requests $0$. Starting at timestep $t^*$, class 1 requests 0, while class 2 would like to request the largest throughput it can, $\frac{1}{4}(1-\eps)$.
However, class 1 still has traffic in the network that has not yet been fully delivered to its final destination.
This means that the full network bandwidth is not immediately available for class 2 to use.
It must wait for the class 1 traffic to clear through the network before it can send at the full throughput rate $\frac{1}{4}(1-\eps)$.
Instead, at timestep $t^*$, class 2 may start sending at throughput rate $\frac{(1-\eps)\Lambda_2}{2\Lambda_1}$ for this and all future timesteps.
At timestep $t^* + 4\Lambda_2$, the first class 2 traffic begins to reach its destination, and thus class 2 may increase its throughput rate to $\frac{(1-\eps)\Lambda_2}{\Lambda_1}$ for this and all future timesteps.
This process of increasing class 2's throughput rate every $4\Lambda_2$ timesteps continues until timestep $t^* + 2\Lambda_1$, when all class 1 traffic has been delivered to its destination and class 2 may finally send at its maximum throughput rate $\frac{1}{4}(1-\eps)$.

\subsection{Extending the VLB with Leakage Framework to Other Models}\label{sec:leak-exten}

\Cref{sec:vlb-w-leakage} formalizes a general framework for applying VLB-like routing when flow is not distributed uniformly among all
intermediate nodes. We believe this framework is useful in many network settings, not solely for shift connection schedules on ORNs.

The routing distributions from VLB with Leakage take exactly $h$ random hops, with each hop taken uniformly at random from consecutive non-overlapping phases of $\Lambda_h$ timesteps.
At the end of these $h$ hops, we hope to be at an approximately uniformly random node, no matter which source node (or, destination node when considering backwards routing distributions) we started at.
Throughout this paper, we limited analysis of VLB with Leakage to shift connection schedules only.
Because we dealt with shift connection schedules, we only needed to check that the routing distributions were close to uniformly random for a single source node (or, single destination node).
We could then apply the spatial symmetry of shift connection schedules to show this implied close to uniformly random routing distributions for every source (or destination) node.
Additionally, one can check this criterion using generating polynomials that are multiplied together.
This was the main motivation for limiting to shift connection schedules.
The generating polynomial approach does not work when we relax to non-shift connection schedules.

A more general approach that would apply to non-shift connection schedules would be to use Markov transition matrices.
For each phase $k$, one can compute the Markov transition matrix $M_k$ of the phase.
Let $M_k[i,j]$ denote the probability that a message starting at node $i$ at the beginning of phase $k$, and taking a uniformly random hop during the phase, ends at node $j$. Then $M_k$ is an average of $\Lambda_h$ permutation matrices, each representing the permutation of connections for each timestep in the phase.
The matrix product $\mathcal{M} = M_h\hdots M_2 M_1$ is therefore the probability matrix representing the routing distributions from VLB with Leakage, using $h$ hops with phase length $\Lambda_h$.
In fact, each row $i$ of $\mathcal{M}$ is exactly the forward routing distribution when starting at source node $i$. (Or, one could compute a similar matrix product for the backwards routing distributions.)
To show that each routing distribution is $\frac{\eps}{2}$-close to uniform, one must bound the maximum row 1-norm (also known as the $\ell_\infty$ operator norm) of $\mathcal{M} - U$, where $U$ is the uniform $N\times N$ matrix with all entries equal to $\frac{1}{N}$.
Thus, to test whether any fixed non-shift connection schedule is universal when using the routing distributions from VLB with Leakage, one only needs to compute these matrix products for each $h \in \{1,\hdots,\log_2(N)\}$ and verify that when $U$ is subtracted, they have low $\ell_\infty$ operator norm.
This can be done in polynomial time in the size of the network.

Proving that a random non-shift connection schedule is universal is more difficult. 
One would need a way to bound the probability that the product of $h$ matrices, each an average of $\Lambda_h$ random permutation matrices sampled from a specified distribution, is far from $U$ in $\ell_\infty$ operator norm. 
For shift connection schedules, all of the matrices in question are circulant matrices, so one can conjugate by the Fourier transform matrix to simultaneously diagonalize them. 
This change of basis (which amounts to a reinterpretation of the Fourier-analytic approach we adopted in \Cref{sec:small-fourier-implies-orn}) was a key step enabling our analysis of the constructions in this paper. 
For connection schedules whose constituent permutations belong to some other subgroup $\Gamma$ of the permutation group $S_N$, the action of $\Gamma$ on $\reals^N$ splits into a direct sum of irreducible representations, and the analogous change of basis conjugates each permutation matrix represented by $\Gamma$ to a block-diagonal matrix with one block for each irreducible representation in this decomposition. 
Can this representation-theoretic perspective, which generalizes our Fourier-analytic approach, lead to useful bounds on the $\ell_\infty$ operator norms of matrices $\mathcal{M} - U$ for schedules based on non-abelian subgroups $\Gamma \subseteq S_N$? We leave this question to future work.

\noindent\textbf{Static networks} 

One could even apply the VLB with Leakage framework in the context of static networks.
In static networks the hop parameter $h$ translates cleanly, however the analogous parameter to phase length $\Lambda_h$ is the degree of the static network $d$.
Instead of taking $h$ hops uniformly at random, each from consecutive non-overlapping timesteps, the analogous procedure in a static network would be an $h$-hop random walk through the network.
At the end of this $h$-hop random walk, in order to use the VLB with Leakage framework, we hope to be at an approximately uniformly random node.
Similarly to non-shift connection schedules on ORNs, we can use Markov transition matrices to bound how close we are to uniformly random after an $h$-hop random walk starting from any node.

In fact, the resulting forward and backwards routing distributions from applying the VLB with Leakage framework to static networks are exactly the Markov chain that represents taking a random walk on a static network.
If we take an $h$-hop random walk in the network, how well mixed (up to what parameter $\eps$) does the resulting distribution end up being?
This is the same as the mixing time of the network, although quantifiers have been flipped compared to their usual ordering.
From this, we can state the following informal Corollary.

\begin{cor}\label{thm:mixing-implies-obliv-rout}
	Given an unweighted graph $G$ of degree $d$ with $\frac{\eps}{2}$-mixing time $h$, there exists an oblivious routing protocol which uses paths of length $2h$ and guarantees maximum congestion no more than $2h\left(\frac{1}{1-\eps}\right)$ so long as traffic demands are ``reasonable,'' that is nodes do not request to send or receive more demand than their degree, $d$.
\end{cor}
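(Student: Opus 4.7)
The plan is to lift the VLB-with-Leakage analysis of \Cref{lem:vlb-w-leakage} from the ORN setting to undirected static graphs, with $h$-step random walks playing the role of $h$-hop packet spraying. Concretely, for each source $a$ I would let $\mathcal{D}_a$ be the distribution on length-$h$ walks out of $a$ in which each step traverses a uniformly random incident edge, and symmetrically let $\mathcal{E}_b$ be the time-reversed distribution on length-$h$ walks ending at $b$. Because $G$ is $d$-regular, the uniform distribution $u$ on $V(G)$ is stationary for the walk, so the $\tfrac{\eps}{2}$-mixing-time hypothesis gives $\norm{\mathcal{D}_a - u}_1 \le \tfrac{\eps}{2}$ and $\norm{\mathcal{E}_b - u}_1 \le \tfrac{\eps}{2}$ for all $a,b$.

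Next I would apply the VLB-with-Leakage construction (\Cref{def:vlb-w-leakage}) to $(\mathcal{D}, \mathcal{E})$, obtaining an oblivious protocol $R$ whose paths concatenate a forward $h$-walk with a reverse $h$-walk through a common intermediate; every path has length exactly $2h$. The leakage calculation inside the proof of \Cref{lem:vlb-w-leakage} uses nothing specific to the ORN model, so the same derivation yields $\eta := \min_{a,b} \sum_c \min\{\mathcal{D}_a(c), \mathcal{E}_b(c)\} \ge 1 - \eps$, and hence $R$ is bounded above by $\tfrac{1}{1-\eps}$ times the unnormalized concatenation flow.

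It then remains to translate this into an edge-congestion bound. For a fixed edge $e = (u, v)$ and fixed step $k \le h$ of the forward half, the flow across $e$ at step $k$ contributed by source $a$ sending $D_a \le d$ total units equals $D_a \cdot \tfrac{1}{d} \cdot \Pr[\text{walk from } a \text{ is at } u \text{ after } k{-}1 \text{ steps}]$, which is at most the latter probability. Summing over $a$ and using that the $(k{-}1)$-step transition matrix is symmetric (reversibility plus $d$-regularity), so that its column sums equal $1$, the pre-normalized forward-half load on $e$ at step $k$ is at most $1$; the same holds for each of the $h$ backward steps. Hence the unnormalized concatenation flow has per-edge congestion at most $2h$, and applying the $\tfrac{1}{1-\eps}$ normalization delivers the claimed bound.

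I expect the main obstacle to be pinning down the informal ``reasonable demand'' hypothesis as the pair of inequalities $\sum_b D(a,b) \le d$ and $\sum_b D(b,a) \le d$ at every node $a$, since this is precisely what makes the reversibility step deliver load $1$ per step rather than something larger; and then verifying that by linearity one can reduce arbitrary reasonable demands to scaled permutation demands exactly as in \Cref{def:guaranteed-thr}. With this calibration in place, the congestion computation above is a routine accounting exercise.
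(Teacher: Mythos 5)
The paper states this corollary as informal and offers no proof, only the narrative sketch that precedes it (use $h$-step random-walk forward/backward routing distributions, invoke the $\tfrac{\eps}{2}$-mixing hypothesis to get $\ell_1$-closeness to the uniform stationary distribution on a $d$-regular graph, then feed these into the VLB with Leakage construction). Your proposal fills in the details along exactly those lines, and the accounting is correct: $d$-regularity makes $P^{k-1}$ symmetric doubly stochastic so the per-step, per-direction edge load sums to at most $1$ under the ``reasonable'' demand bound $\sum_b D(a,b), \sum_b D(b,a)\le d$, giving unnormalized congestion $2h$, and the per-pair matching normalization $1/\eta_{a,b}$ is uniformly at most $1/(1-\eps)$ by the leakage lemma.
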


Mixing times and their connections to oblivious routing protocols have been studied before.
In fact, previous work~\cite{obliv-rout-rand-walks-17} noted a similar conclusion, that one can build oblivious routing protocols with provable guarantees on congestion by concatenating paths carefully sampled from a random walk process.
However, they concatenate paths differently, and they express their results as the competitive ratio and in terms of the spectral gap of the graph.

\endgroup

\bibliographystyle{alpha}
\bibliography{biblio}{}

\end{document}